\documentclass[11pt]{amsart}

\usepackage{macros-master,amsaddr,physics}

\addbibresource{refs.bib}

\renewcommand{\op}{\operatorname}


\newcommand{\assign}{:=}
\newcommand{\nin}{\not\in}
\newcommand{\nobracket}{}
\newcommand{\tmdummy}{$\mbox{}$}
\newcommand{\tmmathbf}[1]{\ensuremath{\boldsymbol{#1}}}
\newcommand{\tmop}[1]{\ensuremath{\operatorname{#1}}}
\newcommand{\tmstrong}[1]{\textbf{#1}}

\newenvironment{enumeratenumeric}{\begin{enumerate}[1.] }{\end{enumerate}}
\newenvironment{enumerateroman}{\begin{enumerate}[i.] }{\end{enumerate}}

\setcounter{tocdepth}{1}


\begin{document}

\title{On the renormalization and quantization of topological--holomorphic field theories}
\author{Minghao Wang and Brian R. Williams}
\thanks{Boston University, Department of Mathematics and Statistics}
\email{minghaow@bu.edu}
\email{bwill22@bu.edu}
\begin{abstract}
    Topological field theories and holomorphic field theories naturally appear in both mathematics and physics. However, there exist intriguing hybrid theories that are topological in some directions and holomorphic in others, such as twists of supersymmetric field theories or Costello's 4-dimensional Chern-Simons theory. In this paper, we rigorously prove the ultraviolet (UV) finiteness for such hybrid theories on the model manifold $\R^{d'} \times \C^d$, and present two significant vanishing results regarding anomalies: in the case $d'=1$, the odd-loop obstructions to quantization on $\R^{d'} \times \C^d$ vanish; in the case $d'>1$, all obstructions disappear, allowing us to define a factorization algebra structure for quantum observables.
\end{abstract}
\maketitle

\section{Introduction}

This work continues the study of quantum field theories which exist on a product of manifolds of the form $M \times X$ where $M$ is a smooth manifold and $X$ is a complex manifold \cite{GRWthf,Gaiotto:2024gii}.
Such theories depend only on the smooth structure of the manifold $M$ and only on the complex structure of the manifold $X$.
In this sense, these quantum field theories generalize the well-studied \textit{topological} field theories and \textit{holomorphic} field theories which have garnered significant attention in recent years \cite{bershadsky1994kodaira, Kapustin2005ChiralDR, costello2015quantization, MR3126501,BWhol,wang2024feynman}.
Examples of these hybrid theories appear as twists of supersymmetric theories, see for example \cite{kapustin2006holomorphic,Oh:2019mcg,Garner:2023wrc,Aganagic:2017tvx}.
We recall a precise definition of a topological-holomorphic field theory in section \ref{sec:dfn}.

The main result of this paper is that the perturbative quantization, on flat space $\R^{d'} \times \C^d$, of classical field theories which are of this hybrid topological and holomorphic type are particularly well-behaved.
In particular, we will show that in many cases, quantizations always exist and can be produced diagrammatically using Feynman graph integrals.
We refer to theorem \ref{thm:renormalization} for a more detailed statement of this result.

\begin{thm}\label{main theorem1}
The perturbative renormalization of a topological-holomorphic theory on $\R^{d'} \times \C^d$ is \textit{UV finite} to all orders in perturbation theory.
\end{thm}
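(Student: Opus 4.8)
The plan is to follow the standard diagrammatic strategy for perturbative renormalization. Since the effective action is assembled from the weights $W_\gamma(P_\epsilon^L, I)$ of connected Feynman graphs $\gamma$, UV finiteness amounts to showing that for every such $\gamma$ the limit $\lim_{\epsilon \to 0} W_\gamma(P_\epsilon^L, I)$ exists. Tree graphs are manifestly finite, so the content lies in graphs with at least one loop. First I would write the regularized propagator in its heat-kernel (Schwinger) representation $P_\epsilon^L = \int_\epsilon^L Q^{GF} K_t \, dt$, and use the fact that on $\R^{d'} \times \C^d$ the full heat kernel factorizes, $K_t = K_t^{\mathrm{top}} \otimes K_t^{\mathrm{hol}}$, where $K_t^{\mathrm{top}}$ is the Gaussian on $\R^{d'}$ and $K_t^{\mathrm{hol}}$ is the scalar heat kernel on $\C^d$. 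This presents $W_\gamma(P_\epsilon^L, I)$ as an integral over the edge parameters $t_e \in [\epsilon, L]$ together with the positions of the internal vertices.

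Next I would perform the Gaussian integral over the vertex positions. Fixing one vertex by translation invariance, the remaining position integral factorizes across the topological and holomorphic directions and is an explicit Gaussian, whose evaluation yields a ratio of determinants in the $t_e$ times polynomial prefactors coming from the first-order operator $Q^{GF} = d^\ast + \bar\partial^\ast$ and from any derivatives in the interaction vertices. The Gaussian decay at scale $\sqrt{L}$ controls the infrared, so the only possible source of divergence as $\epsilon \to 0$ is the ultraviolet region in which some subset of the $t_e$ tends to $0$, corresponding to a subgraph $\gamma'$ collapsing to a point.

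The heart of the argument, and the step I expect to be the main obstacle, is the power-counting estimate on this collapsing region. Writing $n = d' + 2d$, a uniform rescaling of the relative coordinates and of the $t_e$ shows that a subgraph $\gamma'$ with $E'$ edges, $L'$ loops and $D$ derivatives distributed among its vertices has superficial degree of divergence $\omega(\gamma') = nL' + D - E'$. The collapsing integral converges absolutely whenever $\omega(\gamma') < 0$, so the real content is to show that the marginal and superficially divergent cases $\omega(\gamma') \geq 0$ produce a \emph{vanishing} rather than a divergent contribution. The mechanism is the numerator structure of the propagator: $Q^{GF} K_t$ carries a factor of the relative coordinate, which in the topological directions is the real difference $x_j - y_j$ and in the holomorphic directions is \emph{purely anti-holomorphic}, proportional to $\bar z_i - \bar w_i$. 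In the collapsing limit the leading singular term is therefore a monomial in these relative coordinates that is unbalanced -- it has the wrong parity in the $\R^{d'}$ directions and, crucially, supplies only anti-holomorphic factors $\bar z_i - \bar w_i$ in the $\C^d$ directions with no holomorphic factors to pair against -- so its integral over the angular directions of the configuration space vanishes by symmetry. Equivalently, the anti-holomorphic one-forms $d\bar z_i$ delivered by the propagators cannot assemble into the top form $d^d\bar z$ demanded by the vertex measure. Making this cancellation uniform and rigorous -- tracking the form degrees in both the $\R^{d'}$ and $\C^d$ directions and verifying it for every potentially divergent subgraph -- is where the genuine work lies.

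Finally, to upgrade this from individual subgraphs to an all-orders statement, I would organize the estimate on the compactified configuration space of points in $\R^{d'} \times \C^d$, in the style of Axelrod--Singer and Fulton--MacPherson. Concretely, I would show that the limiting propagator $P_0^L$ extends to a smooth form on the real blow-up along the diagonals, so that $\lim_{\epsilon \to 0} W_\gamma(P_\epsilon^L, I)$ is the integral of a smooth form over a compact manifold with corners; since nested and overlapping subdivergences are resolved simultaneously by the blow-up, this handles all graphs at once and yields UV finiteness to all orders in perturbation theory.
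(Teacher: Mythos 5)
Your first two steps (the Schwinger/heat-kernel representation and the Gaussian integration over the vertex positions) match the paper's set-up, but the proposal does not close the argument, and the step you propose to close it with is in fact false. The generalized Bochner--Martinelli propagator on $\R^{d'}\times\C^d$ is \emph{not} scale-invariant as a differential form when $d>0$: its coefficient scales like $r^{-(2d+d'-1)}$ while its form part has degree $d+d'-1$ and so only supplies a factor $\lambda^{d+d'-1}$ under dilation; hence $\tilde{P}_{0,L}$ is homogeneous of degree $-d$ in the relative coordinates, and the holomorphic derivatives $\partial_z^{n(e)}$ placed on edges by the interaction lower this further to $-d-|n(e)|$. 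Consequently the integrand does \emph{not} extend to a smooth form on the Fulton--MacPherson/Axelrod--Singer compactification of configuration space; that mechanism is special to purely topological theories ($d=0$, e.g.\ Chern--Simons), and compactness of the blown-up configuration space cannot by itself yield finiteness here. This matters all the more because the genuinely hard step --- the cancellation for subgraphs with your $\omega(\gamma')\geq 0$ --- is precisely the one you leave open (``where the genuine work lies''): the parity/anti-holomorphy symmetry kills the leading term of a single collapse, but controlling subleading terms and nested collapses uniformly is the entire content of the theorem, and the compactification you propose cannot do that bookkeeping for you.

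The paper's proof circumvents this by compactifying the \emph{Schwinger} parameters rather than the configuration space. Writing the integrand in the variables $u^e=(\bar z^{h(e)}-\bar z^{t(e)})/2t_e$, $v^e=(x^{h(e)}-x^{t(e)})/2\sqrt{t_e}$, integrating over positions first, and changing variables by the inverse weighted graph Laplacian together with the square map $t_e=\tilde{t}_e^{\,2}$, the integrated form on $(0,\sqrt{L}]^{|\Gamma_1|}$ is expressed through the Kirchhoff-type coefficients $(M_\Gamma(t)^{-1})^{ij}$ and $(d_\Gamma(t)^{-1})^{ej}$, which by spanning-tree formulas extend \emph{smoothly} to the compactified Schwinger space $\widetilde{[0,\sqrt{L}]^{|\Gamma_1|}}$; a uniform Gaussian lower bound of the form $M_\Gamma(\tilde{t})^{-1}M_\Gamma(\tilde{t}^2)M_\Gamma(\tilde{t})^{-1}\geq c_\Gamma^{-1}\,\mathrm{Id}$ then lets dominated convergence handle the non-compact position integral, and finiteness follows from compactness of this blow-up (with the holomorphic factor quoted from the first author's earlier work and combined with the topological factor by a K\"unneth argument). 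In other words, all the cancellations you hope to establish graph-by-graph with symmetry arguments are packaged, once and for all, into the smoothness of these graph-theoretic functions on the compactified Schwinger space --- this is the key lemma your proposal is missing, and it is the reason the paper blows up the $t_e$-space rather than the diagonals of spacetime.
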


This theorem extends a result of the second author along with Gwilliam and Rabinovich to \textit{all} orders in perturbation theory, rather than just to first-order \cite{GRWthf}. The main technique used here is the compactification of the space of Schwinger parameters, developed by the first author \cite{wang2024feynman}. 
Most of section \ref{sec:feynman} is dedicated to the proof of this result.

In the remainder of this introduction we detail a consequence of our main result which is a rigorous construction of \textit{factorization algebras} associated to a wide variety of such topological-holomorphic theories.

Factorization algebras (and their cousins, chiral algebras) originally appeared in the work of Beilinson and Drinfeld \cite{BD} in their approach to codify algebraic structures in conformal field theory.
More generally, as developed by Costello and Gwilliam, a factorization algebra encapsulates the observables in \textit{any} quantum field theory \cite{CG2}.
To every open set $U \subset M$, the factorization algebra of observables $\Obs$ assigns a complex vector space (this becomes a cochain complex in the Batalin--Vilkovisky formalism) that we denote $\Obs(U)$; this is the set of measurements that one can make in the patch of spacetime $U$.
If $U \subset V \subset M$ is a nested sequence of open sets there is a extension map $\Obs(U) \to \Obs(V)$; this means that observables on $U$ can be viewed as an observable on $V$.
The most important structure arises from combining, or multiplying, observables.
If $U,U'$ are disjoint open sets which sit inside a larger open set $V$, then combining observables takes the form of a map $\star \colon \Obs(U) \otimes \Obs(U') \to \Obs(V)$.
Iterating this, one obtains higher arity multiplication maps when there are more than two disjoint open sets involved, and there is an associativity axiom which relates such multiplications.
Finally, there is a `locality' axiom which relates the value of the factorization algebra on $M$ (or some other open set) in terms of a cover of $M$ by open sets.
We refer to \cite{CG1} for details.

In practice, we obtain such factorization algebras using perturbative renormalization, which is very close in spirit to deformation quantization.
Suppose we have a classical field theory on a manifold $M$ prescribed by some action functional.
To every open set $U \subset M$, solutions to the Euler--Lagrange equations of the action functional restricted to $U$, taken up to gauge equivalences, defines a formal derived stack $\cE \cL(U)$. 
The factorization algebra of classical observables $\Obs^{cl}$ assigns to the open set $U$ the algebra of functions on this derived stack $\Obs^{cl}(U) = \cO(\cE \cL(U))$.

To study quantization we work in the formalism of quantum field theory developed in the series of books by Costello \cite{CostelloBook} and Costello, Gwilliam \cite{CG2}.
We give a more detailed recollection in section \ref{sec:dfn}.
Perturbatively, a quantum field theory is defined over the ring $\C[[\hbar]]$.
The problem is to find a factorization algebra $\Obs^{q}$ defined over $\C[[\hbar]]$ which reduces, modulo $\hbar$, to the factorization algebra of classical observables $\Obs^{cl}$.
Unlike deformation quantization, a complicating feature in field theory is that such quantizations may not exist.
This is encapsulated in the form of \textit{anomalies}. 
For a well-known instance of this we point out the Green--Schwarz mechanism which is used to cancel a one-loop anomaly in ten-dimensional supersymmetric Yang--Mills theory which can be canceled by carefully coupling to ten-dimensional supergravity.

We prove anomalies vanish when a theory has at least two topological directions. We refer to theorem \ref{anomaly free theorem1} for a more detailed statement.

\begin{thm}\label{main theorem2}
If $d' > 1$ then any classical perturbative topological-holomorphic theory defined on $\R^{d'} \times \C^d$ is anomaly free and admits a quantization to all orders in $\hbar$.

In particular, when $d'>1$ any topological-holomorphic theory defines a factorization algebra of quantum observables $\Obs^q$ on $\R^{d'} \times \C^d$, defined over the ring $\C[[\hbar]]$, which satisfies
\beqn
\Obs^q \otimes_{\C[[\hbar]]} \C \cong \Obs^{cl}
\eeqn
with $\Obs^{cl}$ the factorization algebra of classical observables.

When $d'=1$, any topological-holomorphic theory on $\R \times \C^d$ admits a quantization to \textbf{first-order} in perturbation theory, that is over  $\C[[\hbar]]\slash \hbar^2$.
\end{thm}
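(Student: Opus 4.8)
The plan is to establish anomaly vanishing through an obstruction-theoretic argument in the Batalin--Vilkovisky formalism, and then invoke the general machinery of Costello--Gwilliam to upgrade the quantization to a factorization algebra. The quantization problem is governed by a sequence of obstruction classes living in the local cohomology of the theory: at each order $\hbar^n$ in perturbation theory, having chosen a quantization modulo $\hbar^{n+1}$, the failure of the quantum master equation at the next order defines a cocycle $\Theta_n$ in a complex of local functionals, and the quantization extends to order $\hbar^{n+1}$ precisely when $[\Theta_n]$ vanishes in cohomology. By the renormalization theorem quoted above (theorem \ref{thm:renormalization}), UV finiteness guarantees that the Feynman graph integrals defining each $\Theta_n$ are well-defined; the remaining task is to show each cohomology class is zero.

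The key step is a symmetry argument exploiting the topological directions. First I would identify the relevant complex of local functionals on $\R^{d'} \times \C^d$ whose cohomology houses the obstructions; by the results on topological-holomorphic theories this is computed by a mix of de Rham-type forms in the $\R^{d'}$ directions and Dolbeault-type forms in the $\C^d$ directions. The central observation is that the obstruction $\Theta_n$ must be translation- and rotation-invariant on the flat model, and moreover must transform appropriately under the scaling symmetries that rescale the topological and holomorphic coordinates independently. Tracking the cohomological degree against these weights constrains $\Theta_n$ to live in a very thin space; when $d' > 1$ there are simply too many topological one-form directions for any invariant local functional of the correct degree to survive, forcing $\Theta_n = 0$. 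This is the precise point where the hypothesis $d' > 1$ enters, in contrast to the $d' = 1$ case where only the odd-loop obstructions can be killed this way.

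The hard part will be making the symmetry/degree-counting argument rigorous rather than heuristic: one must carefully compute the local cohomology of the relevant complex, keep track of the interplay between the antibracket, the BV Laplacian, and the $\bar\partial$ and $\d$ operators, and verify that the putative invariance of $\Theta_n$ genuinely holds for the regularized Feynman integrals (as opposed to merely for their naive unregularized counterparts). The compactification of Schwinger space from \cite{wang2024feynman} should be what allows one to prove that the regularization can be chosen to respect the symmetries, so that no anomaly is secretly introduced by the renormalization scheme itself. I expect this compatibility between the renormalization procedure and the geometric symmetries to be the main technical obstacle.

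Once anomaly freeness is established, the construction of the factorization algebra is essentially formal. I would appeal to the main theorem of \cite{CG2}, which asserts that any quantization of a classical field theory in the Costello--Gwilliam sense determines a factorization algebra of quantum observables $\Obs^q$ defined over $\C[[\hbar]]$, flat as a $\C[[\hbar]]$-module, and reducing modulo $\hbar$ to the classical factorization algebra $\Obs^{cl}$. The isomorphism
\beqn
\Obs^q \otimes_{\C[[\hbar]]} \C \cong \Obs^{cl}
\eeqn
is then exactly the statement that quantization is a deformation of the classical observables, which is built into that general machinery. Thus the only genuinely new input required for this theorem is the vanishing of the obstruction classes, and the factorization-algebra conclusion follows by transporting our quantization through the established correspondence between quantizations and factorization algebras.
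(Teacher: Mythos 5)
Your overall skeleton (UV finiteness, then anomaly vanishing, then the Costello--Gwilliam machinery) matches the paper's, and your final paragraph on extracting the factorization algebra from \cite{CG2} is exactly how the paper concludes. The genuine gap is in your central step. You propose to compute the local (deformation) cohomology housing the obstruction classes $\Theta_n$ and to kill them by an invariance/degree-counting argument. The paper never computes this cohomology, and for good reason: there is no general vanishing theorem for the degree-one local cohomology of an arbitrary topological-holomorphic theory, and your claim that for $d'>1$ ``there are simply too many topological one-form directions for any invariant local functional of the correct degree to survive'' is unsubstantiated. The paper's own Proposition \ref{anomaly free theorem2} shows what symmetry arguments of this flavor actually buy: a reflection through a hyperplane in $\R^{d'}$ forces $O_{(\Gamma,n)} = (-1)^{h_1(\Gamma)} O_{(\Gamma,n)}$, which kills only the \emph{odd-loop} anomalies --- precisely why the $d'=1$ statement in the paper is weaker. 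What the paper proves for $d'>1$ is a stronger and different statement: the specific anomaly \emph{cocycle}, not merely its cohomology class, vanishes. Via Corollary \ref{weight QME} and Proposition \ref{differential transfer}, the failure of the quantum master equation for the naive Feynman-diagram family $I[L] = W(P_{0<L}, I)$ is identified with integrals over the boundary of the compactified Schwinger space; Proposition \ref{laman anomaly} shows the dangerous boundary contributions localize on Laman graphs and produce local holomorphic functionals $O_{(\Gamma,n)}$; and Proposition \ref{vanishing of anomaly121} kills these for $d' \geqslant 2$ using Lemma \ref{Kontsevich's lemma}, a Schwinger-space version of Kontsevich's vanishing lemma: one isolates two topological directions, identifies $\R^2 \cong \C$, and shows that every monomial of the integrand contributing to the integral is exact for the de Rham differential in those two directions, so the integral vanishes by Stokes' theorem. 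This analytic mechanism is the actual source of the hypothesis $d'>1$, and nothing in your outline supplies it or a substitute for it.

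A secondary point: because the anomaly cocycle vanishes identically, the paper needs no order-by-order correction of the interaction; the quantum master equation holds on the nose for the diagrammatic family, which is what makes the quantization explicit. Your obstruction-theoretic framing is compatible with the theorem as stated, but even granting that framing, the vanishing of the obstruction still has to be proved, and the compatibility issue you flag as the main technical obstacle (choosing a regularization that respects the symmetries) is not where the difficulty lies nor how it is resolved --- the resolution is the boundary-of-Schwinger-space analysis together with the Kontsevich-type Stokes argument.
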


We refer proposition \ref{anomaly free theorem2} for details on anomalies in the cases $d' \geq 1$.
Indeed, we expect that topological-holomorphic Chern--Simons theory on $\R \times \C^2$ for a Lie algebra $\lie{g}$ suffers from a two-loop anomaly.

\begin{rmk}
There is a recent paper which shares some complementary overlap with our results \cite{Gaiotto:2024gii}.
However, there are two differences: 
\begin{enumerate}
    \item We prove the ultraviolet finiteness for all graphs, whereas \cite{Gaiotto:2024gii} focus only on an important class of graphs called Laman graphs. 
As far as we understand, finiteness for Laman graphs is not enough to prove the existence of the full quantum factorization algebra. 
However, by the conjectural formality for topological-holomorphic (colored) operads when $d'>1$, the Feynman graph integrals for Laman graphs may contain all the information of the factorization algebra.
    In the approach of \cite{Gaiotto:2024gii} an explicit model for this operadic structure in the form of $\lambda$-brackets and descent is given and only depends on the behavior of Laman graphs.
    \item In the proof of anomaly freeness in \cite{Gaiotto:2024gii}, they assumed an equivalence between topological gauge fixing and holomorphic gauge fixing, which doesn't seem obvious. We didn't assume such an equivalence.
\end{enumerate}
\end{rmk}

\subsection*{Acknowledgements}

We acknowledge Davide Gaiotto, Justin Kulp, and Jingxiang Wu for their work \cite{Gaiotto:2024gii} which partly inspired us to extend the perturbative Feynman diagram analysis of topological-holomorphic theories in order to produce the full factorization algebra of quantum observables on flat space. 
We especially thank Justin Kulp for a detailed reading of a draft and for helpful comments and feedback. 
We warmly thank Zhengping Gui for pointing out an error in the first version of the proof of Proposition \ref{vanishing of anomaly121}.
We also thank Si Li and Matt Szczesny for numerous conversations pertaining to this work.
Additionally, we would like to thank the three referees who's questions and suggestions prompted various improvements in presentation and organization.
Finally, we thank Boston University for support.

\subsection*{Conventions}
  {\tmdummy}
  
  \begin{enumeratenumeric}
    \item If $S_1, S_2, \ldots S_m$ are sets, $A_1, A_2, \ldots A_m$ are
    finite sets, an element in
    \[ \prod_{i = 1}^m (S_i)^{| A_i |} \]
    will be denoted by $\left( s_{a_1}, s_{a_2} {, \ldots, s_{a_m}} 
    \right)_{a_1 \in A_1, \ldots, a_m \in A_m}$, where $s_{a_i}$ is an element
    in $S_i$.
    
    \item Given a manifold $M$, we use $\Omega^{\bu} (M)$ to denote the space
    of differential forms on $M$. We use the natural $C^{\infty}$ topology on $\Omega^{\bu} (M)$, so $\Omega^{\bu} (M)$ is a Fréchet space. The space of distributional valued
    differential forms is denoted by $\mathcal{D}^{\bu} (M)$. If $M$ is a
    complex manifold, to emphasize their bi-graded nature, we will use
    $\Omega^{\bu, \bu} (M)$ and $\mathcal{D}^{\bu, \bu} (M)$ for
    differential forms and distributional valued differential forms
    respectively.

    \item We use $\hotimes$ to denote the complete projective tensor product of nuclear vector spaces. See \cite{treves2006topological} for more details about tensor products of nuclear spaces.
    
    \item We use the following Koszul sign rules: If $M, N$ are manifolds
    whose dimensions are $m, n$ respectively, \ $\alpha \in \Omega^i (M),
    \beta \in \Omega^j (N)$
    \begin{enumerateroman}
      \item Sign rule for differential forms:
      \[ \alpha \wedge \beta = (- 1)^{i j} \beta \wedge \alpha . \]
      \item Sign rule for integrations:
      \[ \int_{M \times N} \alpha \wedge \beta = \int_M \int_N \alpha \wedge
         \beta = (- 1)^{n i} \int_M \alpha \int_N \beta . \]
    \end{enumerateroman}
  \end{enumeratenumeric}

\section{Batalin--Vilkovisky formalism for topological-holomorphic theories}\label{sec:dfn}

We begin with an overview of the sort of classical field theories whose quantizations we consider in this paper.
For more details we refer to \cite[section 1]{GRWthf} \cite{Gaiotto:2024gii}.

\subsection{Classical topological-holomorphic theories}

Throughout we will only work on $(d' + 2d)$-dimensional space $\R^{d'} \times \C^d$.
The theories we consider are holomorphic `along' $\C^d$ and topological `along' $\R^{d'}$. 
In this short section we present such theories using action functionals, but the concept applies to theories without such a description.

For theories admitting a Lagrangian description, the action functional takes the form
\beqn\label{eqn:action}
S(\phi) = \int_{\R^{d'} \times \C^d} \d^d z \, \<\phi , (\dbar + \d_{deRham}) \phi\> + I(\phi)
\eeqn
where $\dbar$ is the $\dbar$-operator on $\C^d$ and $\d_{deRham}$ is the de Rham operator on $\R^{d'}$.\footnote{In \cite{GRWthf} we consider the slightly more general situation where the kinetic term could also contain $\int \phi Q^{hol} \phi$ where $Q^{hol}$ is some holomorphic differential operator. 
For our purposes, such a term can be absorbed in $I(\phi)$.}
The interaction $I(\phi)$ is given by a Lagrangian density which depends only on the holomorphic jets of fields (meaning no derivatives along $\R^{d'}$ appear). More precisely,
\begin{equation}\label{stronger assumption}
I(\phi)=\sum_{k=3}^{\infty}\sum_{\alpha_{1},\dots,\alpha_{k}\in\mathbf{N}}\int_{\R^{d'} \times \C^d} \d^d z \,f_{k,\alpha_{1},\dots,\alpha_{k}} \prod_{i=1}^{k}D^{\alpha_{i}}_{i}\phi,    
\end{equation}
where $D^{\alpha_{i}}_{i}$ is a differential operator with constant coeffients, and it only contains derivatives with respect to the holomorphic coordinates $\{z_{i}\}_{i=1}^{d}$. Here, $f_{k,\alpha_{1},\dots,\alpha_{k}}$ are smooth differential forms, and for each $k\geq3$, $f_{k,\alpha_{1},\dots,\alpha_{k}}=0$ for all but finitely many $\alpha_{1},\alpha_{2},\dots\alpha_{k}\in\mathbf{N}$.

We work within the Batalin--Vilkovisky formalism, meaning our space of fields is $\Z$-graded. 
In physics, this cohomological degree corresponds to the ghost number.
The graded space of fields of a topological-holomorphic theory with action functional as above is of the form
\beqn
\phi \in \Omega^{\bu}(\R^{d'}) \hotimes \Omega^{0,\bu}(\C^d) \otimes V .
\eeqn
Here, $V$ is a finite dimensional $\Z$-graded vector space equipped with a non-degenerate graded skew-symmetric pairing $\<-,-\>_V$ of cohomological degree $d+d'-1$.
This pairing induces the BV pairing on compactly supported fields by integration against the holomorphic volume element
\beqn\label{eqn:omega}
\omega(\phi, \phi') = \int_{\R^{d'} \times \C^d} \d^d z \, \<\phi,\phi'\>_V .
\eeqn
Dual to this pairing is the so-called BV (anti) bracket which is defined on (local) functionals of fields and is denoted $\{-,-\}$. See \cite[Lemma 3.2.3]{CostelloBook} for details.
In the BV formalism, the action functional of a classical field theory is required to satisfy the classical master equation:
\beqn
\{S,S\} = 0 .
\eeqn
For a topological-holomorphic theory as in \eqref{eqn:action}, this equation can equivalently be written as
\beqn
\d_{deRham} I + \dbar I + \frac12 \{I,I\} = 0 .
\eeqn

\subsection{Example: hybrid Chern--Simons theory}\label{s:hcs}

In this subsection we survey a general class of topological-holomorphic theories which have appeared in recent work.
Suppose that $\lie{g}$ is a Lie algebra equipped with a non-degenerate, symmetric and invariant bilinear form $\<-,-\>$.
Suppose that $d' + d$ is a positive odd integer.

The space of fields of hybrid Chern--Simons theory is
\beqn
\alpha \in \Omega^\bu(\R^{d'}) \hotimes \Omega^{0,\bu}(\C^d) \otimes \lie{g}[1] .
\eeqn
The action functional is
\beqn
S(\alpha) = \int_{\R^{d'} \times \C^d} \d^d z \left( \frac12 \<\alpha, \d \alpha\> + \frac16 \<\alpha, [\alpha, \alpha]\> \right) .
\eeqn
Here, $\d$ denotes the de Rham differential in $d' + 2d$ dimensions, but notice that for type reasons only the component $\d_{deRham} + \dbar$ appears in the expression above.
The fact that $\<-,-\>$ is $\lie{g}$-invariant implies that $\{S,S\} = 0$.

We recognize part of the integrand as the usual Chern--Simons Lagrangian.
The appearance of the holomorphic volume element $\d^d z$ implies that only anti-holomorphic derivatives in $\C^d$ will appear in the kinetic term of the action.

The action functional carries cohomological degree $2(d' + d - 3)$, so that this theory is only $\Z$-graded when $d' + d = 3$.
Otherwise, for general $d,d'$ with $d' + d$ an odd integer this theory is merely $\Z/2$ graded (so that a field $\alpha \in \Omega^i (\R^{d'}) \hotimes \Omega^{0,j}(\C^d)\otimes \lie{g}$ is of parity $(i + j - 1) \mod 2$).

We highlight a few other special cases.

\begin{enumerate}
\item $d' = 3, d = 0$. 
This is ordinary (perturbative) topological Chern--Simons theory whose equations of motion describe deformations of the trivial flat $G$-bundle.
\item $d' = 2, d = 1$. 
This is four-dimensional Chern--Simons theory as studied by Costello starting in \cite{Yangian} and further in \cite{CWY1,CWY2}.
Line operators in this theory are controlled by the Yangian quantum group.
\item $d' = 1, d = 2$.
When $\lie{g}=\lie{gl}_k$, there is a deformation of this theory which turns $\C^2$ non-commutative. 
The resulting `non-commutative' Chern--Simons theory was studied by Costello in \cite{CostelloOmega} where it is shown that this theory arises from placing $M$-theory in the (twisted) $\Omega$-background.
Chiral surface operators in this theory encode the $W_{k+\infty}$-vertex algebra.
\item $d'=0, d = 5$.
This is a purely holomorphic theory and has been shown in multiple sources to be the holomorphic twist of ten-dimensional supersymmetric Yang--Mills theory for the Lie algebra $\lie{g}$ \cite{Baulieu, ESW}. 
\item $d'=1, d=4$.
This is the twist of nine-dimensional supersymmetric Yang--Mills theory \cite{ESW}.
\item $d'=2,d=3$.
This is the rank $(1,1)$ twist of eight-dimensional supersymmetric Yang--Mills theory \cite{ESW}.
\item $d'=3,d=2$. 
This is the twist of the rank $2$ twist of seven-dimensional supersymmetric Yang--Mills theory \cite{ESW}.
\item 
$d'=4,d=1$. 
This is the rank $(2,2)$ twist of six-dimensional $\cN=(1,1)$ supersymmetric Yang--Mills theory \cite{ESW}.
\item $d'=5, d=0$.
This purely topological theory is the rank $4$ twist of five-dimensional $\cN=2$ supersymmetric Yang--Mills theory \cite{ESW}.
\end{enumerate}

\subsection{Homotopical BV quantization}
\label{sub:BVquantum}

In this section we swiftly recall the rigorous definition of a perturbative quantum field theory following Costello \cite{CostelloBook} and Costello--Gwilliam \cite{CG2}.
For another great exposition we refer to \cite{LiVertex}.
We refer to these original references for more details.

We start with a classical BV theory on $\R^{d'} \times \C^d$, which for us is always of the form \eqref{eqn:action}, but most of the discussion of this particular section is more general.
Denote the space of fields by 
\beqn\label{space of fields}
\cE \define \Omega^{\bu}(\R^{d'}) \hotimes \Omega^{0,\bu}(\C^d) \otimes V ,
\eeqn
and let $Q = \d_{deRham} + \dbar$ for simplicity of notation. We use $\cE^\vee$ to denote the dual space of $\cE$ as topological vector space.
The classical master equation is then
\beqn \label{CME}
Q I + \frac12 \{I,I\} = 0 .
\eeqn

We remark that the BV bracket $\{-,-\}$ is not defined on all functionals of fields
\beqn
\cO(\cE) = \prod_{n \geq 0} \Sym^n(\cE^\vee) 
\eeqn
as it involves contraction with the distributional form 
\beqn
K_0 \define \omega^{-1} \in \Sym^2(\br \cE)[-1] 
\eeqn
with $\omega$ as in equation \eqref{eqn:omega}. 
The bar on the right hand side means we consider $K_0$ as a distributional section.
The bracket is nevertheless well-defined on \textit{local} functionals
\beqn
\cO_{loc}(\cE)
\eeqn
which, by definition, are equivalence classes of Lagrangian densities where two densities are equivalent if they agree up to a total derivative \cite{CG2}. For simplicity of exposition, we assume the interaction has the following form: 
\begin{equation}\label{weaker assumption}
I(\phi)=\sum_{k=3}^{\infty}\int_{\R^{d'} \times \C^d} \d^d z \,f_{k} \prod_{i=1}^{k}D_{i}\phi\in \cO_{loc}(\cE),
\end{equation}
where $D_{i}$ is a differential operator with constant coeffients, and it only contains derivatives with respect to the holomorphic coordinates $\{z_{i}\}_{i=1}^{d}$. Here, $f_{k}$ are smooth differential forms with compact supports. 

\begin{rmk}
    Our main theorems Theorem \ref{main theorem1}, \ref{main theorem2} and \ref{main theorem3} still hold if we replace the assumption $(\ref{weaker assumption})$ by $(\ref{stronger assumption})$. In particular, the support of $I$ can be noncompact. See the fake heat kernel method developed in \cite{CostelloBook}.
\end{rmk}
The most important operator in the Batalin--Vilkovisky formalism it the BV Laplacian $\triangle$.
This is naively defined as the contraction with the distributional kernel $K_0$, but just as with the BV bracket this is not well-defined on all functionals of the fields.
The main idea involved with homotopical renormalization is to consider the new kernel
\beqn
K_\Phi = K_0 - Q P_\Phi
\eeqn
where 
\beqn
P_\Phi \define \frac12 Q^{GF} \Phi 
\eeqn
is the \textit{propagator} associated to a \textit{parametrix} $\Phi$ \cite[\S7.2]{CG2}.
Here $Q^{GF}$ is a `gauge fixing' operator for the theory, which for us is $Q^{GF} = \d_{deRham}^\ast + \dbar^\ast$.
The key property of $K_\Phi$ is that it is a smooth (non-distributional) section of $\Sym^2(\cE)[-1]$ so that the operator $\triangle_\Phi$ defined by contraction with $K_\Phi$ is well-defined.

We will not recall the full definition of a parametrix as in \cite{CG2}, but we exhibit how they are used to define a quantum field theory.
Below, we show how to produce parametrices using the heat equation.
Following \cite[definition 7.2.9.1]{CG2}, a \defterm{quantum field theory} is a family of functionals 
\beqn
\{I[\Phi]\} \subset \cO(\cE)[[\hbar]],
\eeqn
one for each parametrix $\Phi$, which satisfies a number of axioms. 
The two most important axioms are:
\begin{enumerate}
\item \textit{Renormalization group flow}.
For every two parametrices $\Phi, \Psi$ the family must satisfy
\beqn
I[\Phi] = W(P(\Phi) - P(\Psi), I[\Psi])
\eeqn
where $W(-,-)$ is the Feynman diagram expansion (See Definition \ref{Feynman expansion}).
\item \textit{Quantum master equation}.
For each $\Phi$ one has
\beqn
(Q + \hbar \triangle_\Phi) e^{I[\Phi]/\hbar} = 0 .
\eeqn  
Obstructions to this equation holding are called \defterm{anomalies}.
\item \textit{Classical limit}. 
If $I$ is the interaction describing the classical field theory, then we say that $\{I[\Phi]\}$ is a quantization of $I$ if $I = \lim_{\Phi \to 0} I[\Phi] \mod \hbar$.
\end{enumerate}

We now point out how parametrices can be produced from the heat equation.
Let
\beqn
\Delta = [Q,Q^{GF}] = \bar{\partial} \circ \bar{\partial}^{\ast} + \bar{\partial}^{\ast}
   \circ \bar{\partial} + \d_{deRham} \circ \d_{deRham}^{\ast} +\d_{deRham}^{\ast} \circ \d_{deRham} 
\eeqn
be the standard Laplacian on flat space $\R^{d'} \times \C^d$.
Here, we recall that
\[ \left\{\begin{array}{l}
     \bar{\partial}^{\ast} = - 2 \sum_{i = 1}^d \frac{\partial}{\partial z_i}
     \iota_{\frac{\partial}{\partial \bar{z}_i}} \\
     \d_{deRham}^{\ast} = - \sum_{i = 1}^{d'} \frac{\partial}{\partial x_i}
     \iota_{\frac{\partial}{\partial x_i}}
   \end{array}\right. \]
are the formal adjoints to the de Rham and Dolbeault operators with respect to the flat metric.
This operator is not to be confused with the BV Laplacian $\triangle$ acting on functionals of fields.

We consider the associated heat kernel
\beqn
H (t, z, x) = \frac{1}{2^{d + d'}
   (\pi t)^{d + \frac{d'}{2}}} e^{- \frac{\overset{d}{2 \tmmathbf{\underset{i
   = 1}{\sum}}} z_i \overline{z_i} + \overset{d'}{\tmmathbf{\underset{i =
   1}{\sum}}} x_i^2}{4 t}} \d^d \bar{z} \d^{d'} x, \text{\quad$t > 0$} 
\eeqn
This is a differential form solving the heat equation with initial condition
\beqn
\left\{\begin{array}{l}
     \left( \frac{\partial}{\partial t} + \Delta \right) H (t, z,x) = 0\\
      H (t, z,x) \xto{t\to 0} \delta (z, x) \d^d \bar{z} \d^{d'} x,
   \end{array}\right.
\eeqn
with $\delta(z,x)$ the $\delta$-distribution at $0 \in \C^d \times \R^{d'}$. 
Define
\begin{align}\label{heat kernel}
    &K_t (z-w, x-y)\\ \notag\define & H(t,z-w,x-y) \otimes \lie{c}_V \in \cE(\R^{d'} \times \C^d) \hotimes \cE(\R^{d'} \times \C^d) \hotimes C^\infty(\R_{+})\otimes V\otimes V,
\end{align}
where $\lie{c}_V$ is tensor dual to the pairing on $V$.\footnote{Fix a basis $\{v_i\}$ orthonormal for the pairing. Then $\lie{c}_V = \sum_i v_i \otimes v_i$.}
The parametrix associated to this heat kernel is a distribution
\beqn
\Phi_L \define \int_{0}^L \d t \, K_t,
\eeqn
and the propagator is
\beqn
P_{0<L} = P(\Phi_L) =  \int_{t=0}^L (\bar{\partial}^{\ast} \otimes \id + \d_{deRham}^{\ast} \otimes \id) K_t \d t .
\eeqn
Also, let $\triangle_L$ denote contraction with $K_L$ and $P_{\epsilon<L} = P(\Phi_L) - P(\Phi_\epsilon)$. We use $\{-,-\}_{L}$ to denote the BV bracket defined by $K_L$.

In what follows, we will produce quantizations from Feynman diagram expansions using propagators built explicitly from heat kernels.
This means that we will produce a family of functionals $\{I[L]\}$ for each `scale' $L > 0$ which satisfies the following key properties
\begin{enumerate}
\item \textit{Heat kernel renormalization group flow}.
For every $\epsilon,L > 0$ the family must satisfy
\beqn
I[L] = W(P_{\epsilon<L}, I[\epsilon])
\eeqn
where $W(-,-)$ is the Feynman diagram expansion as above.
\item \textit{Heat kernel quantum master equation}.
For each $L>0$ one has
\beqn
(Q + \hbar \triangle_L) e^{I[L]/\hbar} = 0 .
\eeqn 
\item \textit{Heat kernel classical limit}. 
If $I$ is the interaction describing the classical field theory, then we say that $\{I[L]\}$ is a quantization of $I$ if $I = \lim_{L \to 0} I[L] \mod \hbar$.
\end{enumerate}
For the additional axioms this family must satisfy we refer to \cite[definition 7.1.2]{CostelloBook}, \cite[definition 7.2.9.1]{CG2}.
We refer to such a family as a \defterm{heat kernel quantum field theory}.
Every heat kernel defines a parametrix $\Phi_L$.
Given a heat kernel quantum field theory, we obtain a quantum field theory in the sense above by the formula
\beqn
I[\Phi] = W(P(\Phi) - P(\Phi_L), I[L]) .
\eeqn

Based on the terminology we introduced, our main theorem can be stated as follows:
\begin{thm}\label{main theorem3}
    Given the space of fields $\cE$ defined in $(\ref{space of fields})$, and an interaction $I$, which satisfies the assumption $(\ref{weaker assumption})$ and the classical master equation $(\ref{CME})$. If $d'>1$, then there exists a heat kernel quantum field theory $\{I[L]\}$, such that \[\lim_{L\rightarrow 0}I[L]=I.\]
\end{thm}
From above theorem, one can construct a factorization algebra corresponding to the quantum observables of the quantum field theory. See \cite{CG2} for details. 

\subsection{Combinatorics of stable graphs}

To separate the analytic part of perturbation theories, we recall some basic
combinatorics of graphs. For more detailed explanations, please refer \cite{CostelloBook}.

\begin{dfn}
  A stable graph is a non-directed graph $\gamma$, possibly with external
  edges; and for each vertex $v$ of $\gamma$ an element $g (v) \in
  \mathbb{Z}_{\geqslant 0}$, called the genus of the vertex $v$; with the
  property that every vertex of genus $0$ is at least trivalent, and every
  vertex of genus $1$ is at least 1-valent.
  
  If $\gamma$ is a stable graph, the genus $g (\gamma)$ of $\gamma$ is
  defined by
  \[ g (\gamma) = b_1 (\gamma) + \sum_{v \in V (\gamma)} g (v) \]
  where $b_1 (\gamma)$ is the first Betti number of $\gamma$, $V (\gamma)$ is
  the set of all vertices of $\gamma$. An ordering of $\gamma$ is an ordering
  of the external edges, internal edges and vertices of $\gamma$. We use
  $\tmop{Ord} (\gamma)$ to denote the set of all ordering of $\gamma$.
\end{dfn}

\begin{dfn}
  An automorphism of a stable graph $\gamma$ is an automorphism of graph which
  preserve the genus of vertices of $\gamma$. We denote the set of
  automorphisms by $\tmop{Aut} (\gamma)$.
\end{dfn}

Let
\[ \mathcal{O}^+ (\mathcal{E}) [[\hbar]] \subset \mathcal{O} (\mathcal{E})
   [[\hbar]] \]
be the subspace of those functionals which are at least cubic modulo $\hbar$.

If $I \in \mathcal{O} (\mathcal{E}) [[\hbar]]$, we have
\[ I = \sum_{i, k \geqslant 0} \hbar^i I_{i, k}, \]
where $I_{i, k}$ is homogeneous of degree $k$ as a polynomial on
$\mathcal{E}$.

If $f\in \mathcal{O}(\mathcal{E})$ is homogeneous of degree $k$, then it defines an $S_k$-invariant continuous linear map
\[
\mathrm{D}^k f:\mathcal{E}^{\hat{\otimes} k}\to\C,\qquad
\phi_1\otimes\cdots\otimes\phi_k\mapsto k!\,f(\phi_1,\dots,\phi_k).
\]
Thus, if we expand $I \in \mathcal{O} (\mathcal{E}) [[\hbar]]$ as above, we have a collection of linear maps $\mathrm{D}^kI_{i,k}$.

Let $\gamma$ be a stable graph, with $k$ external edges. We choose an
ordering of $\gamma$. We will define
\[ w_{\gamma} (P_{\epsilon < L}, I) \in \tmop{Sym}^k (\mathcal{E}^{\vee})
   \subset \tmop{Hom} (\mathcal{E}^{\hat{\otimes} k}, \C) . \]
The rule is as follows. Let $H (\gamma)$, $T (\gamma)$, $E (\gamma)$, and $V
(\gamma)$ refer to the sets of half-edges, external edges, internal edges,
vertices of $\gamma$, respectively. Putting a propagator $P_{\epsilon < L}$ at
each internal edge of $\gamma$, putting $a_i$ at $i$th external edge of
$\gamma$, we can obtain an element of
\[ \mathcal{E}^{\hat{\otimes} | E (\gamma) |} \otimes
   \mathcal{E}^{\hat{\otimes} | E (\gamma) |} \otimes
   \mathcal{E}^{\hat{\otimes} | T (\gamma) |} \cong \mathcal{E}^{\hat{\otimes}
   | H (\gamma) |} . \]
Putting $\mathrm{D}^{k'}I_{i,k'}$ at each vertex of valency $k'$ and genus $i$ gives us an
element of
\[ \tmop{Hom} (\mathcal{E}^{\hat{\otimes} | H (\gamma) |}, \C) . \]
Contracting these two elements yields a number $\tilde{w}_{\gamma}
(P_{\epsilon < L}, I) (a_1, \ldots, a_k)$. This define an element
\[ \tilde{w}_{\gamma} (P_{\epsilon < L}, I) \in \tmop{Hom}
   (\mathcal{E}^{\hat{\otimes} k}, \C) . \]
We define
\[ w_{\gamma} (P_{\epsilon < L}, I)=\frac{1}{k!}\sum_{\sigma\in S_k}\sigma(\tilde{w}_{\gamma} (P_{\epsilon < L}, I)) \in \tmop{Sym}^k (\mathcal{E}^{\vee}) \]
as the symmetrization of $\tilde{w}_{\gamma} (P_{\epsilon < L}, I)$.

\begin{dfn}\label{Feynman expansion}
  The Feynman graph expansion $W (-, -)$ is defined by the following formula:
  \[ W (P_{\epsilon < L}, I) = \sum_{\gamma} \frac{1}{| \tmop{Aut} (\gamma) |}
     \hbar^{g (\gamma)} w_{\gamma} (P_{\epsilon < L}, I) \in \mathcal{O}^+
     (\mathcal{E}) [[\hbar]] . \]
\end{dfn}

To write down the quantum master equation in terms of Feynman graph expansion,
we introduce the following technical definitions:

\begin{dfn}
  Let $\gamma$ be a stable graph, $e \in E (\gamma)$, $v \in V (\gamma)$, $P'
  \in \tmop{Sym}^2 (\mathcal{E})$, and $I' \in \mathcal{O}^+ (\mathcal{E})[[\hbar]]$.
  \begin{enumeratenumeric}
    \item We define $w_{\gamma, e} (P_{\epsilon < L}, P', I)$ in the same way
    as $w_{\gamma} (P_{\epsilon < L}, I)$, except the distinguished edge $e$
    is labeled by $P'$, whereas all other edges are labelled by $P$.
    
    \item We define $w_{\gamma, v} (P_{\epsilon < L}, I, I')$ in the same way
    as $w_{\gamma} (P_{\epsilon < L}, I)$, except the distinguished vertex $v$
    is labeled by $I'$, whereas all other vertices are labelled by $P$.
    
    \item We use $\tmop{Aut} (\gamma, e)$ to denote the set of automorphisms
    which preserve $e$. Likewise, we use $\tmop{Aut} (\gamma, v)$ to denote
    the set of automorphisms which preserve $v$.
  \end{enumeratenumeric}
\end{dfn}

We have the following lemma:

\begin{lem}
  \label{weight properties}Let $\gamma$ be a stable graph, we define
  $w_{\gamma, e} (P_{\epsilon < L}, Q (P_{\epsilon < L}), I)$ The following
  are true:
  \begin{enumeratenumeric}
    \item
    \begin{eqnarray*}
      &  & Q W (P_{\epsilon < L}, I)\\
      & = & \sum_{\gamma, e} \frac{1}{| \tmop{Aut} (\gamma, e) |} \hbar^{g
      (\gamma)} w_{\gamma, e} (P_{\epsilon < L}, Q (P_{\epsilon < L}), I)\\
      & + & \sum_{\gamma, v} \frac{1}{| \tmop{Aut} (\gamma, v) |} \hbar^{g
      (\gamma)} w_{\gamma, v} (P_{\epsilon < L}, I, Q I) .
    \end{eqnarray*}
    \item
    \begin{eqnarray*}
      &  & \frac{1}{2} \{ W (P_{\epsilon < L}, I), W (P_{\epsilon < L}, I)
      \}_L + \hbar \triangle_L W (P_{\epsilon < L}, I)\\
      & = & \sum_{\gamma, e} \frac{1}{| \tmop{Aut} (\gamma, e) |} \hbar^{g
      (\gamma)} w_{\gamma, e} (P_{\epsilon < L}, K_L, I) .
    \end{eqnarray*}
    \item
    \begin{eqnarray*}
      &  & \sum_{\gamma, v} \frac{1}{| \tmop{Aut} (\gamma, v) |} \hbar^{g
      (\gamma)} w_{\gamma, v} \left( P_{\epsilon < L}, I, \frac{1}{2} \{ I, I
      \} \right)\\
      & = & \lim_{t \rightarrow 0} \sum_{\gamma, e} \frac{1}{| \tmop{Aut}
      (\gamma, e) |} \hbar^{g (\gamma)} w_{\gamma, e} (P_{\epsilon < L}, K_t,
      I).
    \end{eqnarray*}
  \end{enumeratenumeric}
\end{lem}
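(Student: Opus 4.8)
The three identities are the diagrammatic shadows of three algebraic facts about the contraction operators on $\cO(\cE)[[\hbar]]$, and I would prove them through the generating-function presentation of the Feynman expansion together with one orbit--stabilizer bookkeeping lemma used throughout. For a symmetric kernel $P \in \tmop{Sym}^2(\cE)$ write $\partial_P$ for the associated order-two contraction operator; the basic identity is $e^{W(P_{\epsilon < L},I)/\hbar} = e^{\hbar \partial_{P_{\epsilon < L}}} e^{I/\hbar}$, so each claim can be extracted from a commutator or Leibniz relation and then read back off as a sum over graphs. The combinatorial lemma is the following: $\tmop{Aut}(\gamma)$ acts on the internal edges (resp. vertices) of $\gamma$, and for $e$ in a given orbit the orbit--stabilizer theorem gives $|\mathrm{orbit}(e)| = |\tmop{Aut}(\gamma)| / |\tmop{Aut}(\gamma,e)|$, whence
\[
\frac{1}{|\tmop{Aut}(\gamma)|} \sum_{e \in E(\gamma)} w_{\gamma,e} = \sum_{[e]} \frac{1}{|\tmop{Aut}(\gamma,e)|} w_{\gamma,e},
\]
and likewise for vertices. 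This converts every naive sum over edges or vertices into the normalized sums appearing on the right-hand sides, and I would prove it once and invoke it in all three parts.

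For part (1), the point is that $Q$ is a graded derivation of $\cO(\cE)$ with $[Q,\partial_P] = \partial_{(Q \otimes 1 + 1 \otimes Q)P}$, so commuting $Q$ past a propagator contraction replaces the label $P_{\epsilon < L}$ on an edge by $Q P_{\epsilon < L}$. Applying $Q$ to $e^{W/\hbar} = e^{\hbar \partial_{P}} e^{I/\hbar}$ and moving it to the right produces exactly two families of terms: those where $Q$ lands on a single propagator, giving the edge-distinguished weight with label $Q P_{\epsilon < L}$, and those where $Q$ lands on $I$ at a vertex, giving the vertex-distinguished weight with label $QI$. The combinatorial lemma supplies the prefactors $1/|\tmop{Aut}(\gamma,e)|$ and $1/|\tmop{Aut}(\gamma,v)|$.

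For part (2), I would use that $\triangle_L = \partial_{K_L}$ is an order-two operator whose failure to be a derivation is measured by the scale-$L$ bracket, $\partial_{K_L}(FG) = (\partial_{K_L}F)G + (-1)^{|F|} F(\partial_{K_L}G) + \{F,G\}_L$, and that $\partial_{K_L}$ commutes with $\partial_{P_{\epsilon < L}}$. Feeding the generating function through this relation isolates $\hbar \triangle_L W + \tfrac12 \{W,W\}_L$ on the algebraic side. On the graph side, contracting two legs against $K_L$ adds one $K_L$-labelled edge: when the two legs lie on the same connected diagram this raises $b_1$, hence the genus, by one (the $\hbar \triangle_L$ term), and when they lie on two distinct diagrams the genus is merely additive (the $\tfrac12\{-,-\}_L$ term). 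Both are collected into the single edge-distinguished sum with $e$ labelled $K_L$; since $L > 0$ the kernel $K_L$ is smooth, so no analytic issue arises.

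For part (3), the inputs are $K_0 = \lim_{t \to 0} K_t$ and the fact that the classical bracket $\{I,I\}$ is precisely the contraction of two copies of $I$ along one pair of legs with $K_0$. Diagrammatically a vertex labelled $\tfrac12\{I,I\}$ is a pair of $I$-vertices joined by a single edge labelled $K_0$; blowing up the distinguished vertex in this way identifies the left-hand sum over $(\gamma,v)$ with the sum over $(\gamma',e)$ in which the distinguished edge $e$ is a bridge labelled $K_0 = \lim_{t\to0} K_t$. The main obstacle is precisely here: one must justify pulling the limit outside the graph sum at the level of the distributional kernels $K_t$, and must confirm that the self-loop (tadpole) terms on the right-hand side---which would correspond to the ill-defined $\triangle I$ rather than to $\{I,I\}$---do not survive the limit, so that only the bracket insertion is reproduced. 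Once the limit is controlled and the tadpole contributions are shown to be inert, the blow-up bijection together with the orbit--stabilizer lemma yields the stated equality.
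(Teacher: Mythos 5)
Your parts (1) and (2) are correct and are, in substance, the paper's own argument: the paper likewise applies the Leibniz rule for $Q$ (resp.\ the order-two property of $\triangle_L$) to the weight of each individual graph and then converts the naive sums over edges and vertices into the normalized sums via exactly your orbit--stabilizer observation. Your generating-function identity $e^{W(P_{\epsilon<L},I)/\hbar}=e^{\hbar\partial_{P_{\epsilon<L}}}e^{I/\hbar}$ is only a compact repackaging of that same Leibniz bookkeeping, not a genuinely different route.

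Part (3) is where the problems lie, and they are two. First, a concrete error: after blowing up a vertex labelled $\tfrac12\{I,I\}$ into two $I$-vertices joined by a $K_0$-edge, that edge is \emph{not} in general a bridge of the resulting graph. The other edges incident to the original vertex get distributed over the two new vertices and can close cycles through the new edge; for instance, blow up one vertex of a graph consisting of two vertices joined by two parallel edges, sending one old edge to each new vertex --- the new edge then lies on a cycle. The correct statement, which is what the paper uses, is that edge contraction $(\gamma,e)\mapsto(\gamma/e,v_e)$ is a bijection between graphs with a distinguished \emph{non-self-loop} edge and graphs with a distinguished vertex; if you restrict the right-hand sum to bridges you discard every term in which the $K_0$-edge lies on a cycle, and the identity you are matching against becomes false. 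Second, you flag the vanishing of the self-loop (tadpole) terms and the interchange of $\lim_{t\to0}$ with the graph sum as ``the main obstacle'' but never resolve it, so your proof of (3) is conditional on an unproven claim. Both points can be settled, and the paper settles the first by invoking $\lim_{t\to0}\triangle_t I=0$. In fact, in this setting $\triangle_t I=0$ identically for every $t>0$: since $I$ is local and its vertex operators involve only holomorphic derivatives, a self-loop contraction restricts $K_t$ (and holomorphic derivatives thereof) to the diagonal, where the anti-holomorphic form factor $\d^d(\bar{z}-\bar{w})\,\d^{d'}(x-y)$ of the heat kernel pulls back to zero. As for the limit interchange, there is no analytic subtlety to defer: at each order in $\hbar$ and each polynomial degree the sum over stable graphs with a marked edge is finite, so the limit is taken term by term, each term having a single distributional edge and smooth propagators $P_{\epsilon<L}$ elsewhere.
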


\begin{proof}
  The first two identities can be proven by graph combinatorics. As an
  example, We prove the first identity. For a stable graph $\gamma$, we have
  \[ Q w_{\gamma} (P_{\epsilon < L}, I) = \sum_{e \in E (\gamma)} w_{\gamma,
     e} (P_{\epsilon < L}, Q (P_{\epsilon < L}), I) + \sum_{v \in V (\gamma)}
     w_{\gamma, v} (P_{\epsilon < L}, I, Q I) \]
  There is an action of the automorphism group $\tmop{Aut} (\gamma)$ on $E
  (\gamma)$. The quotient set is the isomorphism class of stable graphs with a
  special edge such that the underling graph is $\gamma$. We use $E' (\gamma)$
  to denote this set. Likewise, we use $V' (\gamma)$ to denote quotient set of
  $\tmop{Aut} (\gamma)$ on $V (\gamma)$. Then we have
  \begin{eqnarray*}
    &  & Q w_{\gamma} (P_{\epsilon < L}, I)\\
    & = & \sum_{e \in E' (\gamma)} \frac{| \tmop{Aut} (\gamma) |}{|
    \tmop{Aut} (\gamma, e) |} w_{\gamma, e} (P_{\epsilon < L}, Q (P_{\epsilon
    < L}), I) + \sum_{v \in V' (\gamma)} \frac{| \tmop{Aut} (\gamma) |}{|
    \tmop{Aut} (\gamma, v) |} w_{\gamma, v} (P_{\epsilon < L}, I, Q I) .
  \end{eqnarray*}
  We multiply both sides of above identity by $\frac{1}{| \tmop{Aut} (\gamma)
  |} \hbar^{g (\gamma)}$, and sum them over all stable graphs. The final
  result is the first identity we want to prove.

  To prove the
  last identity, we notice that the pull back of $K_t (z-w, x-y)$ defined by \ref{heat kernel} to the diagonal $\{(z,x,w,y)\in\R^{d'} \times \C^d\times\R^{d'} \times \C^d:z=w, x=y\}$ is zero, so
  \[ \lim_{t \rightarrow 0} \triangle_t I = 0. \]
  Therefore, the summation of right hand side is nonzero only if $e$ is not a
  self-loop of $\gamma$. Then we notice that there is a bijection $f$ between
  graphs with one special edge and graphs with one special vertice:
  \[ f : (\gamma, e) \rightarrow (\gamma / e, v_e), \]
  where $\gamma / e$ is the graph obtained by contract $e$ to a single vertex
  $v_e$. The last identity follows.
\end{proof}

\begin{cor}\label{weight QME}
  Assume we have the classical master equation
  \[ Q I + \frac{1}{2} \{ I, I \} = 0. \]
  The quantum master equation is satisfied if, for any stable graph $\gamma$,
  the following equality holds:
  \begin{equation}\label{weight QME2}
  \lim_{\epsilon\rightarrow 0}\sum_{e \in E (\gamma)} \left( w_{\gamma, e} (P_{\epsilon < L}, Q (P_{\epsilon < L}),
     I) - \lim_{t \rightarrow 0}  w_{\gamma, e} (P_{\epsilon <
     L}, K_t, I) + w_{\gamma, e} (P_{\epsilon < L}, K_L, I) \right) = 0. 
\end{equation}
\end{cor}
\begin{proof}
  The existence of the limit when $\epsilon\rightarrow 0$
  is nontrivial. This follows from our ultraviolet finiteness theorem in later
  sections. Now, we will assume this fact.
  
  If $\gamma$ is a stable graph, there is an action of the automorphism group
  $\tmop{Aut} (\gamma)$ on $E (\gamma)$. The quotient set is the isomorphism
  class of stable graphs with a special edge such that the underling graph is
  $\gamma$. We use $E' (\gamma)$ to denote this set. Then we have
  \begin{eqnarray*}
    0 & = & \lim_{\epsilon\rightarrow 0}\sum_{e \in E (\gamma)} \left( w_{\gamma, e} (P_{\epsilon < L}, Q (P_{\epsilon <
    L}), I) - \lim_{t \rightarrow 0}  w_{\gamma, e}
    (P_{\epsilon < L}, K_t, I) + w_{\gamma, e} (P_{\epsilon < L}, K_L, I) \right)\\
    & = & \lim_{\epsilon\rightarrow 0}\sum_{e \in E' (\gamma)} \frac{| \tmop{Aut} (\gamma) |}{|
    \tmop{Aut} (\gamma, e) |} \left( w_{\gamma, e} (P_{\epsilon < L}, Q (P_{\epsilon < L}),
    I) - \lim_{t \rightarrow 0}  w_{\gamma, e} (P_{\epsilon <
    L}, K_t, I) \right.\\ 
    & + &\nobracket w_{\gamma, e} (P_{\epsilon < L}, K_L, I)\Bigr).
  \end{eqnarray*}
  If we multiply both sides of above identity by $\frac{1}{| \tmop{Aut}
  (\gamma) |} \hbar^{g (\gamma)}$, and sum them over all graphs, we have
  \begin{eqnarray*}
    0 & = & \lim_{\epsilon\rightarrow 0}\sum_{\gamma, e} \frac{1}{| \tmop{Aut} (\gamma, e) |} \left(
    w_{\gamma, e} (P_{\epsilon < L}, Q (P_{\epsilon < L}), I) - \lim_{t \rightarrow 0} w_{\gamma, e} (P_{\epsilon < L}, K_t, I) \right.\\
    & + & \nobracket w_{\gamma, e} (P_{\epsilon < L}, K_L, I)\Bigr).
  \end{eqnarray*}
  The quantum master equation follows from Lemma \ref{weight properties}:
  \[ \lim_{\epsilon\rightarrow 0}\left( Q W (P_{\epsilon < L}, I) + \frac{1}{2} \{ W (P_{\epsilon < L}, I), W (P_{\epsilon < L}, I)
     \}_L + \hbar \triangle_L W (P_{\epsilon < L}, I)\right) = 0. \]
\end{proof}
\begin{rmk}
    The corollary above shows that one can verify the quantum master equation by checking each graph separately.
\end{rmk}
For the convenience of our proof of main theorems, we notice the following
fact:
\[ K_t (z - w, x - y) = H (t, z - w, x - y) \otimes \mathfrak{c}_V \]
is a simple tensor in
\[ \tmop{Sym}^2 (\mathcal{E}) \cong \tmop{Sym}^2 (\Omega^{\bullet} (\C^d \times \R^{d'})) \otimes \tmop{Sym}^2 (V).
\]
Moreover, by our assumption $(\ref{stronger assumption})$ on $I=\sum_{k=3}^{\infty}I_{k}$, each $I_{k}$ is a simple tensor, so we have the following decomposition of $w_{\gamma} (P_{\epsilon < L}, I)$:
\begin{equation}\label{factorize Feynman graph}
 w_{\gamma} (P_{\epsilon < L}, I) = w_{\gamma}^{\tmop{an}} (P_{\epsilon <
   L}, I) \otimes w_{\gamma}^{\tmop{al}} (P_{\epsilon < L}, I) .
\end{equation}
We call $w_{\gamma}^{\tmop{an}} (P_{\epsilon < L}, I)$ the analytic part of
Feynman graph integral, $w_{\gamma}^{\tmop{al}} (P_{\epsilon < L}, I)$ the
algrbraic part of Feynman graph integral.

\begin{rmk}
    If we replace the assumption $(\ref{weaker assumption})$ by $(\ref{stronger assumption})$, we don't have the decomposition $(\ref{factorize Feynman graph})$. In this case, our main theorems Theorem \ref{main theorem1}, \ref{main theorem2} and \ref{main theorem3} still hold. In this case, the Feynman graph integrals are finite sums of terms similar to $(\ref{factorize Feynman graph})$, and we can apply our techniques in later sections to each term respectively.
\end{rmk}

In the following sections, we will only care about the analytic part of
Feynman graph integrals.

\begin{rmk}
  Note $I$ is a local functional which depends only on holomorphic
  derivatives. 
  The integrand of $w_{\gamma}^{\tmop{an}} (P_{\epsilon < L}, I) (a_1, \ldots
a_k)$ will be a product of holomorphic derivatives of $P_{\epsilon < L}$ and a
smooth differential form with compact support. This motivate the definition of
Feynman graph integrals in next section.
\end{rmk}

\section{Feynman graph integals on $\R^{d'} \times \C^d$}
\label{sec:feynman}

In this section we study the the analytic part of Feynman graph
integrals on $\R^{d'} \times \C^d$.
First, we will prove that topological-holomorphic theories are `UV' finite, in the sense defined below.
Then, we obtain some vanishing results for integrals over boundaries of
compactified Schwinger spaces which will allow us to prove that topological-holomorphic theories on $\R^{d'} \times \C^d$ admit a quantization to all orders in perturbation theory provided $d' > 1$.
We refer to appendix \ref{Schwinger spaces} for a
introduction to Schwinger spaces and the sort of compactifications that we make use of in this section.
Throughout this section, we will omit the bundle factors (or algebraic factors) that appear in Feynman graph integrals without further mention.

We will use coordinates
$z=(z_i)_{1 \leqslant i \leqslant d}$ and $x = (x_j)_{1 \leqslant j
\leqslant d'}$ for holomorphic and smooth coordinates on $\C^d$ and
$\R^{d'}$, respectively. 
The notation 
\beqn
\d^d \bar{z} \, \d^{d'} x = \left( \underset{i = 1}{\overset{d}{\prod}} \d \bar{z}_i
\right) \left( \underset{j = 1}{\overset{d'}{\prod}} \d x_j \right) .
\eeqn 
refers to the top anti-holomorphic form on $\C^d \times
\R^{d'}$.
The Dolbeault differential on $\C^d$ is denoted by $\dbar$ and the de Rham differential on $\R^{d'}$ is denoted simply by $\d_{deRham}$.

Note that any (distributional) differential form on $\R^{d'} \times \C^d$ can be written as 
\beqn
\alpha = \sum f_{I \br J K} (z,\zbar,x) \d^I z \d^{\br J}\zbar \d^K x
\eeqn
By an \defterm{anti-holomorphic} form on $\R^{d'} \times \C^d$ we mean such a form with index $I = 0$.
We will refer to the \defterm{Hodge type} of a (distributional) anti-holomorphic differential form as $(|\br J|, |K|)$.

\subsection{Propagators and Feynman graph integrals}
\label{sub:props}

The basic ingredient in the definition of Feynman graph integrals is the propagator, which is a particular distributional
valued differential form which is a Green's function for the operator $\dbar + \d_{deRham}$.

For $p \in \R^{d'} \times \C^d$, recall that the $\delta$-function at $p$ is the form degree $2d+d'$ distribution $\delta_p$ defined by $\delta_p (f) = f(p)$ for all $f \in C^\infty_c(\R^{d'} \times \C^d)$.
The restriction of $\delta_0$ along the difference map
\beqn
(\R^{d'} \times \C^d) \times (\R^{d'} \times \C^d) \to \R^{d'} \times \C^d ,\quad (z,x;w,y) \mapsto (z-w,x-y) 
\eeqn
is denoted $\delta(z-w, x-y) \d^{d} (z-w)$.\footnote{Sometimes to emphasize the anti-holomorphic dependence we will write $\delta(z-w, \zbar - \wbar, x-y)$.}
Note that we have extracted the holomorphic $d$-form component so that $\delta(z-w, x-y)$ has form degree $d+d'$.

\begin{dfn}
  An {\defterm{ordinary propagator}} on $\R^{d'} \times \C^d$
  is a distributional valued differential form
  \[ \til{P} (z - w, \zbar -
     \wbar, x - y) \in
     \mathcal{D}^{\bu} (\R^{d'} \times \C^d \times
     \R^{d'} \times \C^d), \]
  such that the following equation holds:
  \[ (\dbar + \d_{deRham}) \, \til{P} (z - w, \zbar -
     \wbar, x - y)  = \delta(z-w, \zbar - \wbar, x-y) .
  \]
\end{dfn}

\begin{rmk}
The form degree of $\til P$ is $d+d'-1$.
By construction, $\til P$ is an anti-holomorphic distributional form.
In the model we will use $\til P$ is a sum of forms of Hodge type $(d,d'-1)$ and $(d-1,d')$.
\end{rmk}

The choice of propagator is not unique, but we have the following standard choice.

\begin{dfn}
  The {\defterm{generalized Bochner-Martinelli kernel}} is the distributional form
  \begin{eqnarray*}
    &  & \frac{2^d \Gamma \left( d + \frac{d'}{2} \right)}{\pi^{d +
    \frac{d'}{2}}} \cdot \frac{1}{(2 | z - w |^2 + | x -
    y |^2)^{d + \frac{d'}{2}}} \cdot\\
    &  & \left( 2\sum_{i = 1}^d (- 1)^{i - 1} (\overline{z_i - w_i}) \left(
    \prod_{j \neq i} d (\overline{z_j - w_j}) \right) \d^{d'} (x - y) +
    \right.\\
    &  & \left. \sum_{i = 1}^{d'} (- 1)^{d + i - 1} (x_i - y_i) \d^d
    (\overline{z - w}) \left( \prod_{j \neq i} \d (x_j - y_j) \right) \right)
  \end{eqnarray*}
\end{dfn}

Standard computations give the following.

\begin{lem}
The generalized Bochner--Martinelli kernel is an ordinary propagator on~$\R^{d'} \times \C^d$. 
We will denote it by $\til{P}_{0, + \infty} (z-w,
  \zbar - \wbar, x-y)$.
\end{lem}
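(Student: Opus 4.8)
The plan is to identify the generalized Bochner--Martinelli kernel with the propagator manufactured from the heat kernel $K_t$, and then to read off the defining equation from the heat equation by a telescoping argument in the Schwinger time $t$.

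First I would compute the unregularized propagator $\til P = \int_0^\infty (\dbar^\ast + \d_{deRham}^\ast) K_t \, \d t$ explicitly. Writing $u = z-w$, $v = x-y$ and abbreviating the Gaussian $g_t = e^{-(2|u|^2 + |v|^2)/4t}$, the adjoint formulas recalled above give $\partial_{z_i} g_t = -\tfrac{\overline{u_i}}{2t}\, g_t$ and $\partial_{x_j} g_t = -\tfrac{v_j}{2t}\, g_t$, so that $(\dbar^\ast + \d_{deRham}^\ast)K_t$ is a sum of two families of terms, of Hodge types $(d-1,d')$ and $(d,d'-1)$, each proportional to $t^{-(d + d'/2 + 1)} g_t$ times a monomial $\overline{u_i}\prod_{k \neq i}\d\overline{u_k}\,\d^{d'}v$ or $v_j\,\d^d\overline{u}\prod_{k \neq j}\d x_k$. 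The $t$-integral is then elementary: the substitution $\tau = A/4t$ with $A = 2|u|^2+|v|^2$ gives $\int_0^\infty t^{-(d+d'/2+1)} g_t \, \d t = 4^{d+d'/2}\,\Gamma(d + d'/2)\, A^{-(d+d'/2)}$, and combining this with the normalization of $K_t$ collapses the two families precisely into the two sums of the displayed kernel, with the overall constant $\tfrac{2^d \Gamma(d+d'/2)}{\pi^{d+d'/2}}$ emerging from the Gamma factor. This already confirms the Hodge-type decomposition anticipated in the preceding remark, and pins down the relative normalization of the two sums.

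Next I would deduce the Green's-function equation with essentially no further computation. Since $K_t$ is a top anti-holomorphic form, one has $(\dbar + \d_{deRham})K_t = 0$ for degree reasons; hence, pulling $\dbar + \d_{deRham}$ inside the integral and using $[\dbar + \d_{deRham},\, \dbar^\ast + \d_{deRham}^\ast] = \Delta$ together with $(\dbar + \d_{deRham})^2 = 0$, I get $(\dbar + \d_{deRham})(\dbar^\ast + \d_{deRham}^\ast)K_t = \Delta K_t = -\partial_t K_t$. Integrating telescopes the $t$-dependence: $(\dbar + \d_{deRham})\til P = -\int_0^\infty \partial_t K_t \, \d t = \lim_{t \to 0} K_t - \lim_{t \to \infty} K_t$. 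The limit at infinity vanishes as the Gaussian spreads, while the initial condition for the heat kernel identifies the limit at $0$ with $\delta(z-w,\zbar-\wbar,x-y)$, which is exactly the propagator equation. An alternative, self-contained route avoiding the heat kernel is to verify the equation directly on the explicit formula: a short differentiation shows that only $\dbar$ applied to the type-$(d-1,d')$ sum and $\d_{deRham}$ applied to the type-$(d,d'-1)$ sum survive, and that these cancel away from the diagonal exactly when the two sums carry the relative normalization found above; the $\delta$-contribution is then obtained by applying Stokes' theorem on the complement of a shrinking region $\{2|z-w|^2 + |x-y|^2 \leq \varepsilon\}$ and evaluating the boundary integral, whose normalization is fixed by the volume of the sphere $S^{2d+d'-1}$.

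The main obstacle is the distributional bookkeeping at the two endpoints of the $t$-integral: one must justify interchanging $\dbar + \d_{deRham}$ with $\int_0^\infty \d t$ on distribution-valued forms, and verify that $\lim_{t \to 0} K_t$, paired against compactly supported test forms, genuinely reproduces $\delta$ with its holomorphic $d$-form component extracted as in the conventions. In the direct approach the same difficulty reappears as the need to control the boundary term in Stokes' theorem as $\varepsilon \to 0$; both reduce to the same elementary Gamma-integral and spherical-volume estimates, which is presumably what is meant by ``standard computations.''
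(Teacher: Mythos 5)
Your overall strategy---realizing the kernel as $\int_0^\infty (\dbar^\ast + \d_{deRham}^\ast) K_t \, \d t$ and deriving the Green's equation by telescoping the heat equation---is exactly the ``standard computation'' the paper leaves implicit (it is also the content of the proposition immediately following the lemma). Your second paragraph is correct as it stands: $(\dbar + \d_{deRham})K_t = 0$ for degree reasons, the anticommutator $[\dbar + \d_{deRham}, \dbar^\ast + \d_{deRham}^\ast] = \Delta$ converts this into $-\partial_t K_t$, and the $t$-integral telescopes to $\lim_{t\to 0}K_t = \delta$.

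There is, however, a genuine problem in your first step: the claim that the two families of terms ``collapse precisely into the two sums of the displayed kernel.'' If you actually track the constants, they do not. Since $\dbar^\ast = -2\sum_i \partial_{z_i} i_{\partial/\partial \bar z_i}$ carries a factor $-2$ while $\d_{deRham}^\ast = -\sum_j \partial_{x_j} i_{\partial/\partial x_j}$ does not, and since (as you correctly note) $\partial_{z_i} g_t = -\tfrac{\bar u_i}{2t}g_t$ and $\partial_{x_j} g_t = -\tfrac{v_j}{2t}g_t$, the Hodge type $(d-1,d')$ terms of $(\dbar^\ast + \d_{deRham}^\ast)K_t$ come with coefficient $\bar u_i/t$ whereas the type $(d,d'-1)$ terms come with $v_j/2t$: a relative factor of $\tfrac12$. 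After your (correct) $t$-integral, one obtains the displayed formula \emph{except} that the second sum is multiplied by $\tfrac12$. This is not a removable convention: with equal weights the printed kernel is not even $(\dbar + \d_{deRham})$-closed away from the diagonal. Writing $A = 2|u|^2 + |v|^2$ and $s = d + d'/2$, closedness requires $\sum_i \partial_{\bar u_i}\bigl(\bar u_i A^{-s}\bigr) + \lambda \sum_j \partial_{v_j}\bigl(v_j A^{-s}\bigr) = 0$, and comparing the coefficients of $|u|^2$ and $|v|^2$ forces $\lambda = \tfrac12$. A quick sanity check at $d=0$, $d'=1$: the printed kernel is $\mathrm{sgn}(x-y)$, whose differential is $2\delta$, while the heat-kernel integral produces $\tfrac12 \mathrm{sgn}(x-y)$, which is the actual Green's function. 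So your own ``alternative route''---verifying cancellation away from the diagonal---would have flagged exactly this point; by asserting exact agreement you skipped the one place where the computation disagrees with the statement. The correct conclusion is that the lemma holds for the kernel with the relative $\tfrac12$ on the second sum (this is what $\lim_{\varepsilon\to 0, L \to \infty}\til P_{\varepsilon,L}$ actually equals), and the displayed formula should be amended; nothing downstream is affected, since all later arguments use only the Schwinger/heat-kernel representation of the propagator.
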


This generalized Bochner-Martinelli kernel can be constructed from the solution to the heat equation.
We recall some notations from section \ref{sub:BVquantum}.

\begin{dfn}
  The {\defterm{regularized ordinary propagator}} $\til{P}_{\varepsilon,
  L}$ is defined by the following formula:
\beqn
\til{P}_{\varepsilon, L} (z-w,\zbar-\wbar, x -
    y) = \int_{t=\varepsilon}^L (\bar{\partial}_z^{\ast} + \d_{deRham,x}^{\ast}) H (t,
    z-w,\zbar-\wbar,
    x-y) \, \d t 
\eeqn
defined for $\varepsilon, L > 0$.
Here $\bar{\partial}_z^{\ast}$ and $\d_{deRham,x}^{\ast}$ are differential operators acting on the
  variables $z$ and $x$, respectively.
\end{dfn}

\begin{prop}
  The following equality holds:
\beqn
\til{P}_{0, + \infty} (z-w,
  \zbar - \wbar, x-y) =
     \lim_{\underset{L \rightarrow + \infty}{\varepsilon \rightarrow 0}}
     \til{P}_{\varepsilon, L} (z-w,
  \zbar - \wbar, x-y) .
  \eeqn
\end{prop}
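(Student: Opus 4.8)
The plan is to show that the regularized propagator $\til{P}_{\varepsilon, L}$ converges, as $\varepsilon \to 0$ and $L \to +\infty$, to the explicit generalized Bochner--Martinelli kernel. The strategy is to compute the limit integral in closed form and recognize the result. First I would fix the spatial variables to lie off the diagonal, i.e.\ $(z-w, x-y) \neq 0$, so that the heat kernel $H(t, z-w, \zbar - \wbar, x-y)$ is smooth and all integrals over $t$ converge absolutely; convergence on the diagonal is a distributional statement handled separately by pairing against test functions. Writing out $(\bar{\partial}_z^\ast + \d_{deRham,x}^\ast) H$ explicitly, one gets a Gaussian in $t$ multiplied by the linear factors $\overline{z_i - w_i}$ and $x_i - y_i$ coming from the first-order operators $\bar{\partial}_z^\ast$ and $\d_{deRham,x}^\ast$ acting on the exponential, together with the appropriate anti-holomorphic form components $\d^{\bar J} z \, \d^K x$.

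The key computational step is the Schwinger-parameter integral
\[
\int_0^{+\infty} \frac{1}{t^{d + \frac{d'}{2} + 1}} \, e^{-\frac{2|z-w|^2 + |x-y|^2}{4t}} \, \d t,
\]
which, after the substitution $u = 1/t$, becomes a standard Gamma-function integral evaluating to a constant multiple of $(2|z-w|^2 + |x-y|^2)^{-(d + d'/2)}$. Tracking the numerical prefactors from $H$ (namely $\tfrac{1}{2^{d+d'}(\pi t)^{d + d'/2}}$), the derivative coefficients ($-2\,\partial_{z_i}$ for $\bar{\partial}^\ast$ and $-\partial_{x_i}$ for $\d_{deRham}^\ast$), and the Gamma value $\Gamma(d + \tfrac{d'}{2})$, I would verify that the overall constant matches $\tfrac{2^d \Gamma(d + d'/2)}{\pi^{d + d'/2}}$ and that the form degrees assemble into exactly the two sums (Hodge types $(d, d'-1)$ and $(d-1, d')$) appearing in the definition of the generalized Bochner--Martinelli kernel. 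This identifies the pointwise limit on the complement of the diagonal with $\til{P}_{0, +\infty}$.

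The main obstacle is justifying the interchange of the limit with the integrations and controlling the two boundary regimes $t \to 0$ and $t \to +\infty$ as distributions rather than just pointwise. As $L \to +\infty$ the integrand decays like a negative power of $t$ against the Gaussian, so the tail is harmless; as $\varepsilon \to 0$ the singularity at $t = 0$ is exactly the source of the $\delta$-function, so I would argue that the full distributional limit $\til{P}_{\varepsilon,L} \to \til{P}_{0,+\infty}$ holds by pairing against a compactly supported test form and invoking dominated convergence away from the diagonal together with the known initial condition $H(t, \cdot) \xrightarrow{t \to 0} \delta$. Since the limiting kernel has already been certified as an ordinary propagator in the preceding lemma, it suffices to check that the two agree as distributions, which reduces to the explicit computation above.
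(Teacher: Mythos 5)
Your proposal is correct and matches the paper's approach: the paper's entire proof of this proposition is the single line ``This can be shown by direct computation,'' and your Schwinger-parameter Gamma-function integral, constant matching, and dominated-convergence argument for the distributional limit is precisely that computation spelled out. The only small imprecision is your aside attributing a $\delta$-function to the $\varepsilon \to 0$ limit of the propagator itself --- the limit kernel is just locally integrable, and the $\delta$ arises only when $\bar{\partial} + \d_{deRham}$ is applied (which is the content of the preceding lemma, as you correctly use it) --- but this does not affect the validity of the argument.
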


\begin{proof}
 This can be shown by direct computation.
\end{proof}

We can now define the sort of Feynman graph integrals associated with this choice of propagator.
For $v$ an integer we denote by $(z^1,z^2,\ldots,z^v;x^1,\ldots,x^v)$ coordinates on the $v$th product of spacetime $(\R^{d'} \times \C^d)^v = \R^{d' v} \times \C^{d v} $.
In particular, for each $1 \leq i \leq v$ the pair $(z^i,x^i) = (z^i_1,\ldots,z^i_d; x^i_1,\ldots,x^{i}_{d'})$ is a coordinate on $\R^{d'} \times \C^d$. 
We refer to appendix \ref{graph theory} for further conventions with graphs that we use in this section.

\begin{dfn}
The \defterm{configuration space} of a graph $\Gamma$ is
  \begin{eqnarray*}
\tmop{Conf} (\Gamma) & = & \left\{ \left. \left( (z^1, x^1),
    (z^2, x^2), \ldots, \left(
    z^{| \Gamma_0 |}, x^{| \Gamma_0 |}
    \right) \right)  \right|
    \nobracket \right. \left. (z^i, x^i) \neq
    (z^j, x^j) \text{ for  } i \neq j
    \right\} \\ & \subset & (\R^{d'} \times \C^d)^{| \Gamma_0 |}
  \end{eqnarray*}
  We interpret $(z^i, x^i)$ as the coordinate at the $i$th vertex of $\Gamma$.
\end{dfn}

We fix the following data:
\begin{itemize}
\item A decorated graph $(\Gamma, n)$ (as explained in appendix \ref{graph theory}),
\item positive numbers $0 < \varepsilon < L$, and 
\item a smooth, compactly supported, differential form $\Phi \in \Omega^{\bu}_c ((\R^{d'} \times \C^d)^{| \Gamma_0 |})$.
\end{itemize}
The main object of study is the {\defterm{Feynman graph integral}}
\beqn
  W_{0}^{L} ((\Gamma, n), \Phi)
\eeqn
formally defined by
\begin{multline}
(- 1)^{^{\frac{d + d' - 1}{2} | \Gamma_1 | (| \Gamma_1 | - 1) + |
    \Gamma_1 |}}\int_{\op{Conf}(\Gamma)}
    \prod_{e \in \Gamma_1 } \partial_{z^{h (e)}}^{n (e)}
    \til{P}_{0, L} (z^{h (e)} -
    z^{t (e)}, \bar{z}^{h (e)} -
    \bar{z}^{t (e)}, x^{h (e)} -
    x^{t (e)}) \wedge \Phi .
\end{multline}
This integral may not exist, but a regularized version always does.

\begin{dfn}
Let $(\Gamma,n)$ and $\Phi$ be as above. 
We define the {\defterm{regularized Feynman graph integral}} 
\beqn
  W_{\varepsilon}^L ((\Gamma, n), \Phi)
\eeqn
on $\R^{d'} \times \C^d$ to be the following integral:
\begin{multline}
(- 1)^{^{\frac{d + d' - 1}{2} | \Gamma_1 | (| \Gamma_1 | - 1) + |
    \Gamma_1 |}}\int_{(\R^{d'} \times \C^d)^{| \Gamma_0 |}}
    \prod_{e \in \Gamma_1 } \partial_{z^{h (e)}}^{n (e)}
    \til{P}_{\varepsilon, L} (z^{h (e)} -
    z^{t (e)}, \bar{z}^{h (e)} -
    \bar{z}^{t (e)}, x^{h (e)} -
    x^{t (e)}) \wedge \Phi .
\end{multline}
Here $\partial_{z^{h (e)}}^{n (e)} = \partial_{z^{h (e)}_1}^{n_{1, e}}
  \partial_{z^{h (e)}_2}^{n_{2, e}} \ldots \partial_{z^{h (e)}_i}^{n_{i, e}}
  \ldots \partial_{z^{h (e)}_d}^{n_{d, e}}$ is a holomorphic differential operator with
  constant coefficients which only involves coordinates at the vertex~$h (e)$.
\end{dfn}

The main result of this section is to show that the following.

\begin{thm}\label{thm:renormalization}
The limit
\beqn
W_0^L((\Gamma,n),\Phi) = \lim_{\varepsilon \to 0} W_{\varepsilon}^L ((\Gamma,
n), \Phi) ,
\eeqn
exists.
\end{thm}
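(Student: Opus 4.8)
The plan is to show UV finiteness (the $\varepsilon \to 0$ limit) by analyzing the short-distance behavior of the integrand and establishing that the singularities are integrable. The key observation is that the regularized propagator $\til{P}_{\varepsilon,L}$ is obtained by integrating $(\bar\partial^\ast_z + \d^\ast_{deRham,x}) H(t,\cdot)$ over $t \in [\varepsilon, L]$, and the only source of divergence as $\varepsilon \to 0$ is the behavior of the heat kernel near $t = 0$, which is where the propagator develops its singularity along the diagonal. First I would use the explicit Gaussian form of $H(t,z,x)$ to write each differentiated propagator $\partial^{n(e)}_{z^{h(e)}} \til{P}_{\varepsilon,L}$ as an integral over a Schwinger parameter $t_e \in [\varepsilon, L]$ of an expression that is Gaussian in the difference variables $(z^{h(e)} - z^{t(e)}, x^{h(e)} - x^{t(e)})$, with a polynomial prefactor coming from the holomorphic derivatives and from $\bar\partial^\ast_z$.

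\smallskip

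Next I would reduce the whole Feynman integral to an integral over the product of Schwinger parameters $\prod_{e \in \Gamma_1} [\varepsilon, L]^{\times}$ together with the spatial configuration integral. After pulling out the overall translation (the compactly supported test form $\Phi$ kills the noncompact diagonal direction), the spatial Gaussian integral over $(\R^{d'} \times \C^d)^{|\Gamma_0|}$ can be performed explicitly, producing a determinant factor depending on the graph Laplacian built from the $t_e$'s, and the convergence of the $\varepsilon \to 0$ limit becomes a question about the integrability of a rational-times-Gaussian expression near the boundary locus $\{t_e = 0\}$ of Schwinger space. This is precisely where the compactification of Schwinger space developed in \cite{wang2024feynman} enters: rather than analyze the singular integral on the open region directly, I would lift it to the compactified Schwinger space $\overline{S}(\Gamma)$ (see appendix \ref{Schwinger spaces}), where the boundary strata correspond to subsets of edges whose parameters degenerate to zero simultaneously, i.e.\ to subgraphs collapsing. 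On each such stratum one reads off a local degree-of-divergence count.

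\smallskip

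The crucial input, which distinguishes the topological-holomorphic case from a generic theory, is the power counting. Because the interaction $I$ depends only on \emph{holomorphic} derivatives and the holomorphic volume element $\d^d z$ forces every internal line to carry Hodge type summing to $(|\bar J|,|K|)$ with the correct anti-holomorphic and de Rham form degrees, the numerator from the $\bar\partial^\ast$ and de Rham $\d^\ast$ terms always supplies enough form degree (equivalently, enough factors of the difference coordinates) to compensate the Gaussian/power singularity along each collapsing subgraph. Concretely, after blowing up, on the boundary face associated to a subgraph $\gamma' \subseteq \Gamma$ the integrand acquires a factor $\rho^{N(\gamma')}$ in the normal coordinate $\rho \to 0$, and the content of the finiteness result is that $N(\gamma') > -1$ for every $\gamma'$, so that the $\rho$-integral converges. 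I expect the \textbf{main obstacle} to be this uniform power-counting estimate on every boundary stratum of the compactified Schwinger space: one must carefully track the interplay between (i) the scaling weight of the holomorphic-derivative prefactors $\partial^{n(e)}_z$, (ii) the scaling of the anti-holomorphic form degree $(d-1,d')$ versus $(d,d'-1)$ contributions in $\til P$, and (iii) the dimension of the collapsing configuration space, and verify that these combine so that the singularity is always strictly integrable. Since $\Phi$ is smooth and compactly supported it contributes no divergence, and the large-$t$ (infrared) end is cut off at $L$, so once the short-distance estimate is in hand the limit $W_0^L = \lim_{\varepsilon \to 0} W_\varepsilon^L$ exists.
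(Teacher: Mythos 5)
Your proposal correctly identifies the main tools the paper uses (the heat-kernel/Schwinger-parameter representation and the compactification of Schwinger space from \cite{wang2024feynman}, with boundary strata indexed by collapsing subgraphs), but it has a genuine gap: the step you yourself flag as ``the main obstacle'' --- the uniform power-counting estimate $N(\gamma') > -1$ on every boundary stratum --- is precisely the analytic content of the theorem, and you do not prove it or give a mechanism for proving it. The paper in fact never performs a degree-of-divergence count. Its argument is structurally different: after the change of variables $t_e = \tilde{t}_e^2$ and $q^i = \sum_j (M_{\Gamma}(\tilde{t})^{-1})^i_j \tilde{q}^j$, the \emph{integrand} (as a differential form, with the propagator written in the variables $u^e, v^e$ of Lemma \ref{lem:u}) is expressed entirely in terms of $(M_{\Gamma}(\tilde{t})^{-1})^{ij}$, $(d_{\Gamma}(\tilde{t})^{-1})^{ej}$ and their differentials, and these extend \emph{smoothly} to the compactified Schwinger space (Lemma \ref{extended functions}). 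Once the integrand extends smoothly, finiteness is automatic from compactness of $\widetilde{[0,\sqrt{L}]^{|\Gamma_1|}}$ --- no stratum-by-stratum divergence count is needed. The would-be divergent graphs (subgraphs with too many edges) do not contribute a borderline power; their integrands vanish identically for form-degree reasons (Proposition \ref{vanishing result1}), which is the actual mechanism your ``the numerator supplies enough form degree'' intuition is gesturing at, but which is a vanishing statement, not an integrability estimate.

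A second, more technical problem: you propose to ``perform the spatial Gaussian integral explicitly, producing a determinant factor depending on the graph Laplacian.'' This is not available here, because the integrand contains an arbitrary compactly supported test form $\Phi$, not a Gaussian; the determinant formula only applies when the spatial dependence is purely Gaussian-polynomial. The paper instead handles the non-compact spatial integration by a dominated convergence argument (Propositions \ref{extension theorem1} and \ref{extension theorem2}), whose key input is the uniform lower bound $M_{\Gamma}(\tilde{t})^{-1} M_{\Gamma}(\tilde{t}^2) M_{\Gamma}(\tilde{t})^{-1} \geqslant \frac{1}{c_\Gamma}\op{Id}$ of Lemma \ref{bound for exponent}, valid over the whole compactification; this is what lets one integrate out the vertices \emph{before} knowing anything about the boundary behavior in $\tilde{t}$, and it (together with the K\"unneth splitting into the holomorphic factor, cited from \cite{wang2024feynman}, and the topological factor) yields the smooth extension. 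Your plan could in principle be completed --- the relevant powers do turn out to be $\geqslant 0$, as the paper's later anomaly analysis (Proposition \ref{laman anomaly}) confirms --- but as written it both defers the central estimate and relies on an explicit integration that does not exist for general $\Phi$. Note also that the paper's stronger conclusion (smooth extension, and continuity of $\Phi \mapsto W_0^L((\Gamma,n),\Phi)$ as a distribution, Corollary \ref{extension theorem3}) is needed downstream for the quantum master equation analysis; mere integrability would not suffice for that.
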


In other words, we will prove the ultraviolet finiteness of such Feynman graph integrals on $\R^{d'} \times \C^d$.
Once we know that this limit exists it follows that the original Feynman graph integral satisfies $W_0^{+\infty}((\Gamma,n),\Phi) < + \infty$.
We will prove the existence of this limit by using compactification of Schwinger spaces (see appendix
\ref{Schwinger spaces}).

To do this, we recast $W_0^L ((\Gamma, n), \Phi)$ in terms of the following
propagator in Schwinger spaces:

\begin{dfn}
  Given $t > 0$, The {\defterm{propagator in Schwinger space}} $P_t$ is
  defined by the following formula:
\begin{eqnarray}\label{schwinger propagator}
   & &P_t (z - w, \zbar -
    \wbar, x-y)\nonumber\\
    &=& - \d t \wedge (\bar{\partial}_z^{\ast} + \d_{deRham,x}^{\ast}) H(t, z - w, \zbar -
    \wbar, x-y) + H(t, z - w, \zbar -
    \wbar, x-y) ,
\end{eqnarray}
or, compactly $P_t = - \d t (\bar{\partial}_z^{\ast} + \d_{deRham,x}^{\ast}) H + H$.
  We will simply call $P_t$ the propagator if there is no ambiguity.
\end{dfn}

One important reason to introduce this propagator is the following lemma.

\begin{lem}
  \label{lem:u}Let $u = \frac{\zbar-\wbar}{2 t}$, $v = \frac{x-y}{2 \sqrt{t}}$.
  Then
  \[ P_t (z - w, \zbar -
    \wbar, x-y) =
     \frac{1}{\pi^{d + \frac{d'}{2}}} e^{- (z - w) \cdot u -
    v \cdot v} \d^d u \d^{d'} v \]
where $(z-w) \cdot u = \sum_{i=1}^d (z_i - w_i)u_i$ and $v \cdot v = \sum_{j=1}^{d'} v_jv_j$.
\end{lem}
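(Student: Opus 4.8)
The plan is to establish the identity by direct computation: I expand the right-hand side after the substitution $u = \tfrac{\zbar-\wbar}{2t}$, $v = \tfrac{x-y}{2\sqrt t}$ and match it, term by term, against the definition $P_t = -\d t\wedge(\bar\partial_z^\ast + \d_{deRham,x}^\ast)H + H$. To lighten notation I write $z,x$ for $z-w,x-y$, set $f = e^{-(2|z|^2 + |x|^2)/4t}$ for the Gaussian factor and $c = \tfrac{1}{2^{d+d'}(\pi t)^{d+d'/2}}$, so that $H = c\,f\,\d^d\zbar\,\d^{d'}x$. The exponents agree immediately: since $z\cdot u = |z|^2/2t$ and $v\cdot v = |x|^2/4t$, one has $e^{-z\cdot u - v\cdot v} = f$, so it remains only to match the differential-form parts.

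First I would compute the two adjoint terms from the formulas $\bar\partial^\ast = -2\sum_i \partial_{z_i} i_{\partial_{\zbar_i}}$ and $\d_{deRham}^\ast = -\sum_j \partial_{x_j} i_{\partial_{x_j}}$. Only the Gaussian depends on the coordinates, with $\partial_{z_i} f = -\tfrac{\zbar_i}{2t}f$ and $\partial_{x_j} f = -\tfrac{x_j}{2t}f$; recording the interior-product signs $(-1)^{i-1}$ from deleting $\d\zbar_i$ out of $\d^d\zbar$, and $(-1)^d(-1)^{j-1}$ from deleting $\d x_j$ out of $\d^{d'}x$ after passing the degree-$d$ form $\d^d\zbar$, this yields closed expressions
\[
\bar\partial_z^\ast H = \sum_{i} \tfrac{\zbar_i}{t}\,c f\,(-1)^{i-1}(\d\zbar_1\cdots\widehat{\d\zbar_i}\cdots\d\zbar_d)\,\d^{d'}x,
\]
and similarly $\d_{deRham,x}^\ast H = \sum_j \tfrac{x_j}{2t}\,cf\,(-1)^d(-1)^{j-1}\,\d^d\zbar\,(\d x_1\cdots\widehat{\d x_j}\cdots\d x_{d'})$.

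The crucial point is that in Schwinger space $t$ is a genuine coordinate, so the new variables carry $\d t$ contributions: $\d u_i = \tfrac{1}{2t}\d\zbar_i - \tfrac{\zbar_i}{2t^2}\d t$ and $\d v_j = \tfrac{1}{2\sqrt t}\d x_j - \tfrac{x_j}{4t^{3/2}}\d t$. Expanding $\d^d u\wedge\d^{d'}v$ and using $\d t\wedge\d t = 0$, only the term with no $\d t$ and the terms with exactly one $\d t$ survive. The no-$\d t$ term contributes $\tfrac{1}{(2t)^d(2\sqrt t)^{d'}}\d^d\zbar\,\d^{d'}x = \tfrac{1}{2^{d+d'}t^{d+d'/2}}\d^d\zbar\,\d^{d'}x$, which after multiplication by $\tfrac{f}{\pi^{d+d'/2}}$ is exactly $H$, accounting for the $+H$ summand. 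The terms carrying $\d t$ from a factor $\d u_i$ reproduce $-\d t\wedge\bar\partial_z^\ast H$, and those carrying $\d t$ from a factor $\d v_j$ reproduce $-\d t\wedge\d_{deRham,x}^\ast H$, which completes the match with the definition of $P_t$.

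The only real work, and the place where sign errors are easy to make, is the bookkeeping: one must move the single $\d t$ to the front, incurring $(-1)^{i-1}$ from the $u$-factors and an additional $(-1)^d$ when $\d t$ comes from a $v$-factor and must pass $\d^d\zbar$, and then check that the constants $\tfrac{1}{2t^2}$ and $\tfrac{1}{4t^{3/2}}$ produced by $\d u_i,\d v_j$ combine with the leftover normalization to reproduce the $\tfrac{\zbar_i}{t}$ and $\tfrac{x_j}{2t}$ coefficients of the adjoints. The final consistency check is that the powers of $t$ agree on both sides—both equal to $1 + d + \tfrac{d'}{2}$ in each of the two cases—after which the identity follows.
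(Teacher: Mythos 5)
Your proof is correct and matches the paper's approach exactly: the paper disposes of this lemma with the single line ``this can be shown by direct computation,'' and your argument is precisely that computation carried out in full, with the Gaussian exponents, the $\d t$ bookkeeping in $\d^d u \wedge \d^{d'} v$, and the signs from the interior products all checking out.
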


\begin{proof}
  This can be shown by direct computation.
\end{proof}

We have two additional useful properties for propagator in Schwinger space:

\begin{lem}
  \label{closeness of propagator}{\tmdummy}
  
  \begin{enumeratenumeric}
    \item Let $\d_t$ be the de Rham differential on Schwinger space, then
    \beqn (\d_t + \bar{\partial} + \d_{deRham}) P_t (z - w, \zbar -
    \wbar, x-y) = 0. 
    \eeqn
    \item Define the vector fields $Eu_t = t \frac{\partial}{\partial t}$, $Eu_{\zbar} = \sum_{i=1}^d \zbar_i \frac{\partial}{\partial \zbar}$, $Eu_{x} = \sum_{j=1}^{d'} x_j \frac{\partial}{\partial x_j}$.
    Then one has the equality 
\beqn
\left( \iota_{Eu_t} + \iota_{Eu_{\zbar}} + \iota_{Eu_{\wbar}} + \iota_{Eu_x} + \iota_{Eu_y} \right)P_t (z - w, \zbar -
    \wbar, x-y) = 0
\eeqn
  \end{enumeratenumeric}
\end{lem}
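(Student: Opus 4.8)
The plan is to deduce both identities from the explicit formula for $P_t$ established in Lemma \ref{lem:u}, namely
\beqn
P_t = \frac{1}{\pi^{d+\frac{d'}{2}}} e^{-(z-w)\cdot u - v\cdot v}\, \d^d u\, \d^{d'} v,
\eeqn
with $u = \frac{\zbar-\wbar}{2t}$ and $v = \frac{x-y}{2\sqrt t}$, since in these variables the propagator is a pure Gaussian times a fixed top-degree antiholomorphic form, and both claims become statements about how the combined differential $\d_t + \dbar + \d_{deRham}$ and the sum of contraction operators act on such a Gaussian. First I would establish the change-of-variables dictionary: express $\d(\zbar_i-\wbar_i)$, $\d(x_j-y_j)$ and $\d t$ in terms of $\d u_i$, $\d v_j$ and $\d t$, keeping careful track of the $t$-dependence (the factors $\tfrac{1}{2t}$ and $\tfrac{1}{2\sqrt t}$ contribute $\d t$ terms upon differentiation), and record the resulting expressions for the operators $\dbar$, $\d_{deRham}$, $\d_t$ acting on functions of $(z,w,x,y,t)$.

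For part (1), I would compute $(\d_t + \dbar + \d_{deRham}) P_t$ directly. Since the form part $\d^d u\, \d^{d'} v$ is already top antiholomorphic degree in the $u,v$ directions, the only surviving contributions come from differentiating the scalar exponential $e^{-(z-w)\cdot u - v\cdot v}$ and wedging with the existing form, together with the $\d t$ produced when the differential hits the explicit $t$-dependence hidden inside $u$ and $v$. The heat equation $(\partial_t + \Delta)H = 0$ satisfied by the kernel, rephrased through the defining formula $P_t = -\d t\,(\dbar_z^\ast + \d_{deRham,x}^\ast)H + H$, is precisely the input that makes the $\d t$-proportional terms cancel against the spatial derivatives; so I would organize the calculation to display this cancellation, checking the sign conventions from the Koszul rules in the Conventions section. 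The key structural fact is that $P_t$ is a \emph{closed} form for the total differential on the product of spacetime with the Schwinger parameter, which is the natural reformulation of the heat equation in this geometry.

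For part (2), I would again exploit the $(u,v)$ presentation: the combined Euler vector field measures the scaling weight of $P_t$ under the simultaneous rescaling $t \mapsto \lambda t$, $(\zbar,\wbar)\mapsto \lambda(\zbar,\wbar)$, $(x,y)\mapsto \sqrt\lambda\,(x,y)$, under which $u$ and $v$ are invariant. Because $P_t$, written in the $(u,v,t)$ variables, depends on these scaling variables only through the invariant combinations $u,v$ (the $\d t$ factor and the Gaussian being the only $t$-dependence, which the combined Euler field is designed to annihilate together with the spatial Euler fields), the contraction $\bigl(i_{Eu_t} + i_{Eu_{\zbar}} + i_{Eu_{\wbar}} + i_{Eu_x} + i_{Eu_y}\bigr)P_t$ vanishes. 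Concretely I would check that the total weight of the differential-form factor $\d t\,(\cdots)$ or $\d^d u\,\d^{d'}v$ under this grading is zero, so that the interior product of the scaling vector field with the homogeneous-weight form gives zero.

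The main obstacle I expect is bookkeeping rather than conceptual: correctly propagating the Koszul signs and the $\d t$ terms through the change of variables $u = \frac{\zbar-\wbar}{2t}$, $v = \frac{x-y}{2\sqrt t}$, since differentiating the $t$-dependent denominators produces extra $\d t$ contributions that must be matched precisely against the adjoint operators $\dbar_z^\ast$ and $\d_{deRham,x}^\ast$ appearing in the definition of $P_t$. Once the dictionary is fixed, both identities reduce to the heat equation and the scale-invariance of $u$ and $v$, so I would relegate the detailed verification to a direct computation, as the authors do with Lemma \ref{lem:u}.
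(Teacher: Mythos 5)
Your overall strategy—deduce both identities from the explicit $(u,v)$ presentation of Lemma \ref{lem:u}—is the natural one, and it is evidently what the authors mean by ``direct computation,'' since the paper records no details. For part (1) your plan works, and in fact is simpler than you suggest: once you note that $u_i$ and $v_j$ depend only on $(\zbar,\wbar,x,y,t)$, the operator $\d_t+\dbar+\d_{deRham}$ acts on $P_t$ as the full de Rham differential does, it kills $\d^d u\,\d^{d'}v$, and differentiating the exponential produces only terms proportional to $\d u_i$ or $\d v_j$, which die upon wedging with the top form $\d^d u\,\d^{d'}v$. No appeal to the heat equation is needed in these coordinates; the heat equation is only required if you insist on working with the presentation $P_t=-\d t\,(\dbar^\ast+\d^\ast_{deRham})H+H$ in the original variables.

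For part (2) there is a genuine mismatch you must address. The rescaling you invoke, $t\mapsto\lambda t$, $(\zbar,\wbar)\mapsto\lambda(\zbar,\wbar)$, $(x,y)\mapsto\sqrt{\lambda}\,(x,y)$, is indeed the one leaving $u$ and $v$ invariant, but it is \emph{not} generated by the vector field in the lemma: $Eu_x=\sum_j x_j\partial_{x_j}$ generates $x\mapsto\lambda x$, with weight $1$, not $\tfrac12$. Under the literal vector field of the statement one computes
\beqn
\bigl(i_{Eu_t}+i_{Eu_{\zbar}}+i_{Eu_{\wbar}}+i_{Eu_x}+i_{Eu_y}\bigr)\,\d v_j
= t\,\partial_t v_j + (x_j-y_j)\,\partial_{x_j}v_j\big|_{\text{summed}}
= -\tfrac12 v_j + v_j = \tfrac12 v_j \neq 0,
\eeqn
while the $u$-part does vanish; so the contraction identity as literally stated fails (already for $d=0$, $d'=1$, where $P_t=\pi^{-1/2}e^{-v^2}\d v$ and the contraction equals $\tfrac12\pi^{-1/2}v\,e^{-v^2}$). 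The statement as printed contains a typo: the correct Euler field carries a factor $\tfrac12$ on the real directions, i.e.\ $Eu_t+Eu_{\zbar}+Eu_{\wbar}+\tfrac12(Eu_x+Eu_y)$, which is exactly the vector field $\sum_e t_e\partial_{t_e}+\sum_{i,k}\wbar^i_k\partial_{\wbar^i_k}+\tfrac12\sum_{i,k}x^i_k\partial_{x^i_k}$ appearing in the kernel computation in the proof of Proposition \ref{vanishing result1}. Your argument correctly proves this corrected version, but as written it silently identifies the statement's vector field with the parabolic scaling, which is false; you should state the corrected vector field explicitly and note the discrepancy rather than let the two scalings be conflated.
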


\begin{proof}
  These can be shown by direct computation.
\end{proof}

With this propagator, we can rephrase the regularized Feynman graph integral $W_{\varepsilon}^L ((\Gamma, n), \Phi)$
in the following way.

\begin{prop}
  Given decorated graph $(\Gamma, n)$ and $\Phi \in \Omega^{\bu}_c
  ((\R^{d'} \times \C^d)^{| \Gamma_0 |})$, we have the following equality:
  \begin{eqnarray*}
W_{\varepsilon}^L ((\Gamma, n), \Phi) = \int_{(\R^{d'} \times \C^d)^{| \Gamma_0 |} \times
    [\varepsilon, L]^{| \Gamma_1 |}} \prod_{e \in \Gamma_1 } \partial_{z^{h
    (e)}}^{n (e)} P_{t_e} (z^{h (e)} - z^{t
    (e)}, \bar{z}^{h (e)} - \bar{z}^{t
    (e)}, x^{h (e)} - x^{t (e)}) \wedge \Phi,
  \end{eqnarray*}
  where $t_e$ is the parameter associated with each edge $e \in \Gamma_1$.
\end{prop}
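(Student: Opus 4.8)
The plan is to substitute the definition of the Schwinger-space propagator
\[
P_{t_e} = - \d t_e \wedge (\bar{\partial}_z^{\ast} + \d_{deRham,x}^{\ast}) H_e + H_e
\]
into the right-hand side and expand the wedge product over all edges into $2^{|\Gamma_1|}$ terms, one for each way of selecting either the $\d t_e$-component or the $H_e$-component from each factor. Since the holomorphic derivatives $\partial_{z^{h(e)}}^{n(e)}$ have constant coefficients, they commute with $\d t_e$, with $\bar{\partial}_z^{\ast} = -2\sum_i \frac{\partial}{\partial z_i} i_{\partial/\partial \bar z_i}$, and with $\d_{deRham,x}^{\ast}$, so they can be carried along each factor without incident and, in particular, commute with the $t_e$-integration.

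The central observation is that integrating over the Schwinger cube $[\varepsilon, L]^{|\Gamma_1|}$ is a fiber integration over an $|\Gamma_1|$-dimensional region, so it annihilates every term whose form degree in the $\d t$-directions is strictly less than $|\Gamma_1|$. As each factor $\partial^{n(e)} P_{t_e}$ contributes at most one $\d t_e$, the only surviving term is the one in which every factor contributes its $- \d t_e \wedge \partial^{n(e)}(\bar{\partial}_z^{\ast} + \d_{deRham,x}^{\ast}) H_e$ piece; all other $2^{|\Gamma_1|} - 1$ terms integrate to zero over the cube. Writing $\alpha_e := \partial^{n(e)}(\bar{\partial}_z^{\ast} + \d_{deRham,x}^{\ast}) H_e$, a spacetime form of degree $d + d' - 1$, the surviving integrand is $\prod_{e} (- \d t_e \wedge \alpha_e)$.

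I would then reorder the $\d t_e$ factors to the front of the wedge product to prepare for Fubini. This produces two signs: a factor $(-1)^{|\Gamma_1|}$ from the $|\Gamma_1|$ minus signs, and a factor $(-1)^{(d+d'-1)\binom{|\Gamma_1|}{2}}$ from commuting each degree-one form $\d t_{e_k}$ past the $k-1$ preceding spacetime forms $\alpha_{e_1}, \ldots, \alpha_{e_{k-1}}$, each of degree $d + d' - 1$; here $\binom{|\Gamma_1|}{2} = \tfrac12 |\Gamma_1|(|\Gamma_1|-1)$. The accumulated sign is therefore $(-1)^{|\Gamma_1| + \frac{d+d'-1}{2}|\Gamma_1|(|\Gamma_1|-1)}$, which coincides with the prefactor in the definition of $W_{\varepsilon}^L$. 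Fubini's theorem then factors the $t$-integration into a product of one-dimensional integrals, and each yields $\int_\varepsilon^L \alpha_e(t_e)\, \d t_e = \partial^{n(e)} \til{P}_{\varepsilon, L}$ by the very definition of the regularized ordinary propagator.

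Assembling these pieces, the right-hand side becomes exactly the prefactor $(-1)^{\frac{d+d'-1}{2}|\Gamma_1|(|\Gamma_1|-1) + |\Gamma_1|}$ times $\int_{(\R^{d'} \times \C^d)^{|\Gamma_0|}} \prod_e \partial^{n(e)} \til{P}_{\varepsilon, L} \wedge \Phi$, which is precisely $W_{\varepsilon}^L((\Gamma, n), \Phi)$. The only delicate point is the Koszul sign bookkeeping in the reordering step; the prefactor in the definition of the regularized Feynman graph integral was chosen precisely so as to cancel these signs, and verifying this cancellation—while entirely mechanical—is the main obstacle.
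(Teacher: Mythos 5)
Your proposal is correct and follows essentially the same route as the paper's proof: expand each $P_{t_e}$ into its $\d t_e$-component and $H$-component, note that fiber integration over the $|\Gamma_1|$-dimensional Schwinger cube kills every term except the one of top degree in the $\d t$-directions, reorder the $\d t_e$'s to the front with the Koszul sign $(-1)^{|\Gamma_1| + \frac{d+d'-1}{2}|\Gamma_1|(|\Gamma_1|-1)}$ matching the prefactor in the definition of $W_\varepsilon^L$, and apply Fubini together with the definition $\til{P}_{\varepsilon,L} = \int_{t=\varepsilon}^L (\bar{\partial}^\ast + \d_{deRham}^\ast) H \, \d t$. The sign bookkeeping you flag as the delicate point is exactly the content of the paper's computation, and your accounting of it is accurate.
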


\begin{proof}
In the integrand
  \[ \prod_{e \in \Gamma_1 } \partial_{z^{h (e)}}^{n (e)} P_{t_e} (z^{h (e)} - z^{t
    (e)}, \bar{z}^{h (e)} - \bar{z}^{t
    (e)}, x^{h (e)} - x^{t (e)}) \wedge \Phi, \]
 $P_{t_e}$ appears $| \Gamma_1 |$ times and each provides at most a single $\d t_e$ component. We recall that only the top differential forms contributes the
  integral over $[\varepsilon, L]^{| \Gamma_1 |}$. As a consequence, the second term in $(\ref{schwinger propagator})$ has no contributions to the integral. By carefully tracking the $\pm$ signs, we obtain the following equality:
  \begin{eqnarray*}
 & & \int_{(\R^{d'} \times \C^d)^{| \Gamma_0 |} \times
    [\varepsilon, L]^{| \Gamma_1 |}} \prod_{e \in \Gamma_1 } \partial_{z^{h
    (e)}}^{n (e)} P_{t_e} (z^{h (e)} - z^{t
    (e)}, \bar{z}^{h (e)} - \bar{z}^{t
    (e)}, x^{h (e)} - x^{t (e)}) \wedge \Phi\\
    & = & (- 1)^{^{\frac{d + d' - 1}{2} | \Gamma_1 | (| \Gamma_1 | - 1) + |
    \Gamma_1 |}} \int_{(\R^{d'} \times \C^d)^{| \Gamma_0 |}}\prod_{e \in \Gamma_1 } \partial_{z^{h (e)}}^{n (e)}
    \til{P}_{\varepsilon, L} (z^{h (e)} - z^{t
    (e)}, \bar{z}^{h (e)} - \bar{z}^{t
    (e)}, x^{h (e)} - x^{t (e)}) \wedge \Phi\\
    & = & W_{\varepsilon}^L ((\Gamma, n), \Phi)
  \end{eqnarray*}
  We have used the fact that only the top differential forms contributes the
  integral over $[\varepsilon, L]^{| \Gamma_1 |}$.
\end{proof}

To prove the finiteness property, we need to realize Feynman graph integral as
a integral of a differential form over Schwinger spaces:

\begin{prop}
  Given decorated graph $(\Gamma, n)$ and $\Phi \in \Omega^{\bu}_c
  ((\R^{d'} \times \C^d)^{| \Gamma_0 |})$, we denote the
  integrand
  \[ \prod_{e \in \Gamma_1 } \partial_{z^{h (e)}}^{n (e)} P_{t_e}
     (z^{h (e)} - z^{t
    (e)}, \bar{z}^{h (e)} - \bar{z}^{t
    (e)}, x^{h (e)} - x^{t (e)})\wedge \Phi \]
  by $\til{W} ((\Gamma, n), \Phi)$. Then
  \[ \int_{(\R^{d'} \times \C^d)^{| \Gamma_0 |}} \til{W}
     ((\Gamma, n), \Phi)   \]
is a smooth differential form on $(0, L]^{| \Gamma_1 |}$ for any $L > 0$.
\end{prop}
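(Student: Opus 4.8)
The statement asserts smoothness only away from the boundary $t=0$ of Schwinger space, so the genuine analytic difficulty — the behavior as some $t_e \to 0$ — is not yet at issue; it is deferred to the compactification. My plan is therefore to fix an arbitrary closed subcube $[\delta,L]^{|\Gamma_1|}$ with $\delta>0$, show that the fiber integral over $(\R^{d'}\times\C^d)^{|\Gamma_0|}$ converges there and is $C^\infty$ in the $t_e$, and then let $\delta\to 0$; since smoothness is local this gives the claim on all of $(0,L]^{|\Gamma_1|}$. Recall that integrating over the spacetime factor is integration along the fiber of the projection $(\R^{d'}\times\C^d)^{|\Gamma_0|}\times (0,L]^{|\Gamma_1|}\to (0,L]^{|\Gamma_1|}$, which sends a differential form to a differential form on the base by collecting the $\d t_e$-components; thus it suffices to show each such component has a smooth coefficient.

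First I would expand the integrand using Lemma \ref{lem:u}. Writing $u_e,v_e$ for the variables attached to the edge $e$, each propagator equals $\pi^{-(d+d'/2)}\, e^{-(z^{h(e)}-z^{t(e)})\cdot u_e - v_e\cdot v_e}\,\d^d u_e\,\d^{d'}v_e$, and since $(z-w)\cdot u_e = |z^{h(e)}-z^{t(e)}|^2/(2t_e)$ and $v_e\cdot v_e = |x^{h(e)}-x^{t(e)}|^2/(4t_e)$ the exponent is real and negative. Expanding the wedge $\d^d u_e\,\d^{d'}v_e$ in terms of $\d\bar z$, $\d x$ and $\d t_e$ produces coefficients that are smooth for $t>0$ (Laurent in $t_e$ and $\sqrt{t_e}$), while the holomorphic derivatives $\partial_{z^{h(e)}}^{n(e)}$ act only on the exponential and bring down polynomial factors in $u_e$, hence polynomials in $\bar z^{h(e)}-\bar z^{t(e)}$ divided by powers of $t_e$. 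After fully expanding the product over edges and wedging with $\Phi$, the integrand is a finite sum of terms of the form
\beqn
c_\alpha(t)\, p_\alpha(z,\bar z,x)\, \exp\!\Big(-\sum_{e\in\Gamma_1}\tfrac{|z^{h(e)}-z^{t(e)}|^2}{2t_e}-\tfrac{|x^{h(e)}-x^{t(e)}|^2}{4t_e}\Big)\,\eta_\alpha\wedge\Phi,
\eeqn
where $p_\alpha$ is a polynomial, $\eta_\alpha$ is a constant wedge of $\d\bar z$, $\d x$ and $\d t_e$'s, and $c_\alpha(t)$ is smooth for $t>0$.

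To establish convergence and smoothness on $[\delta,L]^{|\Gamma_1|}$ I would dominate and differentiate under the integral sign. On this subcube every negative power $t_e^{-k}$ appearing in $c_\alpha(t)$ is bounded by $\delta^{-k}$, so $|c_\alpha(t)|\le C$ uniformly. Crucially, $\Phi$ is compactly supported, so the whole integrand is supported in a fixed compact set $K\subset(\R^{d'}\times\C^d)^{|\Gamma_0|}$; on $K$ the polynomial $p_\alpha$ and the Gaussian are bounded, so the integrand of each $\d t_e$-component is dominated, uniformly in $t\in[\delta,L]^{|\Gamma_1|}$, by a constant multiple of $\mathbf{1}_K$, which is integrable. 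Hence each fiber integral converges. Applying any number of $\partial_{t_e}$'s to the integrand yields a finite sum of terms of exactly the same shape (extra polynomial factors and higher negative powers of $t_e$), again uniformly dominated on $[\delta,L]^{|\Gamma_1|}$ by an integrable function supported on $K$; the standard theorem on differentiating under the integral sign then gives that each $\d t_e$-component has a $C^\infty$ coefficient there. As $\delta>0$ was arbitrary, the integral is a smooth differential form on $(0,L]^{|\Gamma_1|}$.

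The only real bookkeeping obstacle is controlling the proliferation of negative powers of $t_e$ coming both from the differentials $\d^d u_e\,\d^{d'}v_e$ and from the holomorphic derivatives $\partial_{z^{h(e)}}^{n(e)}$; but since these are singular only at $t_e=0$ and we remain on $[\delta,L]^{|\Gamma_1|}$, they are harmless, and the compact support of $\Phi$ removes the one genuinely dangerous feature — the absence of Gaussian decay in the overall translation directions. The serious difficulty, namely the limit $t_e\to 0$, is precisely what Theorem \ref{thm:renormalization} and the Schwinger-space compactification are designed to address.
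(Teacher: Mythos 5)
Your proof is correct and takes essentially the same route as the paper: the paper's own proof is a one-line appeal to the dominated convergence theorem (deferring the genuinely hard statement, extension to the compactified Schwinger space, to Proposition \ref{extension theorem2}), and your argument is precisely that dominated-convergence proof written out in full, correctly identifying that the compact support of $\Phi$ in \emph{all} spacetime variables compensates for the absence of Gaussian decay in the overall translation direction, while the negative powers of $t_e$ are harmless on $[\delta,L]^{|\Gamma_1|}$.
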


\begin{proof}
  This can be proved by dominated convergence theorem easily. We will prove a
  stronger version later. See Proposition \ref{extension theorem2}.
\end{proof}

Our strategy to prove the finiteness of $W_{\varepsilon}^L ((\Gamma, n),
\Phi)$ can be described by the following two steps:
\begin{enumeratenumeric}
  \item Prove $\int_{(\R^{d'} \times \C^d)^{| \Gamma_0 |}}
  \til{W} ((\Gamma, n), \Phi)$ can be extended to a differential form over the compactification
  $\widetilde{[0, L]^{| \Gamma_1 |}}$,
  
  \item Since $\widetilde{[0, L]^{| \Gamma_1 |}}$ is compact, the integral
  over $\widetilde{[0, L]^{| \Gamma_1 |}}$ is automatically finite.
\end{enumeratenumeric}

Before starting our main analysis, we rephrase Feynman graph integrals using a convenient coordinate system.
We take note of the following facts. 
If a decorated graph $(\Gamma, n)$ have two connected components
  $(\Gamma', n | \nobracket_{\Gamma'})$ and $(\Gamma'', n |
  \nobracket_{\Gamma''})$, we have
  \[ W_{\varepsilon}^L ((\Gamma, n), \Phi' \cdot \Phi'') = \pm
     W_{\varepsilon}^L ((\Gamma', n | \nobracket_{\Gamma'}), \Phi')
     W_{\varepsilon}^L ((\Gamma'', n | \nobracket_{\Gamma''}), \Phi''), \]
  where $\Phi' \in \Omega^{\bu}_c ((\R^{d'} \times \C^d)^{|
  \Gamma'_0 |})$, $\Phi'' \in \Omega^{\ast, \ast}_c ((\R^{d'} \times \C^d)^{| \Gamma''_0 |})$.
  Furthermore, we notice that we notice that the pull back of $P_t (z - w, \zbar -
    \wbar, x-y)$ to the diagonal $\{(z,x,w,y)\in\R^{d'} \times \C^d\times\R^{d'} \times \C^d:z=w, x=y\}$ is zero, so if a decorated graph $(\Gamma, n)$ contains a self-loop, then
  \[ W_{\varepsilon}^L ((\Gamma, n), \Phi) = 0. \]
Therefore, without loss of generality, we assume that $\Gamma$ is connected without self-loops in what follows.

Recall that we denote the coordinate on the $i$th factor in 
\beqn\label{eqn:product}
(\R^{d'} \times \C^d)^{|\Gamma_0|} = (\R^{d'} \times \C^d) \times \cdots \times (\R^{d'} \times \C^d) 
\eeqn
by 
\beqn
(z^i, x^i) = (z^i_1,\ldots,z_{d}^i; x^i_1,\ldots,x^i_{d'}) .
\eeqn
Given a labeled graph $(\Gamma, n)$, we introduce the following coordinates:
\[ \left\{\begin{array}{l}
     z^i = w^i +  w^{|
     \Gamma_0 |}, x^i = q^i + 
     q^{| \Gamma_0 |} \text{\qquad} 1 \leqslant i \leqslant |
     \Gamma_0 | - 1\\
     z^{| \Gamma_0 |} = w^{| \Gamma_0 |},
     x^{| \Gamma_0 |} = q^{| \Gamma_0 |}
   \end{array}\right. \]
Here we have used the ordering of vertices of $\Gamma$.

Using lemma \ref{lem:u} and coordinates $(w^i_j ; q^i_j)$ for \eqref{eqn:product}, the integrand of
Feynman graph integral becomes:
\beqn\label{integrand}
\til{W} ((\Gamma, n), \Phi) 
   = \frac{1}{\pi^{\left( d + \frac{d'}{2} \right) | \Gamma_1 |}} e^{-
  \overset{| \Gamma_0 | - 1}{\underset{i = 1}{\sum}} \underset{e \in
  \Gamma_1}{\sum} \rho_i^e w^i \cdot u^e -
  \underset{e \in \Gamma_1}{\sum} v^e \cdot
  v^e} \prod_{e \in \Gamma_1} \left( \prod_{1 \leqslant i
  \leqslant d} (u_i^e)^{n_{i, e}} \right) 
 \d^d u^e \d^{d'} v^e \wedge \Phi,  
\eeqn
where
\beqn \label{eqn:uv}
u^e = \sum_{i = 1}^{| \Gamma_0 | - 1} \frac{1}{2 t_e}
   \rho^e_i \wbar^i,\qquad v^e = \sum_{i = 1}^{|
   \Gamma_0 | - 1} \frac{1}{2 \sqrt{t_e}} \rho^e_i q^i . 
\eeqn

The proof of the following proposition shows the utility of this change of coordinates.

\begin{prop}
  If there exists a connected subgraph $\Gamma' \subseteq \Gamma$, such that
  \[ (d + d') | \Gamma'_0 | < (d + d' - 1) | \Gamma_1' | + d + d' + 1, \]
  then $\til{W} ((\Gamma, n), \Phi) = 0$.\label{vanishing result1}
\end{prop}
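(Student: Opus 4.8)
The plan is to show that the differential-form content of the sub-product over the edges of $\Gamma'$ already vanishes. Writing $\til W((\Gamma,n),\Phi)$ as in \eqref{integrand} and separating the edges of $\Gamma'$ from the rest, we see that $\til W((\Gamma,n),\Phi)$ contains $\prod_{e\in\Gamma'_1}\d^d u^e\,\d^{d'}v^e$ as a wedge factor (the holomorphic derivatives $\partial^{n(e)}_z$ only contribute the scalar monomials $\prod_i (u_i^e)^{n_{i,e}}$ and leave the form untouched). Hence it suffices to prove $\prod_{e\in\Gamma'_1}\d^d u^e\,\d^{d'}v^e=0$ as a differential form. By \eqref{eqn:uv}, for each $e\in\Gamma'_1$ every one-form $\d u^e_k$ and $\d v^e_j$ lies in the span $B^\ast$ of the relative position differentials $\{\d\wbar^i,\d q^i\}$ together with the single Schwinger differential $\d t_e$; since $\Gamma'$ is connected the relative position differentials span a space of dimension $(d+d')(|\Gamma'_0|-1)$, so
\[
\dim B = (d+d')(|\Gamma'_0|-1) + |\Gamma'_1| .
\]
A direct rearrangement shows that the hypothesis of the proposition is equivalent to the inequality $(d+d')|\Gamma'_1| \geq \dim B$; that is, the degree of the form $\prod_{e\in\Gamma'_1}\d^d u^e\,\d^{d'}v^e$ is at least $\dim B$.

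The crucial point is that $\dim B$ is not the effective dimension. Let $X$ be the scaling vector field that acts on the variables attached to $\Gamma'$ by scaling each $t_e$ and each $\wbar^i$ with weight one and each $x^i$ with weight one half. On the variables of a single edge $e$ this $X$ restricts to the Euler vector field appearing in Lemma \ref{closeness of propagator}, and a direct computation from Lemma \ref{lem:u} (reflecting exactly this scaling invariance) gives $i_X\,\d u^e_k = 0$ and $i_X\,\d v^e_j = 0$ for all $e\in\Gamma'_1$ and all $k,j$. Thus every one-form occurring in $\prod_{e\in\Gamma'_1}\d^d u^e\,\d^{d'}v^e$ is annihilated by $X$, i.e. lies in the hyperplane $X^\perp\subset B^\ast$. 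Because $X$ has a nontrivial component in the $\partial_{t_e}$ directions ($i_X\,\d t_e = t_e\neq 0$), this hyperplane is proper, so the span $W$ of these one-forms satisfies $\dim W \leq \dim B - 1$.

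Combining the two observations finishes the proof: the form $\prod_{e\in\Gamma'_1}\d^d u^e\,\d^{d'}v^e$ has degree $(d+d')|\Gamma'_1|\geq \dim B > \dim B - 1 \geq \dim W$, so it is a section of $\bigwedge^{(d+d')|\Gamma'_1|}W = 0$ and therefore vanishes identically; hence $\til W((\Gamma,n),\Phi)=0$. I expect the only subtle step to be the second one: a naive count of antiholomorphic, real and Schwinger differentials only forces vanishing under a strict inequality and misses the boundary case $(d+d')|\Gamma'_1|=\dim B$. It is precisely the scaling invariance of the heat-kernel propagator (Lemma \ref{closeness of propagator}, equivalently the functional dependence of the rescaled variables $u^e,v^e$ at the critical dimension) that lowers the effective dimension by one and closes this gap.
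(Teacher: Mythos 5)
Your proof is correct and is essentially the paper's own argument in dual form: the paper bounds the rank of the (anti-holomorphic) Jacobian of the map $(w^i,\wbar^i,q^i,t_e)\mapsto(u^e,v^e)$ using exactly the two invariances you invoke — overall translations of the positions (worth $d+d'$ kernel directions) and the scaling $t_e\mapsto\lambda t_e$, $\wbar^i\mapsto\lambda\wbar^i$, $x^i\mapsto\lambda^{1/2}x^i$ (worth one more) — and concludes that the pullback of the top form $\prod_{e\in\Gamma_1'}\d^d u^e\,\d^{d'}v^e$ vanishes because the map cannot be submersive onto a target of dimension $(d+d')|\Gamma_1'|$. Your cotangent-side count $\dim W\leq\dim B-1<(d+d')|\Gamma_1'|$ is precisely the transpose of that rank bound (your hyperplane $X^\perp$ corresponds to the paper's Euler-type kernel vector, which indeed carries the weight $\tfrac12$ on the $x$-variables, as in the paper's proof rather than in the literal statement of Lemma \ref{closeness of propagator}), so the two arguments coincide.
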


\begin{proof}
  Note there is a factor $\til{W} ((\Gamma', n |  \nobracket_{\Gamma'}), 1)$
  in $\til{W} ((\Gamma, n), \Phi)$, we only need to prove $\til{W}
  ((\Gamma', n |  \nobracket_{\Gamma'}), 1) = 0$.
  
Consider the map 
\beqn
g_{\Gamma'} \colon (\R^{d'} \times \C^d)^{|
     \Gamma_0' |} \times (0, + \infty)^{| \Gamma_1' |} \to (\R^{d'} \times \C^d)^{| \Gamma_1' |}
\eeqn
defined by
\[
g_{\Gamma'} (w^i, \wbar^i, q^i, t_e) = (u^e, v^e) .
\]
Since $g_{\Gamma'}$ is a anti-holomorphic map with respect to variables $(
w^i, \wbar^i)_{i \in \Gamma'_0}$, its anti-holomorphic derivative is of the form:
  \[ \br D_{(w^i, \wbar^i,
     q^i, t_e)} g \colon T^{0,1}_{(w^i,
     \wbar^i, q^i)} (\R^{d'} \times \C^d)^{| \Gamma_0' |} \oplus T_{t_e} (0, + \infty)^{|
     \Gamma_1' |} \rightarrow T^{1,0}_{(u^e,
     v^e)} (\R^{d'} \times \C^d)^{| \Gamma_1'
     |} . \]
Here $T_\bu^{0,1}$ is the tangent space spanned by anti-holomorphic derivatives in $\wbar$ and derivatives in $x$, and $T_\bu^{1,0}$ is spanned by holomorphic derivatives in $u^e$ and $v^e$.

The following vectors belong to the kernel of
  $\br D_{(w^i, \wbar^i, q^i, t_e)} g$:
  \[ \sum_{i = 1}^{| \Gamma_0' |} \partial_{w^i_j}, \quad \sum_{i = 1}^{|
     \Gamma_0' |} \partial_{x^i_{j'}}, \quad  \sum_{i = 1}^{| \Gamma_1 |} t_e
     \partial_{t_e} + \sum_{i,k} \wbar^i_k \partial_{\wbar^i_k} + \frac{1}{2}
     \sum_{\underset{1 \leqslant k \leqslant d'}{1 \leqslant i \leqslant |
     \Gamma_0 |}} x^i_k \partial_{x^i_k}, \]
  where $1 \leqslant j \leqslant d, \text{ } 1 \leqslant j' \leqslant d'$. So
  the rank of this map is bounded above by
  \[ (d + d') | \Gamma_0' | + | \Gamma_1' | - d - d' - 1 < (d + d')
     | \Gamma_1' | . \]
  On the other hand, the dimension of $T_{(u^e,
  v^e)} (\R^{d'} \times \C^d)^{| \Gamma_1'
  |}$ is $(d + d') | \Gamma_1' |$.
  We conclude that $\br D_{(w^i, \wbar^i, q^i, t_e)} g$ is not
  surjective. 

Now, notice that $\til{W} ((\Gamma', n |  \nobracket_{\Gamma'}), 1)$
  contains a factor $g_{\Gamma'}^{\ast} \left( \prod_{e \in \Gamma'_1} \d^d u^e
  \d^{d'} v^e \right)$. Since $\prod_{e \in \Gamma'_1} \d^d u^e \d^{d'} v^e$ is a
  top holomorphic form on $(\R^{d'} \times \C^d)^{| \Gamma_1'
  |}$,
  \[ g_{\Gamma'}^{\ast} {\left( \prod_{e \in \Gamma'_1} \d^d u^e \d^{d'} v^e
     \right)_{(w^i, \wbar^i,
     x^i, t_e)}}  = 0 \]
since $\br D_{(w^i, \wbar^i, q^i, t_e)}g$ is not surjective. We conclude that $\til{W} ((\Gamma, n), \Phi) = 0.$
\end{proof}

Following \eqref{integrand}, we can rephrase the Feynman graph integral:
\begin{multline}  \label{integral}
W_{\varepsilon}^L ((\Gamma, n), \Phi) = 
\frac{(- 1)^{{d'}^2 (| \Gamma_0 | - 1)}}{\pi^{\left( d + \frac{d'}{2}
  \right) | \Gamma_1 |}} \int_{\left(w^{| \Gamma_0 |}, q^{| \Gamma_0 |}
\right)} \int_{(\R^{d'} \times \C^d)^{| \Gamma_0 | - 1}
  \times [\varepsilon, L]^{| \Gamma_1 |}} \\
  e^{- \overset{| \Gamma_0 | - 1}{\underset{i = 1}{\sum}} \underset{e
  \in \Gamma_1}{\sum} \rho_i^e w^i \cdot u^e
  - \underset{e \in \Gamma_1}{\sum} v^e \cdot
  v^e} \prod_{e \in \Gamma_1} \left( \prod_{1 \leqslant i
  \leqslant d} (u_i^e)^{n_{i, e}} \right) \d^d u^e \d^{d'} v^e \wedge \Phi, 
\end{multline}
where $\int_{\left(w^{| \Gamma_0 |}, q^{| \Gamma_0 |}
\right)}$ denotes integration over the $|\Gamma_0|$-vertex
\[ \left(w^{| \Gamma_0 |}, q^{| \Gamma_0 |}
\right) \in \R^{d'} \times \C^d . \]

Notice that the integrand in \eqref{integral} is a smooth differential form in the variables $(w^i, x^i , u^e, v^e)$. 
Our goal is to utilize a compactification of Schwinger space such that these coordinate functions extend smoothly. 
This will allow us to extend
\[ \int_{\left(w^{| \Gamma_0 |}, q^{| \Gamma_0 |}
\right)}
   \int_{(\R^{d'} \times \C^d)^{| \Gamma_0 | - 1}} \til{W}
   ((\Gamma, n), \Phi) \]
to a smooth differential form on the compactification. This will be achieved in next subsection.

\

\subsection{Finiteness of Feynman graph integrals}

In this subsection, we will prove the main theorem on finiteness of Feynman graph integrals on $\R^{d'} \times \C^d$. 
The key idea can be described as follows:
\begin{enumerate}
    \item Find a
``coordinate transformation'' of $(\R^{d'} \times \C^d)^{|
\Gamma_0 |} \times (0, + \infty)^{| \Gamma_1 |}$, such that the integrand
$\til{W} ((\Gamma, n), \Phi)$ can be expressed in terms of $(M_{\Gamma}
(t)^{- 1})^{i}_j$ , $(d_{\Gamma}  (t)^{- 1})^{e j}$ and their de Rham
differentials. We refer to appendix \ref{graph theory} for the definitions of $M_\Gamma(t)$ and $\d_\Gamma(t)$.
\item By Lemma \ref{extended functions}, $(M_{\Gamma}
(t)^{- 1})^{i}_j$ and $(d_{\Gamma}  (t)^{- 1})^{e j}$ are smooth functions over the compactified Schwinger spaces (see Definition \ref{compactified Schwinger space}). So the integrand
$\til{W} ((\Gamma, n), \Phi)$ is smooth.
\item The integral $\int_{(\R^{d'} \times \C^d)^{| \Gamma_0 | - 1}} \til{W}
   ((\Gamma, n), \Phi)$ is dominated by a Gaussian integral, so it is smooth by the dominated convergence theorem.
\end{enumerate}

We start from purely topological case where the underlying space is
$\R^{d'}$, so $d = 0$ and consider the new coordinates $ (\til{q}^i, \til{t}_e)$ defined by
\beqn\label{coordinate transformation}
q^i =
   \sum_{j = 1}^{| \Gamma_0 | - 1} (M_{\Gamma} (\til{t})^{- 1})^i_j
   \til{q}^j \; , \quad t_e = \til{t}_e^2 
\eeqn
where $1 \leqslant i \leqslant |\Gamma_0 | - 1$ and  $e \in \Gamma_1$.
This change of coordinates covers the map (see lemma \ref{square map} for a definition)
\[ t_{\tmop{square}} \left|_{(0, + \infty)^{| \Gamma_1 |}} \right. \colon
   (0, + \infty)^{| \Gamma_1 |} \rightarrow (0, + \infty)^{| \Gamma_1 |} . \]
In these new coordinates we have the following expression for $v^e$, see \eqref{eqn:uv}:
\begin{eqnarray}
v^e & = & \sum_{i = 1}^{| \Gamma_0 | - 1} \frac{1}{2
  \sqrt{t_e}} \rho^e_i q^i \nonumber\\
  & = & \sum_{i = 1}^{| \Gamma_0 | - 1} \sum_{j = 1}^{| \Gamma_0 | - 1}
  \frac{1}{2 \til{t}_e} \rho^e_i (M_{\Gamma} (\til{t})^{- 1})^i_j
  \til{q}^j \nonumber\\
  & = & \sum_{j = 1}^{| \Gamma_0 | - 1} \frac{1}{2} (d_{\Gamma} 
  (\til{t})^{- 1})^{e j} \til{q}^j \label{111} . 
\end{eqnarray}
By lemma \ref{extended functions}, the integrand
\beqn  \label{222}
\til{W}_{\tmop{top}} ((\Gamma, n), \Phi_{\tmop{top}}) \nonumber\\
   = \frac{1}{\pi^{\left( \frac{d'}{2} \right) | \Gamma_1 |}} e^{-
  \underset{e \in \Gamma_1}{\sum} v^e \cdot
  v^e} \prod_{e \in \Gamma_1} \d^{d'} v^e \wedge \Phi_{\tmop{top}} \left(
  \sum_{j = 1}^{| \Gamma_0 | - 1} (M_{\Gamma} (\til{t})^{- 1})^i_j
  \til{q}^j \; , \; q^{| \Gamma_0 |} \right) 
\eeqn
can be extended to $(\R^{d'})^{| \Gamma_0 |} \times \widetilde{\left[
0, \sqrt{L} \right]^{| \Gamma_1 |}}$, where $\Phi_{\tmop{top}} \in
\Omega^{\bu}_c ((\R^{d'})^{| \Gamma_0 |})$. 
As
$(\R^{d'})^{| \Gamma_0 | - 1}$ is non-compact, we will need to use
dominated convergence theorem to prove that the differential form
\beqn\label{eqn:pullback}
\left(t_{\tmop{square}}\left|_{(0, +
\infty)^{| \Gamma_1 |}} \right. \right)^{\ast}  \int_{(\R^{d'})^{| \Gamma_0 | - 1}}
\til{W}_{\tmop{top}} ((\Gamma, n), \Phi_{\tmop{top}})
\eeqn
can be extended to
$\R^{d'}_{q^{| \Gamma_0 |}} \times \widetilde{\left[
0, \sqrt{L} \right]^{| \Gamma_1 |}}$.
For notational convenience, we will use 
\beqn
t_{\tmop{square}}^{\ast}
\int_{(\R^{d'})^{| \Gamma_0 | - 1}} \til{W}_{\tmop{top}} ((\Gamma,
n), \Phi_{\tmop{top}})
\eeqn
in place of the expression \eqref{eqn:pullback}

\begin{lem}
  \label{bound for exponent}If we use $M_{\Gamma} (\til{t})$ to denote the
  matrix with matrix elements $M_{\Gamma} (t)^i_j$, we have the
  following inequality:
  \[ M_{\Gamma} (\til{t})^{- 1} M_{\Gamma} (\til{t}^2) M_{\Gamma}
     (\til{t})^{- 1} \geqslant \frac{1}{c_{\Gamma}} \tmop{Id}, \]
  where $c_{\Gamma}$ is a constant which only depends on the graph $\Gamma$,
  and the inequality uses the Loewner order structure of symmetric matrices.
\end{lem}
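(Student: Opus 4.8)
The plan is to strip the two inverse matrices off the statement and then push all dependence on the Schwinger parameters into a single diagonal factor, leaving a purely combinatorial inequality. I would begin with the elementary observation that for a symmetric positive-definite matrix $M$ and a symmetric matrix $N$ of the same size, $M^{-1} N M^{-1} \geqslant \frac{1}{c_\Gamma}\tmop{Id}$ holds if and only if $N \geqslant \frac{1}{c_\Gamma} M^2$; indeed, conjugating the inequality $M^{-1} N M^{-1} - \frac{1}{c_\Gamma}\tmop{Id} \geqslant 0$ by $M$ is a congruence and so preserves the order on symmetric matrices. Since $\Gamma$ is connected without self-loops, its reduced incidence matrix $R$ (the $|\Gamma_1| \times (|\Gamma_0|-1)$ matrix with entries $\rho^e_i$) has full column rank, so $M_\Gamma(\til t)$ is positive definite and invertible. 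Applying the observation with $M = M_\Gamma(\til t)$ and $N = M_\Gamma(\til t^2)$ reduces the lemma to the inverse-free inequality
\[ c_\Gamma \, M_\Gamma(\til t^2) \geqslant M_\Gamma(\til t)^2 . \]

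Next I would recall from appendix \ref{graph theory} that $M_\Gamma$ is the reduced weighted graph Laplacian, of the form $M_\Gamma = R^T D R$ with $D = \tmop{diag}(w_e)$ the diagonal matrix of edge weights. The structural fact driving the whole argument is that replacing $\til t$ by $\til t^2$ squares every weight; writing $D = D(\til t)$ one has $M_\Gamma(\til t) = R^T D R$ and $M_\Gamma(\til t^2) = R^T D^2 R$ (this is exactly the relation that makes the coordinate change \eqref{coordinate transformation} natural). Introducing $C = D^{1/2} R$, these become the two clean factorizations $M_\Gamma(\til t) = C^T C$ and $M_\Gamma(\til t^2) = C^T D C$, whence also $M_\Gamma(\til t)^2 = C^T (C C^T) C$.

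With these in hand the target inequality reads $c_\Gamma\, C^T D C \geqslant C^T (C C^T) C$, which follows from the inequality of $|\Gamma_1| \times |\Gamma_1|$ matrices $C C^T \leqslant c_\Gamma D$, because congruence by $C$ (here multiplication by $C$ on the right and $C^T$ on the left) preserves the order. Finally $C C^T = D^{1/2} (R R^T) D^{1/2}$, so conjugating by $D^{-1/2}$---legitimate since every weight is strictly positive---shows $C C^T \leqslant c_\Gamma D$ is equivalent to the parameter-free bound $R R^T \leqslant c_\Gamma \tmop{Id}$. I would therefore simply set $c_\Gamma = \lambda_{\max}(R R^T)$, the top eigenvalue of the edge Gram matrix of $\Gamma$, which depends only on the graph.

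The step I expect to demand the most care is uniformity: a priori the constant could blow up as some $\til t_e \to 0$. The whole point of the chain of congruences above is to defeat this by funneling every appearance of $\til t$ into the diagonal factor $D^{1/2}$, which then cancels against itself and leaves the fixed matrix $R R^T$ alone in control of the constant. Beyond this I would only need to verify the routine facts used twice, namely that $A \leqslant B$ implies $X^T A X \leqslant X^T B X$ for an arbitrary (possibly rectangular) matrix $X$, and that $M_\Gamma(\til t)$ is invertible, which is where connectedness of $\Gamma$ enters.
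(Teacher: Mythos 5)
Your proof is correct and is essentially the paper's own argument: both factor the weighted Laplacians through the incidence matrix, observe that passing from $\til{t}$ to $\til{t}^2$ squares the diagonal weight matrix, and reduce everything via congruence-invariance of the positive-semidefinite order to the parameter-free bound $\rho\rho^T \leqslant c_\Gamma \tmop{Id}$ (the paper takes $c_\Gamma = |\Gamma_1|\max_{e,e'}|(\rho\rho^T)^{ee'}|$, an explicit upper bound for your $\lambda_{\max}(\rho\rho^T)$). The only difference is bookkeeping: the paper inserts this bound directly into the sandwiched expression $M_\Gamma(\til{t})^{-1}\rho^T \til{t}\,\til{t}\,\rho\, M_\Gamma(\til{t})^{-1}$ in a single chain, whereas you strip the inverses first and introduce $C = D^{1/2}R$; the mathematical content is identical.
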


\begin{proof}
  We use $\rho$ to denote the $| \Gamma_1 | \times (| \Gamma_0 | - 1)$ matrix
  with entries $\rho^e_i$, and use \~{$t$} to denote the diagonal $| \Gamma_1
  | \times | \Gamma_1 |$ matrix with diagonal entries $\til{t}_e$. We have
  \begin{eqnarray*}
    &  & M_{\Gamma} (\til{t})^{- 1} M_{\Gamma} (\til{t}^2) M_{\Gamma}
    (\til{t})^{- 1}\\
    & = & M_{\Gamma} (\til{t})^{- 1} \rho^T \til{t}^2 \rho M_{\Gamma}
    (\til{t})^{- 1}\\
    & = & M_{\Gamma} (\til{t})^{- 1} \rho^T \til{t} \til{t} \rho
    M_{\Gamma} (\til{t})^{- 1}\\
    & \geqslant & \frac{1}{c_{\Gamma}} M_{\Gamma} (\til{t})^{- 1} \rho^T
    \til{t} \rho \rho^T \til{t} \rho M_{\Gamma} (\til{t})^{- 1}\\
    & = & \frac{1}{c_{\Gamma}} M_{\Gamma} (\til{t})^{- 1} M_{\Gamma}
    (\til{t}) M_{\Gamma} (\til{t}) M_{\Gamma} (\til{t})^{- 1}\\
    & = & \frac{1}{c_{\Gamma}} \tmop{Id} .
  \end{eqnarray*}
  We have used the following inequality:
  \[ \rho \rho^T \leqslant C_{\Gamma} \tmop{Id}, \]
  where
  \[ C_{\Gamma} = | \Gamma_1 | \underset{e, e' \in \Gamma_1}{\max} \{ | (\rho
     \rho^T)^{e e'} | \} . \]
\end{proof}

With this lemma, we can the following result:

\begin{prop}
  \label{extension theorem1} Let $(\Gamma, n)$ be a connected decorated graph 
  without self-loops, and $\Phi_{\tmop{top}} \in \Omega^{\bu}_c
  ((\R^{d'})^{| \Gamma_0 |})$.
  Then
\beqn\label{eqn:tsquare1}
t_{\tmop{square}}^{\ast}
\int_{(\R^{d'})^{| \Gamma_0 | - 1}} \til{W}_{\tmop{top}} ((\Gamma,
n), \Phi_{\tmop{top}})
\eeqn
  can be extended to a smooth form on $\R^{d'}_{q^{|
  \Gamma_0 |}} \times \widetilde{\left[ 0, \sqrt{L} \right]^{| \Gamma_1 |}}$.

Furthermore, the map
\beqn
\Omega^{\bu}_c ((\R^{d'})^{| \Gamma_0|}) \to  \Omega^{\bu}
    \left( \R^{d'}_{q^{| \Gamma_0 |}} \times
    \widetilde{\left[ 0, \sqrt{L} \right]^{| \Gamma_1 |}} \right)
\eeqn
which sends $\Phi_{\op{top}}$ to \eqref{eqn:tsquare1} is a continuous map between topological vector spaces.
\end{prop}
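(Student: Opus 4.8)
The plan is to prove both claims---smoothness of the extension and continuity of the assignment $\Phi_{\tmop{top}} \mapsto \eqref{eqn:tsquare1}$---simultaneously, by justifying differentiation under the integral sign over the non-compact fiber $(\R^{d'})^{|\Gamma_0|-1}$ with a dominating function that is uniform over the \emph{entire} compactification $\widetilde{\left[0,\sqrt L\right]^{|\Gamma_1|}}$, including its boundary strata. First I would record what the change of coordinates \eqref{coordinate transformation} does to the integrand \eqref{222}. Substituting \eqref{111}, the Gaussian exponent becomes $\sum_{e\in\Gamma_1} v^e\cdot v^e = \tfrac14\,\til{q}^{\,T} A(\til{t})\,\til{q}$ in the fiber variables $\til{q}=(\til{q}^1,\ldots,\til{q}^{|\Gamma_0|-1})$, with
\beqn
A(\til{t}) := M_\Gamma(\til{t})^{-1} M_\Gamma(\til{t}^2) M_\Gamma(\til{t})^{-1} ,
\eeqn
a quadratic form whose matrix is exactly the one appearing in Lemma \ref{bound for exponent}. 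By Lemma \ref{extended functions} the coefficient functions $(M_\Gamma(\til{t})^{-1})^i_j$ and $(d_\Gamma(\til{t})^{-1})^{ej}$---which are only defined on the open cube $(0,\sqrt L]^{|\Gamma_1|}$, where $M_\Gamma(\til{t})^{-1}$ genuinely degenerates as $\til{t}\to 0$---pull back along the blow-down map to smooth functions on the compact manifold-with-corners $\widetilde{\left[0,\sqrt L\right]^{|\Gamma_1|}}$; in particular they, and all their derivatives, are bounded there.

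Next I would expand the fiber-form $\prod_e \d^{d'} v^e \wedge \Phi$ after substitution: writing each $\d v^e = \tfrac12\sum_j\big( (d_\Gamma(\til{t})^{-1})^{ej}\,\d\til{q}^j + \til{q}^j\,\d(d_\Gamma(\til{t})^{-1})^{ej}\big)$, with $\d$ the total de Rham differential on the compactified total space, and likewise expanding $\Phi$ in the variables $q^i=\sum_j (M_\Gamma(\til{t})^{-1})^i_j\til{q}^j$ and $q^{|\Gamma_0|}$, the integrand becomes a finite sum of terms of the form (monomial in $\til{q}$) $\times$ (wedge of fiber $1$-forms $\d\til{q}$) $\times$ (base form), where every base form is a wedge of $\d q^{|\Gamma_0|}$, $\d\til{t}$, and differentials of the entries $(M_\Gamma(\til{t})^{-1})^i_j,(d_\Gamma(\til{t})^{-1})^{ej}$, all of which extend smoothly to the compactification by Lemma \ref{extended functions}. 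Performing the fiber integral $\int_{(\R^{d'})^{|\Gamma_0|-1}}$ keeps only the top fiber-degree terms and reduces each to a Gaussian moment integral
\beqn
\int_{(\R^{d'})^{|\Gamma_0|-1}} P(\til{q})\, e^{-\frac14 \til{q}^{\,T} A(\til{t})\,\til{q}}\;\Phi\!\left(\textstyle\sum_{j}(M_\Gamma(\til{t})^{-1})^i_j\,\til{q}^j,\; q^{|\Gamma_0|}\right)\prod_{j}\d^{d'}\til{q}^j ,
\eeqn
where $P$ is a polynomial whose coefficients are bounded on the compactification.

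The key estimate, and the one step that does the real work, is the uniform positivity of $A(\til{t})$ furnished by Lemma \ref{bound for exponent}: $A(\til{t}) \geqslant \tfrac{1}{c_\Gamma}\tmop{Id}$ for all $\til{t}\in(0,\sqrt L]^{|\Gamma_1|}$, so $e^{-\frac14 \til{q}^{\,T} A(\til{t})\,\til{q}} \leqslant e^{-\frac{1}{4 c_\Gamma}|\til{q}|^2}$, a bound independent of $\til{t}$. Since $\Phi$ is smooth with compact support, it and all its derivatives are bounded; differentiating the integrand any number of times in $q^{|\Gamma_0|}$ hits only $\Phi$, while differentiating in the corner coordinates $\til{t}$ hits $\Phi$ (through the $q^i$, producing factors $\partial_{\til{t}}(M_\Gamma^{-1})\,\til{q}$) and the Gaussian (producing factors $(\partial_{\til{t}} A)\,\til{q}\otimes\til{q}$). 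By Lemma \ref{extended functions} the matrices $\partial_{\til{t}}^{\alpha}(M_\Gamma^{-1})$ and $\partial_{\til{t}}^{\alpha} A$ are bounded on the compactification, so each derivative only multiplies the integrand by a polynomial in $\til{q}$ with bounded coefficients. As $P(\til{q})\,e^{-\frac{1}{4 c_\Gamma}|\til{q}|^2}$ is integrable over $(\R^{d'})^{|\Gamma_0|-1}$ for every polynomial $P$, and the bound is uniform in $\til{t}$, the dominated convergence theorem justifies differentiation under the integral to all orders, so \eqref{eqn:tsquare1} extends to a smooth form on $\R^{d'}_{q^{|\Gamma_0|}} \times \widetilde{\left[0,\sqrt L\right]^{|\Gamma_1|}}$.

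Finally, for the continuity statement I would observe that all of these estimates are linear in $\Phi$. On any compact subset of the base, a given $C^k$-seminorm of the output form is bounded by $\int_{(\R^{d'})^{|\Gamma_0|-1}}(\text{polynomial})\,e^{-\frac{1}{4 c_\Gamma}|\til{q}|^2}\,\d\til{q}$ times a fixed $C^{k'}$-seminorm of $\Phi$ over a compact set determined by $\tmop{supp}\,\Phi$ and the bounded factors $M_\Gamma^{-1}$. Since $\Omega^\bu_c((\R^{d'})^{|\Gamma_0|})$ carries the usual inductive-limit (LF) topology, it suffices to verify continuity on each subspace of forms supported in a fixed compact set, where the preceding inequality is precisely the required seminorm bound into the Fr\'echet space $\Omega^\bu(\R^{d'}_{q^{|\Gamma_0|}} \times \widetilde{\left[0,\sqrt L\right]^{|\Gamma_1|}})$. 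The main obstacle throughout is ensuring the Gaussian decay does not degenerate as $\til{t}$ approaches the boundary of the compactification, where $M_\Gamma^{-1}$ degenerates; this is resolved exactly by the uniform lower bound of Lemma \ref{bound for exponent} together with the smooth extension of the coefficient functions from Lemma \ref{extended functions}.
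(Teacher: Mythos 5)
Your proposal is correct and follows essentially the same route as the paper's own proof: rewrite the pulled-back integrand as a Gaussian in $\til{q}$ with quadratic form $M_\Gamma(\til{t})^{-1}M_\Gamma(\til{t}^2)M_\Gamma(\til{t})^{-1}$ times a polynomial whose coefficients extend smoothly (hence are bounded, with all derivatives) to the compactification by Lemma \ref{extended functions}, then dominate every derivative of the integrand uniformly via the lower bound of Lemma \ref{bound for exponent} and conclude by dominated convergence. Your explicit LF-topology argument for the continuity claim is a welcome addition --- the paper's proof only records the seminorm bound \eqref{inequality1} and leaves that final step implicit.
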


\begin{proof}
  By lemma \ref{extended functions} and $\left( \ref{222} \right)$, we know
  $\til{W}_{\tmop{top}} ((\Gamma, n), \Phi_{\tmop{top}})$ can be extended to
  $(\R^{d'})^{| \Gamma_0 |} \times \widetilde{\left[ 0, \sqrt{L}
  \right]^{| \Gamma_1 |}}$.
  
  Now, we note $\til{W}_{\tmop{top}} ((\Gamma, n), \Phi_{\tmop{top}})$ can
  be rewrited in the following form:
\beqn
\til{W}_{\tmop{top}} ((\Gamma, n), \Phi_{\tmop{top}}) = e^{- \frac{1}{4} \underset{i = 1}{\overset{| \Gamma_0 | - 1}{\sum}}
   \til{q}^i \cdot (M_{\Gamma} (\til{t})^{- 1}
    M_{\Gamma} (\til{t}^2) M_{\Gamma} (\til{t})^{- 1})_{i j}
    \til{q}^j} P (\til{q}^i) \omega
    \wedge \Phi_{\tmop{top}},
\eeqn
  where $\omega$ is a constant differential form and $P$ is a polynomial with coefficients
  in terms of $(M_{\Gamma} (t)^{- 1})^{i j}$, $(d_{\Gamma}^{- 1})^{e j}$ and
  their derivatives.
  
  Since $\widetilde{\left[ 0, \sqrt{L} \right]^{| \Gamma_1 |}}$ is compact,
  the values of $(M_{\Gamma} (\til{t})^{- 1})^i_j$, $(d_{\Gamma} 
  (\til{t})^{- 1})^{e j}$ and their derivatives with respect to $\{
  \til{t}_e \}$ are bounded. So if $D$ is any order $|
  D |$ differential operator on $(\R^{d'})^{| \Gamma_0 |} \times
  \widetilde{\left[ 0, \sqrt{L} \right]^{| \Gamma_1 |}}$, we have
\begin{align*} D (\til{W}_{\tmop{top}} ((\Gamma, n), \Phi_{\tmop{top}})) &= D \left( e^{- \frac{1}{4} \underset{i = 1}{\overset{| \Gamma_0 | -
    1}{\sum}} \underset{j = 1}{\overset{| \Gamma_0 | - 1}{\sum}}
    \til{q}^i \cdot (M_{\Gamma} (\til{t})^{- 1}
    M_{\Gamma} (\til{t}^2) M_{\Gamma} (\til{t})^{- 1})_{i j}
    \til{q}^j} P (\til{q}^i) \omega
    \wedge \Phi_{\tmop{top}} \right)\\
    & = e^{- \frac{1}{4} \underset{i = 1}{\overset{| \Gamma_0 | - 1}{\sum}}
    \underset{j = 1}{\overset{| \Gamma_0 | - 1}{\sum}}
    \til{q}^i \cdot (M_{\Gamma} (\til{t})^{- 1}
    M_{\Gamma} (\til{t}^2) M_{\Gamma} (\til{t})^{- 1})_{i j}
    \til{q}^j} P' (\til{q}^i) \omega
    \wedge D' \Phi_{\tmop{top}}
\end{align*}
  where $D'$ is another differential operator on $(\R^{d'})^{|
  \Gamma_0 |} \times \widetilde{\left[ 0, \sqrt{L} \right]^{| \Gamma_1 |}}$,
  $P' (\til{q}^i)$ is a polynomial with coefficients in
  terms of $(M_{\Gamma} (t)^{- 1})^{i j}$, $(d_{\Gamma}^{- 1})^{e j}$ and
  their derivatives.
  
  Finally, by lemma $\ref{bound for exponent}$, we can get
  \begin{equation}
    | D (\til{W}_{\tmop{top}} ((\Gamma, n), \Phi_{\tmop{top}})) | \leqslant
    C' e^{- \frac{1}{4 c_{\Gamma}} \underset{i = 1}{\overset{| \Gamma_0 | -
    1}{\sum}} \til{q}^i \cdot \til{q}^i}
    | P' (\til{q}^i) | \max \{ | D' \Phi_{\tmop{top}} | \},
    \label{inequality1}
  \end{equation}
  where $C'$ is some constant. Since the right hand side of $\left(
  \ref{inequality1} \right)$ is absolute integrable and independent of
  $\R^{d'}_{q^{| \Gamma_0 |}} \times
    \widetilde{\left[ 0, \sqrt{L} \right]^{| \Gamma_1 |}}$, the expression
  \beqn
t_{\tmop{square}}^{\ast}
\int_{(\R^{d'})^{| \Gamma_0 | - 1}} \til{W}_{\tmop{top}} ((\Gamma,
n), \Phi_{\tmop{top}})
\eeqn
  is smooth over $\R^{d'}_{q^{| \Gamma_0 |}} \times
    \widetilde{\left[ 0, \sqrt{L} \right]^{| \Gamma_1 |}}$ by dominated convergence theorem.
\end{proof}

Now, we can consider Feynman graph integrals on $(\R^{d'} \times \C^d)^{| \Gamma_0 |}$.

\begin{prop}
  \label{extension theorem2}Given a connected decorated graph $(\Gamma, n)$
  without self-loops, the following statements are true:
  \begin{enumerate}
    \item The map
\beqn
\Omega^{\bu}_c ((\R^{d'} \times \C^d)^{|
      \Gamma_0 |}) \to \Omega^{\bu} \left( (\R^{d'} \times \C^d)_{\left( w^{| \Gamma_0 |},
      q^{| \Gamma_0 |} \right)} \times \widetilde{\left[ 0,
      \sqrt{L} \right]^{| \Gamma_1 |}} \right)
\eeqn
    which sends $\Phi$ to the differential form
    \beqn
    t_{\tmop{square}}^{\ast} \int_{(\R^{d'} \times \C^d)^{| \Gamma_0 | - 1}} \til{W} ((\Gamma, n), \Phi)
      \eeqn
    is a continuous map between topological vector spaces.
    
    \item The map
\beqn
\Omega^{\bu}_c ((\R^{d'} \times \C^d)^{|
      \Gamma_0 |}) \to \Omega^{\bu} \left( \widetilde{\left[ 0, \sqrt{L} \right]^{|
      \Gamma_1 |}} \right)
\eeqn
    which sends $\Phi$ to the differential form
    \beqn
    t_{\tmop{square}}^{\ast} \int_{(\R^{d'} \times \C^d)_{\left( w^{| \Gamma_0 |},
      q^{| \Gamma_0 |} \right)}} \int_{(\R^{d'} \times \C^d)^{| \Gamma_0 | - 1}} \til{W} ((\Gamma, n), \Phi)
      \eeqn
    is also a continuous map between topological vector spaces.
  \end{enumerate}
\end{prop}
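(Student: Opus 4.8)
The plan is to run the argument of Proposition~\ref{extension theorem1} simultaneously in the topological variables $q^i$ and the holomorphic variables $w^i$, exploiting the fact that both the topological and the holomorphic Gaussians in \eqref{integral} are governed by the \emph{same} matrix $M_\Gamma(t)$. Concretely, I would extend the coordinate change \eqref{coordinate transformation} to the holomorphic directions by setting
\beqn
w^i = \sum_{j=1}^{|\Gamma_0|-1}(M_\Gamma(\til t)^{-1})^i_j\,\til w^j,\qquad q^i = \sum_{j=1}^{|\Gamma_0|-1}(M_\Gamma(\til t)^{-1})^i_j\,\til q^j,\qquad t_e=\til t_e^2,
\eeqn
for $1\leqslant i\leqslant|\Gamma_0|-1$. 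As in \eqref{111}, this turns $u^e$ and $v^e$ into expressions linear in $\overline{\til w}^j$ and $\til q^j$ with coefficients $\tfrac12(d_\Gamma(\til t)^{-1})^{ej}$, which by Lemma~\ref{extended functions}, together with the $(M_\Gamma(\til t)^{-1})^i_j$ and all their $\til t$-derivatives, extend smoothly across the boundary strata of $\widetilde{[0,\sqrt L]^{|\Gamma_1|}}$.

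First I would rewrite the integrand \eqref{integrand} in these coordinates. Using Lemma~\ref{lem:u}, the combined exponent $-\sum_{i,e}\rho^e_i w^i\cdot u^e-\sum_e v^e\cdot v^e$ becomes
\beqn
-\tfrac12\sum_{i,j}\til w^i\cdot\bigl(M_\Gamma(\til t)^{-1}M_\Gamma(\til t^2)M_\Gamma(\til t)^{-1}\bigr)_{ij}\,\overline{\til w}^j\;-\;\tfrac14\sum_{i,j}\til q^i\cdot\bigl(M_\Gamma(\til t)^{-1}M_\Gamma(\til t^2)M_\Gamma(\til t)^{-1}\bigr)_{ij}\,\til q^j,
\eeqn
while the polynomial factor $\prod_e\prod_i(u^e_i)^{n_{i,e}}$ produced by the holomorphic derivatives becomes a polynomial $P(\overline{\til w})$ with coefficients smooth on $\widetilde{[0,\sqrt L]^{|\Gamma_1|}}$, and $\prod_e\d^d u^e\,\d^{d'}v^e$ becomes a product of antiholomorphic forms $\d\overline{\til w}$ and real forms $\d\til q$ (plus $\d\til t_e$ terms) whose coefficients again extend smoothly. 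Hence the whole integrand extends to a smooth form on $(\C^d\times\R^{d'})^{|\Gamma_0|}\times\widetilde{[0,\sqrt L]^{|\Gamma_1|}}$.

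Next I would establish a uniform integrable bound. Writing $A(\til t):=M_\Gamma(\til t)^{-1}M_\Gamma(\til t^2)M_\Gamma(\til t)^{-1}$, Lemma~\ref{bound for exponent} gives $A(\til t)\geqslant\tfrac1{c_\Gamma}\tmop{Id}$; since $A(\til t)$ is real and symmetric, the induced Hermitian form satisfies $\sum_{i,j}A(\til t)_{ij}\,\til w^i\cdot\overline{\til w}^j\geqslant\tfrac1{c_\Gamma}|\til w|^2$, so both exponents decay like a fixed Gaussian \emph{uniformly} over the compact space $\widetilde{[0,\sqrt L]^{|\Gamma_1|}}$. Then, exactly as in \eqref{inequality1}, for any differential operator $D$ one gets
\beqn
|D(\til W((\Gamma,n),\Phi))|\leqslant C'\,e^{-\frac1{4c_\Gamma}|\til q|^2-\frac1{2c_\Gamma}|\til w|^2}\,|P'(\overline{\til w})|\,\max\{|D'\Phi|\},
\eeqn
with $C'$ a constant and $D'$ another differential operator. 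The right-hand side is absolutely integrable over $(\C^d\times\R^{d'})^{|\Gamma_0|-1}$, so the dominated convergence theorem shows $t_{\tmop{square}}^{\ast}\int_{(\C^d\times\R^{d'})^{|\Gamma_0|-1}}\til W((\Gamma,n),\Phi)$ is a smooth form on $(\C^d\times\R^{d'})_{(w^{|\Gamma_0|},q^{|\Gamma_0|})}\times\widetilde{[0,\sqrt L]^{|\Gamma_1|}}$; restricting $\Phi$ to a fixed compact $K$, the same inequality bounds every seminorm of the output by the continuous seminorm $\max\{|D'\Phi|\}$, giving the continuity in item~(1).

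For item~(2) I would invoke translation invariance: the propagators depend only on differences of vertex coordinates, so after passing to relative coordinates they do not involve $(w^{|\Gamma_0|},q^{|\Gamma_0|})$, and all dependence on the last vertex is carried by $\Phi$. Since $\Phi$ is compactly supported, the form from item~(1) is compactly supported in $(w^{|\Gamma_0|},q^{|\Gamma_0|})$, and integrating a compactly supported smooth form over that vertex yields a smooth form on $\widetilde{[0,\sqrt L]^{|\Gamma_1|}}$ depending continuously on $\Phi$. The main obstacle is the bound in the third step: one must verify that after the coordinate change the \emph{holomorphic} Gaussian remains uniformly bounded below (which is precisely why reusing Lemma~\ref{bound for exponent} for the associated Hermitian form is essential) while simultaneously dominating the polynomial factors $\prod_e\prod_i(u^e_i)^{n_{i,e}}$ created by the holomorphic derivatives and checking that every coefficient appearing in $\d^d u^e$ and $\d^{d'}v^e$ extends smoothly across the boundary of the compactified Schwinger space.
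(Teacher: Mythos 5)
Your reduction of the exponent is correct: with $w^i=\sum_j(M_\Gamma(\til{t})^{-1})^i_j\til{w}^j$, $q^i=\sum_j(M_\Gamma(\til{t})^{-1})^i_j\til{q}^j$, $t_e=\til{t}_e^2$, both Gaussians are governed by $A(\til{t})=M_\Gamma(\til{t})^{-1}M_\Gamma(\til{t}^2)M_\Gamma(\til{t})^{-1}\geqslant\frac{1}{c_\Gamma}\tmop{Id}$. But the step ``the polynomial factor $\prod_e\prod_i(u_i^e)^{n_{i,e}}$ becomes a polynomial in $\overline{\til{w}}$ with coefficients smooth on $\widetilde{[0,\sqrt{L}]^{|\Gamma_1|}}$'' is false, and this is a genuine gap. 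Compute $u^e$ in your coordinates: since $t_e=\til{t}_e^2$,
\[ u^e=\sum_{i,j}\frac{1}{2\til{t}_e^2}\,\rho^e_i(M_\Gamma(\til{t})^{-1})^i_j\,\overline{\til{w}}^j=\frac{1}{2\til{t}_e}\sum_j(d_\Gamma(\til{t})^{-1})^{ej}\,\overline{\til{w}}^j, \]
so each $u^e_i$ carries an uncancelled factor $1/\til{t}_e$, and the coefficients of $\d^d u^e$ are worse (its $\d\til{t}_e$-components involve $1/\til{t}_e^2$). Contrast this with the topological sector, where $v^e=\frac12\sum_j(d_\Gamma(\til{t})^{-1})^{ej}\til{q}^j$ is indeed smooth: the reason is that $v^e$ scales as $t_e^{-1/2}=\til{t}_e^{-1}$ while $u^e$ scales as $t_e^{-1}=\til{t}_e^{-2}$, so the substitution that cures one cannot cure the other. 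Since the Gaussian decay you established is in $\til{w}$, not in $\til{t}$, it cannot absorb poles in $\til{t}$; the integrand therefore does not extend smoothly to the compactification, and the dominated-convergence bound modeled on \eqref{inequality1} fails because its ``constants'' are unbounded near $\partial\widetilde{[0,\sqrt{L}]^{|\Gamma_1|}}$.

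Moreover the defect is not in your particular choice of substitution: for any real $\til{t}$-dependent linear change $w=B\til{w}$, boundedness of the measure coefficients $D^e:=\frac{1}{t_e}\rho^e B$ forces $B^TM_\Gamma(t)B=\sum_e t_e\,(D^e)^T(D^e)\to 0$ as $t\to 0$, so smooth measure coefficients and a uniform lower bound on the holomorphic Gaussian are mutually exclusive. This tension is exactly what makes the purely holomorphic case nontrivial, and it is why the paper does not argue as you do: it proves Proposition \ref{extension theorem1} only for the purely topological sector, quotes the purely holomorphic case $d'=0$ from \cite{wang2024feynman} (where this difficulty is resolved by a finer analysis), and then obtains the mixed case from the K\"unneth isomorphism $\Omega^{\bu}_c((\R^{d'}\times\C^d)^{|\Gamma_0|})\cong\Omega^{\bu}_c((\C^d)^{|\Gamma_0|})\hotimes\Omega^{\bu}_c((\R^{d'})^{|\Gamma_0|})$, continuity of the completed projective tensor product of the two separate maps, and restriction along the diagonal of the two copies of $\widetilde{[0,\sqrt{L}]^{|\Gamma_1|}}$. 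Your item (2) (integration over the last vertex using compact support) agrees with the paper's, but it rests on item (1). To repair your argument you must either import the holomorphic-sector analysis of \cite{wang2024feynman} as a black box---which is essentially the paper's factorization strategy---or redo that analysis; your coordinate change does not accomplish it.
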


\begin{proof}
  
  \begin{enumerate}
    \item Consider first the case $d' = 0$.
    In this case, it is shown in \cite{wang2024feynman} that the map
    \beqn
\Omega^{\bu}_c ((\C^d)^{| \Gamma_0|}) \to 
\Omega^{\bu} \left( \C^d_{w^{|
      \Gamma_0 |}} \times \widetilde{\left[ 0, \sqrt{L} \right]^{| \Gamma_1
      |}} \right) .
\eeqn
which sends $\Phi_{hol}$ to 
\beqn
 t_{\tmop{square}}^{\ast} \int_{(\C^d)^{|
      \Gamma_0 | - 1}} \til{W}_{hol} ((\Gamma, n),
      \Phi_{hol})
      \eeqn
is continuous.

Now, we consider the general case. We only need to prove the Feynman graph integral defines a continuous map from \[\Omega^{\bu} (K \times K')\] to \[\Omega^{\bu} \left( (\R^{d'} \times \C^d)_{\left( w^{| \Gamma_0 |},
      q^{| \Gamma_0 |} \right)} \times \widetilde{\left[ 0,
      \sqrt{L} \right]^{| \Gamma_1 |}} \right),\]
      where $K\subset (\C^d)^{| \Gamma_0 |}$ and $K'\subset (\R^d)^{| \Gamma_0 |}$ are compact subsets. 

    We notice the following K\"unneth isomorphisms between topological vector spaces (see \cite{treves2006topological}):
    \begin{enumeratenumeric}
        \item \[ \Omega^{\bu} (K\times K') 
       \cong \Omega^{\bu}(K) \hat{\otimes}
       \Omega^{\bu}(K'), \]
       where $\hat{\otimes}$ is the complete projective tensor product.
       \item \begin{eqnarray*}
      &  & \Omega^{\bu} \left( (\R^{d'} \times \C^d)_{\left(
      w^{| \Gamma_0 |}, q^{| \Gamma_0 |}
      \right)} \times \widetilde{\left[ 0, \sqrt{L} \right]^{| \Gamma_1 |}}
      \times \widetilde{\left[ 0, \sqrt{L} \right]^{| \Gamma_1 |}} \right)\\
      & = & \Omega^{\bu} \left( \C^d_{w^{| \Gamma_0
      |}} \times \widetilde{\left[ 0, \sqrt{L} \right]^{| \Gamma_1 |}}
      \right) \hat{\otimes} \Omega^{\bu} \left(
      \R^{d'}_{q^{| \Gamma_0 |}} \times
      \widetilde{\left[ 0, \sqrt{L} \right]^{| \Gamma_1 |}} \right).
    \end{eqnarray*}
    \end{enumeratenumeric}

    Combine Proposition \ref{extension theorem1} and the above isomorphisms, we
    have the continuous map between topological vector
    spaces
    \begin{eqnarray*}
\Omega^{\bu}(K\times K') \to  \Omega^{\bu} \left( (\R^{d'} \times \C^d)_{\left( w^{| \Gamma_0 |},
      q^{| \Gamma_0 |} \right)} \times \widetilde{\left[ 0,
      \sqrt{L} \right]^{| \Gamma_1 |}} \times \widetilde{\left[ 0, \sqrt{L}
      \right]^{| \Gamma_1 |}} \right) ,
    \end{eqnarray*}
    which sends $\Phi_{hol} \otimes \Phi_{top}$ to 
    \beqn
    t_{\tmop{square}}^{\ast} \int_{(\C^d)^{|
      \Gamma_0 | - 1}} \til{W}_{\tmop{hol}} ((\Gamma, n),
      \Phi_{\tmop{hol}})
       \otimes t_{\tmop{square}}^{\ast} \int_{(\R^{d'})^{|
      \Gamma_0 | - 1}} \til{W}_{\tmop{top}} ((\Gamma, n),
      \Phi_{\tmop{top}}).
\eeqn
Composition with restriction along the diagonal yields the desired map.
    
    \item     Note that integration over $(\R^{d'} \times \C^d)_{\left(
    w^{| \Gamma_0 |}, q^{| \Gamma_0 |}
    \right)}$ is a continuous
    \beqn
    \Omega^{\bu}_c \left( (\R^{d'} \times \C^d)_{\left( w^{| \Gamma_0 |},
    q^{| \Gamma_0 |} \right)} \times \widetilde{\left[ 0,
    \sqrt{L} \right]^{| \Gamma_1 |}} \right) \to \Omega^{\bu} \left(
    \widetilde{\left[ 0, \sqrt{L} \right]^{| \Gamma_1 |}} \right)
    \eeqn
    Since the support of $t_{\tmop{square}}^{\ast}
    \int_{(\R^{d'})^{| \Gamma_0 | - 1}} \til{W}_{\tmop{top}}
    ((\Gamma, n), \Phi_{\tmop{top}})$ is compact, our claim follows from
    proposition \ref{extension theorem1}.
  \end{enumerate}
\end{proof}

We arrive at the main result and a strengthening of theorem \ref{thm:renormalization}.

\begin{cor}
  \label{extension theorem3}Given a connected decorated graph $(\Gamma, n)$
  without self-loops, and $\Phi \in \Omega^{\bu}_c ((\R^{d'} \times \C^d)^{| \Gamma_0 |})$, we have $W_0^L ((\Gamma, n), \Phi) < +
  \infty$. Furthermore, the map
  \[ \Omega^{\bu}_c ((\R^{d'} \times \C^d)^{|
     \Gamma_0 |}) \rightarrow \C \]
  which sends $\Phi$ to $W_0^L ((\Gamma, n), \Phi)$ defines a distribution on $(\R^{d'} \times \C^d)^{| \Gamma_0
  |}$.
\end{cor}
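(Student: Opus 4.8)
The plan is to treat this corollary as a clean consequence of Proposition \ref{extension theorem2}, with the compactness of $\widetilde{[0, \sqrt{L}]^{|\Gamma_1|}}$ doing all the work; the genuine analytic labor has already been carried out in establishing the smooth extension there.

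First I would rewrite the regularized integral as an iterated integral. Writing $W_{\varepsilon}^L$ in its rephrased form and integrating over the spatial factors first (legitimate, since the earlier proposition shows $\int_{(\R^{d'}\times\C^d)^{|\Gamma_0|}} \til W((\Gamma,n),\Phi)$ is a smooth form on $(0,L]^{|\Gamma_1|}$), one obtains
\[
W_{\varepsilon}^L((\Gamma, n), \Phi) = \int_{[\varepsilon, L]^{|\Gamma_1|}} \int_{(\R^{d'}\times\C^d)^{|\Gamma_0|}} \til W((\Gamma, n), \Phi).
\]
Applying the square change of coordinates $t_e = \til t_e^2$, under which $t_{\tmop{square}}$ restricts to an orientation-preserving diffeomorphism of interiors so that pullback commutes with integration, the domain $[\varepsilon, L]^{|\Gamma_1|}$ becomes $[\sqrt{\varepsilon}, \sqrt{L}]^{|\Gamma_1|}$ and I get
\[
W_{\varepsilon}^L((\Gamma, n), \Phi) = \int_{[\sqrt{\varepsilon}, \sqrt{L}]^{|\Gamma_1|}} \eta_{\Phi}, \qquad \eta_{\Phi} := t_{\tmop{square}}^{\ast}\int_{(\R^{d'}\times\C^d)^{|\Gamma_0|}} \til W((\Gamma, n), \Phi).
\]

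Now I would invoke Proposition \ref{extension theorem2}(2), which says precisely that $\eta_{\Phi}$ extends to a smooth form on the compactification $\widetilde{[0, \sqrt{L}]^{|\Gamma_1|}}$. Since this is a compact manifold with corners whose boundary strata (where some $\til t_e \to 0$) have positive codimension and hence measure zero for the top-degree part, the exhausting regions $[\sqrt{\varepsilon}, \sqrt{L}]^{|\Gamma_1|}$ fill the interior as $\varepsilon \to 0$, and the smooth (hence bounded) extended integrand permits dominated convergence. Therefore
\[
W_0^L((\Gamma, n), \Phi) = \lim_{\varepsilon \to 0} W_{\varepsilon}^L((\Gamma, n), \Phi) = \int_{\widetilde{[0, \sqrt{L}]^{|\Gamma_1|}}} \eta_{\Phi} < +\infty,
\]
which establishes the first assertion. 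For the distribution claim, the assignment $\Phi \mapsto \eta_{\Phi}$ is continuous by Proposition \ref{extension theorem2}(2), and integration over the compact space $\widetilde{[0, \sqrt{L}]^{|\Gamma_1|}}$ is a continuous linear functional on $\Omega^{\bu}(\widetilde{[0, \sqrt{L}]^{|\Gamma_1|}})$ (it is bounded by a $C^0$-seminorm of the top-degree component times the volume). The composite $\Phi \mapsto W_0^L((\Gamma, n), \Phi)$ is thus continuous and linear, so it defines a distribution on $(\R^{d'}\times\C^d)^{|\Gamma_0|}$.

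The only genuinely nontrivial point beyond quoting Proposition \ref{extension theorem2} is the exchange of limit and integral in the third display, i.e.\ verifying that no contribution is lost along the boundary of the compactification. This is exactly what the compactification buys us: because $\eta_{\Phi}$ extends smoothly and the boundary faces are lower-dimensional, $\lim_{\varepsilon \to 0}\int_{[\sqrt{\varepsilon}, \sqrt{L}]^{|\Gamma_1|}} \eta_{\Phi}$ coincides with the integral over the whole compact space. I expect this measure-zero/dominated-convergence step to be the main (though routine) obstacle, the remainder being bookkeeping and the continuity of the constituent maps.
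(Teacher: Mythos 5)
Your proposal is correct and takes essentially the same route as the paper: both reduce the corollary to Proposition \ref{extension theorem2} combined with the continuity of integration over the compact space $\widetilde{\left[0,\sqrt{L}\right]^{|\Gamma_1|}}$, so that $\Phi \mapsto W_0^L((\Gamma,n),\Phi)$ is a composite of continuous linear maps. The only difference is that you make explicit the identification of $\lim_{\varepsilon\to 0} W_{\varepsilon}^L((\Gamma,n),\Phi)$ with the integral of the smoothly extended form over the compactification (via the measure-zero boundary strata and dominated convergence), a step the paper's two-line proof leaves implicit.
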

\begin{proof}
Integration over $\widetilde{\left[ 0, \sqrt{L} \right]^{| \Gamma_1 |}}$
  is a continuous map
  \beqn
  \Omega^{\bu} \left( \widetilde{\left[ 0, \sqrt{L}
  \right]^{| \Gamma_1 |}} \right) \to \C .
  \eeqn
  The result is then a consequence of proposition \ref{extension theorem2}.
\end{proof}

\subsection{Anomaly integrals}

In this subsection, we will prove that the quantum master equation is satisfied when the spacetime $\R^{d'} \times \C^d$ satisfies $d'>1$. The idea can be described as follows:
\begin{enumerate}
    \item By Corollary \ref{weight QME}, the quantum master equation can be verified if $(\ref{weight QME2})$ holds for any stable graph $\gamma$. We observe that the first term in $(\ref{weight QME2})$ has the following form (after integration by parts):\[
    \sum_{n\in dec(\Gamma)}W_0^L ((\Gamma, n), (\bar{\partial} + \d_{deRham})\Phi_{n}),
    \]
    where $\Gamma$ is a directed graph whose underlying undirected graph is obtained from $\gamma$ by removing all external edges. Here, $dec(\Gamma)$ is the set of decorations of $\Gamma$, and $\Phi_{n}\in \Omega^{\bu}_c ((\C^d \times
  \R^{d'})^{| \Gamma_0 |})$ is nonzero for only finitely many $n\in dec(\Gamma)$.
  \item We show that $ (\bar{\partial} + \d_{deRham}) W_0^L ((\Gamma, n), -)$ equals some integrals over the boundaries of compactified Schwinger spaces. Moreover, we will give some precise descriptions of these integrals.
  \item The integrals over some parts of boundaries of compactified Schwinger spaces will cancel the second term and the third term in $(\ref{weight QME2})$, and we will prove that the integrals over other parts (which are called anomaly integrals) are zero.
\end{enumerate}

\subsubsection{Integrals over boundaries of compactified Schwinger spaces.}

\begin{prop}
  \label{differential transfer}Given a connected decorated graph $(\Gamma, n)$
  without self-loops, and $\Phi \in \Omega^{\bu}_c ((\C^d \times
  \R^{d'})^{| \Gamma_0 |})$, $L > 0$, the following equality holds:
  \begin{eqnarray}
    &  & ((\bar{\partial} + \d_{deRham}) W_0^L) ((\Gamma, n), -) \nonumber\\
    & = & (- 1)^{| \Gamma_1 |}  \int_{\partial
    \widetilde{\left[ 0, \sqrt{L} \right]^{| \Gamma_1 |}}}t_{\tmop{square}}^{\ast} \int_{(\R^{d'} \times \C^d)^{| \Gamma_0 |}} \til{W} ((\Gamma, n), -) .
    \label{QME} 
  \end{eqnarray}
\end{prop}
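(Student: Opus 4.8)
The plan is to realise $W_0^L((\Gamma,n),\Phi)$ as an honest integral of a smooth form over the compact space $\widetilde{\left[0,\sqrt{L}\right]^{|\Gamma_1|}}$ and then to trade the spacetime differential $\dbar+\d_{deRham}$ for the Schwinger differential $\d_t$ using the closedness of the propagator, finishing with Stokes' theorem on $\widetilde{\left[0,\sqrt{L}\right]^{|\Gamma_1|}}$. Concretely, Proposition \ref{extension theorem2} tells us that $t_{\tmop{square}}^{\ast}\int_{(\C^d\times\R^{d'})^{|\Gamma_0|}}\til{W}((\Gamma,n),\Phi)$ is a smooth differential form on $\widetilde{\left[0,\sqrt{L}\right]^{|\Gamma_1|}}$, and $W_0^L((\Gamma,n),\Phi)$ is its integral over this compact space. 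The operator $\dbar+\d_{deRham}$ acts on the distribution $W_0^L((\Gamma,n),-)$ in the usual way, by transferring the differential onto the compactly supported test form $\Phi$ up to a sign.

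The key algebraic input is that the integrand is closed for the total differential on the product of spacetime with Schwinger space. Write $\mathcal P=\prod_{e\in\Gamma_1}\partial^{n(e)}_{z^{h(e)}}P_{t_e}$ for the product of differentiated propagators. By Lemma \ref{closeness of propagator} each factor satisfies $(\d_t+\dbar+\d_{deRham})P_{t_e}=0$; since the constant-coefficient holomorphic operators $\partial^{n(e)}$ commute with $\dbar$, $\d_{deRham}$ and $\d_t$, and since each $P_{t_e}$ depends on the spacetime variables only through the differences at its two endpoints, the Leibniz rule gives $(\d_t+\dbar+\d_{deRham})\mathcal P=0$ on the product. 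Hence $\mathcal P\wedge(\dbar+\d_{deRham})\Phi$ can be rewritten, up to sign, as a spacetime-exact term plus $\pm\,\d_t(\mathcal P\wedge\Phi)$, because $(\dbar+\d_{deRham})\mathcal P=-\d_t\mathcal P$.

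I would then integrate over $(\C^d\times\R^{d'})^{|\Gamma_0|}$. The spacetime-exact term vanishes by Stokes' theorem together with the compact support of $\Phi$, while $\d_t$ commutes past the spatial integral; this produces $\pm\,\d_t\,t_{\tmop{square}}^{\ast}\int_{(\C^d\times\R^{d'})^{|\Gamma_0|}}\til{W}((\Gamma,n),\Phi)$. Integrating this over $\widetilde{\left[0,\sqrt{L}\right]^{|\Gamma_1|}}$ and applying Stokes' theorem on this compact manifold-with-corners converts $\int\d_t(-)$ into the boundary integral over $\partial\widetilde{\left[0,\sqrt{L}\right]^{|\Gamma_1|}}$, which is exactly the right-hand side of \eqref{QME}. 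The prefactor $(-1)^{|\Gamma_1|}$ is recovered by tracking the $|\Gamma_1|$ factors of $\d t_e$ and the Koszul and ordering conventions already used in rewriting $W_\varepsilon^L$.

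The main obstacle is analytic rather than formal: because $\mathcal P$ is distributional along the diagonals and as $t\to 0$, one must justify differentiating under the integral, integrating by parts in spacetime, and applying Stokes in Schwinger space. This is precisely what Proposition \ref{extension theorem2} underwrites, by guaranteeing that after integrating out spacetime the resulting form extends smoothly across the entire boundary of $\widetilde{\left[0,\sqrt{L}\right]^{|\Gamma_1|}}$. The role of the compactification is to ensure that no spurious boundary contributions arise beyond those recorded by $\partial\widetilde{\left[0,\sqrt{L}\right]^{|\Gamma_1|}}$.
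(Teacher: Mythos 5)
Your proposal is correct and follows essentially the same route as the paper's proof: transfer $\dbar+\d_{deRham}$ onto the test form, use Lemma \ref{closeness of propagator} together with the Leibniz rule to replace $(\dbar+\d_{deRham})\mathcal{P}$ by $-\d_t\mathcal{P}$, kill the spacetime-exact term by compact support, and apply Stokes' theorem on the compact manifold-with-corners $\widetilde{\left[0,\sqrt{L}\right]^{|\Gamma_1|}}$, with Proposition \ref{extension theorem2} supplying the smooth extension needed to justify the analytic steps. The only detail left schematic is the explicit sign bookkeeping, which the paper carries out via the intermediate factors $(-1)^{(d+d'-1)|\Gamma_1|}$ and $(-1)^{(d+d')|\Gamma_1|+d'|\Gamma_0|}$, but your identification of where the signs come from is accurate.
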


\begin{proof}
  By Lemma \ref{closeness of propagator} and Leibniz's rule, we have
  \begin{eqnarray*}
     & &t_{\tmop{square}}^{\ast} \til{W} ((\Gamma, n), (\bar{\partial} +
    \d_{deRham}) \Phi)\\ & = & (- 1)^{(d + d' - 1) | \Gamma_1 |} (\bar{\partial} + \d_{deRham}) \left( \prod_{e \in \Gamma_1 } \partial_{z^{h (e)}}^{n (e)}
    P_{\til{t}_e^2} (z^{h (e)} - z^{t
    (e)}, \bar{z}^{h (e)} - \bar{z}^{t
    (e)}, x^{h (e)} - x^{t (e)}) \wedge
    \Phi \right)\\
    & + & (- 1)^{(d + d' - 1) | \Gamma_1 |} d_{\til{t}} (\til{W}
    ((\Gamma, n), \Phi))
  \end{eqnarray*}
  therefore, by Stokes' theorem, we have
  \begin{eqnarray*}
& &((\bar{\partial} + \d_{deRham}) W_0^L) ((\Gamma, n), \Phi)\\
    & = & {(- 1)^{(d + d') | \Gamma_1 | + d' | \Gamma_0 |}}  W_0^L ((\Gamma,
    n), (\bar{\partial} + \d_{deRham}) \Phi)\\
    & = & (- 1)^{| \Gamma_1 |} \int_{\widetilde{\left[ 0, \sqrt{L} \right]^{|
    \Gamma_1 |}}} d_{\til{t}} t_{\tmop{square}}^{\ast} \int_{(\R^{d'} \times \C^d)^{| \Gamma_0 |}} \til{W} ((\Gamma, n), \Phi)\\
    & = & (- 1)^{| \Gamma_1 |} \int_{\partial \widetilde{\left[ 0, \sqrt{L}
    \right]^{| \Gamma_1 |}}} t_{\tmop{square}}^{\ast} \int_{(\R^{d'} \times \C^d)^{| \Gamma_0 |}} \til{W} ((\Gamma, n), \Phi) .
  \end{eqnarray*}
  
\end{proof}

We have shown that $(\bar{\partial} + \d_{deRham}) W_0^L ((\Gamma, n), -)$ is given by
integrals over the boundaries of Schwinger spaces. By Proposition \ref{boundary
description}, the boundaries $\partial \widetilde{\left[ 0, \sqrt{L}
\right]^{| \Gamma_1 |}}$ has the following decomposition:
\[ \partial \widetilde{\left[ 0, \sqrt{L} \right]^{| \Gamma_1 |}} = \left( -
   \partial_0 \widetilde{\left[ 0, \sqrt{L} \right]^{| \Gamma_1 |}} \right)
   \cup \partial_{\sqrt{L}} \widetilde{\left[ 0, \sqrt{L} \right]^{| \Gamma_1
   |}}, \]
where $\partial_0 \widetilde{\left[ 0, \sqrt{L} \right]^{| \Gamma_1
|}}$($\partial_{\sqrt{L}} \widetilde{\left[ 0, \sqrt{L} \right]^{| \Gamma_1
|}}$) describe the boundary components near the origin(away from the origin).
More precisely, we have
\[ \left\{\begin{array}{l}
     \partial_0 \widetilde{\left[ 0, \sqrt{L} \right]^{| \Gamma_1 |}} =
     \bigcup_{\Gamma' \subseteq \Gamma} (- 1)^{\sigma (\Gamma'_1, \Gamma_1 \backslash
     \Gamma_1')} \widetilde{(0, + \infty)^{| \Gamma_1' |}} /\R^+ \times
     \widetilde{\left[ 0, + \sqrt{L} \right]^{| \Gamma_1 \backslash \Gamma_1'
     |}}\\
     \partial_{\sqrt{L}} \widetilde{\left[ 0, \sqrt{L} \right]^{| \Gamma_1 |}}
     = \bigcup_{e \in \Gamma_1} (- 1)^{| e |} \left\{ \sqrt{L} \right\} \times
     \widetilde{\left[ 0, \sqrt{L} \right]^{| (\Gamma / e)_1 |}}
   \end{array}\right. . \]
The integrals over $\partial_{\sqrt{L}} \widetilde{\left[ 0, \sqrt{L}
\right]^{| \Gamma_1 |}}$ reflects global aspects of Feynman graph integrals,
and it will not affect the construction of factorization algebras on
$\R^{d'} \times \C^d$. The integrals over $\partial_0
\widetilde{\left[ 0, \sqrt{L} \right]^{| \Gamma_1 |}}$ reflects local aspects
of Feynman graph integrals, and we need some vanishing results for such
integrals to construct factorization algebras. 

\subsubsection{Anomaly integrals and Laman graphs}

\

In this subsection, we will describe integrals over $\partial_0
\widetilde{\left[ 0, \sqrt{L} \right]^{| \Gamma_1 |}}$ in details and relate
them to Laman graphs (see Definition \ref{Laman graph1}). Given a subgraph
$\Gamma' \subseteq \Gamma$, we notice the following equality:
\begin{eqnarray*}
  &  & \int_{\widetilde{(0, + \infty)^{| \Gamma_1' |}} /\R^+ \times
  \widetilde{\left[ 0, + \sqrt{L} \right]^{| \Gamma_1 \backslash \Gamma_1'
  |}}} t_{\tmop{square}}^{\ast} \int_{(\R^{d'} \times \C^d)^{|
  \Gamma_0 |}} \til{W} ((\Gamma, n), -)\\
  & = & \int_{\partial C_{\Gamma'_1} \cap \widetilde{\left[ 0, \sqrt{L}
  \right]^{| \Gamma_1 |}}} t_{\tmop{square}}^{\ast} \int_{(\R^{d'} \times \C^d)^{| \Gamma_0 |}} \til{W} ((\Gamma, n), -) .
\end{eqnarray*}
This equality will provide us a way to perform concrete computations. We first
concentrate on the case $\Gamma' = \Gamma$. In this case, $\partial C_{\Gamma_1}
\cap \widetilde{\left[ 0, \sqrt{L} \right]^{| \Gamma_1 |}} = \partial
C_{\Gamma_1}$. In the following, we use $O_{(\Gamma, n)}$ to denote the
following integral:
\[ O_{(\Gamma, n)} (\Phi) = \int_{\partial C_{\Gamma_1}}
   t_{\tmop{square}}^{\ast} \int_{(\R^{d'} \times \C^d)^{|
   \Gamma_0 |}} \til{W} ((\Gamma, n), \Phi), \text{\quad for } \Phi \in
   \Omega^{\bu}_c ((\R^{d'} \times \C^d)^{| \Gamma_0 |}) . \]
\begin{prop}
  \label{laman anomaly}Given a decorated graph\footnote{We do not assume
  $\Gamma$ is connected here.} $(\Gamma, n)$ without self-loops, and $\Phi$ is a compactly supported differential form on $(\R^{d'} \times \C^d)^{| \Gamma_0 |})$. The
  following statements hold:
  \begin{enumeratenumeric}
    \item If $\Gamma$ is not a Laman graph (see Definition \ref{Laman
    graph1}), we have
    \[ O_{(\Gamma, n)} (\Phi) = \int_{\partial C_{\Gamma_1}}
       t_{\tmop{square}}^{\ast} \int_{(\R^{d'} \times \C^d)^{|
       \Gamma_0 |}} \til{W} ((\Gamma, n), \Phi) = 0. \]
    \item If $\Gamma$ is a Laman graph,
    \begin{eqnarray}
      O_{(\Gamma, n)} (\Phi) & = & \int_{\partial C_{\Gamma_1}}
      t_{\tmop{square}}^{\ast} \int_{(\R^{d'} \times \C^d)^{|
      \Gamma_0 |}} \til{W} ((\Gamma, n), \Phi) \nonumber\\
      & = & \int_{(\R^{d'} \times \C^d)_{\left(
     w^{| \Gamma_0 |}, q^{| \Gamma_0 |}
      \right)}} (D \Phi) \left|_{w^i = 0,
      q^i = 0, i \neq | \Gamma_0 |} \right., 
    \end{eqnarray}
    where $D$ is a differential operator with constant coefficients, which
    only involves holomorphic derivatives. The order of $D$ is less than or
    equal to
    \[ \underset{e \in \Gamma_1, 1 \leqslant i \leqslant d}{\overset{}{\sum}}
       n_{i, e} + | \Gamma_1 | - 1. \]
  \end{enumeratenumeric}
\end{prop}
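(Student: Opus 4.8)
The plan is to work directly in the radial blow-up of Schwinger space at the origin, where $\partial C_{\Gamma}$ acquires an explicit parametrization. Near the origin I would introduce a radial coordinate $\lambda \geqslant 0$ and angular coordinates $\sigma_e$ by writing $\til{t}_e = \lambda\, \sigma_e$, so that $\partial C_{\Gamma}$ is cut out by $\{\lambda = 0\}$ and is a compact manifold with corners of dimension $|\Gamma_1| - 1$. By \eqref{222} and Lemma \ref{extended functions}, after the coordinate change \eqref{coordinate transformation} the integrand $\til{W}((\Gamma, n), \Phi)$ is a smooth form whose coefficients are polynomials in the entries $(M_{\Gamma}(\til{t})^{-1})^i_j$, $(d_{\Gamma}(\til{t})^{-1})^{ej}$ and their $\til{t}$-derivatives. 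Since $M_{\Gamma}(\til{t}^2) = \rho^T \til{t}^2 \rho$, these entries are homogeneous rational functions of $\til{t}$ of known weight, so their restriction to $\{\lambda = 0\}$ is governed entirely by those weights; the existence of the limit is guaranteed by the bound of Lemma \ref{bound for exponent} together with the compactness established in Propositions \ref{extension theorem1} and \ref{extension theorem2}. I would first record the reduction that a disconnected $\Gamma$ is never Laman: the extra translation symmetry of each component enlarges the kernel computed in the proof of Proposition \ref{vanishing result1}, forcing the pullback of $\prod_e \d^d u^e \, \d^{d'} v^e$ to vanish, so that $O_{(\Gamma, n)}(\Phi) = 0$.

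Next I would carry out the power count. Rescaling the relative configuration variables $w^i$, $q^i$ (for $i \neq |\Gamma_0|$) so that the Gaussian in \eqref{integral} becomes independent of $\lambda$, I track the total power of $\lambda$ produced by the Jacobian of this rescaling, by the holomorphic factors $\prod_e (u^e_i)^{n_{i,e}}$ and the top form $\prod_e \d^d u^e \, \d^{d'} v^e$ of \eqref{eqn:uv}, and by the homogeneity weights of the extended functions. The restriction to $\{\lambda = 0\}$ survives only in the marginal term, where the net power of $\lambda$ is zero; a strictly positive power restricts to zero on the face. This yields a trichotomy matching Definition \ref{Laman graph1} exactly: if some connected subgraph (including $\Gamma$ itself) is too dense, i.e. satisfies the inequality of Proposition \ref{vanishing result1}, then $\til{W}$ vanishes identically; if $\Gamma$ is too sparse globally, the net $\lambda$-power is positive and the restriction to $\partial C_{\Gamma}$ vanishes; and if $\Gamma$ is critically constrained with no over-dense subgraph — precisely a Laman graph — the marginal (logarithmic) contribution survives. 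This establishes statement (1) and isolates the nonzero case.

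In the Laman case I would then extract the operator $D$. After restricting to $\{\lambda = 0\}$ and performing the Gaussian integral over the rescaled relative variables against the Taylor expansion of $\Phi$ about the fully collapsed configuration $w^i = 0$, $q^i = 0$ ($i \neq |\Gamma_0|$), the top form $\prod_e \d^d u^e \, \d^{d'} v^e$ pulled back to the configuration saturates the antiholomorphic $\bar{w}^i$ and real $q^i$ directions, so the only surviving Gaussian moments pair the antiholomorphic Schwinger duals $u^e$ with holomorphic derivatives of $\Phi$ via \eqref{111}. This produces a finite sum of constant-coefficient holomorphic differential operators applied to $\Phi$, evaluated on the diagonal and integrated over the last vertex $(w^{|\Gamma_0|}, q^{|\Gamma_0|})$, which is exactly the asserted formula. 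For the order bound I would add the $\sum_{e,i} n_{i,e}$ explicit derivatives carried by the $\partial_{z^{h(e)}}^{n(e)}$, each power of $u^e_i$ converting into one holomorphic derivative of $\Phi$ through a Gaussian moment, to the at most $|\Gamma_1| - 1$ further derivatives contributed by integrating the angular form over the $(|\Gamma_1|-1)$-dimensional face $\partial C_{\Gamma}$.

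The main obstacle is the power count of the second paragraph. One must show that $(M_{\Gamma}(\til{t})^{-1})$ and $(d_{\Gamma}(\til{t})^{-1})$ admit well-defined homogeneous limits on $\{\lambda = 0\}$ and then match the net power of $\lambda$ against the rigidity count of \emph{every} subgraph simultaneously, so that the survival of the boundary term is equivalent to the simultaneous conditions ``no subgraph over-constrained, whole graph critically constrained.'' Translating this condition into the graph-theoretic definition of a Laman graph in Definition \ref{Laman graph1}, and checking that the two failure modes of Laman-ness are annihilated by genuinely different mechanisms (pointwise vanishing of $\til{W}$ via Proposition \ref{vanishing result1} versus a positive scaling power restricting to zero), is the conceptual crux of the argument.
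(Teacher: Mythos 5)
Your strategy---radial blow-up coordinates, rescaling the relative variables so that the Gaussian becomes scale-free, a power count in the radial variable, and extraction of $D$ from Gaussian moments---is essentially the paper's proof (the paper implements the rescaling through the asymptotic expansion of Gaussian integrals in $\til{\rho}=\sqrt{\sum_e \til{t}_e^{\,2}}$). The genuine gap is your treatment of disconnected graphs, which the proposition explicitly includes and which is needed downstream, since the subgraphs $\Gamma'$ labelling the strata of $\partial_0 \widetilde{\left[0,\sqrt{L}\right]^{|\Gamma_1|}}$ need not be connected. You claim that when $\Gamma$ has $m \geq 2$ components, the per-component translation symmetries enlarge the kernel of the differential of $g$ enough to force $g^{\ast}\bigl(\prod_e \d^d u^e \, \d^{d'} v^e\bigr) = 0$, hence $\til{W} \equiv 0$ pointwise. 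This is false. The map $g_{\Gamma}$ is a product of the component maps $g_{\Gamma^j}$ in disjoint sets of variables, so the pullback is (up to sign) the product of the component pullbacks, and it is nonzero whenever each component is dense enough: for instance, with $d=0$, $d'=2$ and $\Gamma$ a disjoint union of two triangles, each factor is nonzero. The dimension count confirms that no contradiction arises: each component contributes $d+d'+1$ kernel vectors (its translations and its own scaling), so the rank bound is $(d+d')|\Gamma_0| + |\Gamma_1| - m(d+d'+1)$, which for $m$ Laman components equals the target dimension $(d+d')|\Gamma_1|$ exactly---vanishing is not forced, and indeed does not occur.

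The correct mechanism, and the one the paper uses, is that disconnectedness is detected by the power count rather than pointwise. When every component satisfies the density inequality \eqref{Laman condition1}, the weighted Laplacian $M_{\Gamma}(\til{t}^{\,2})$ acquires exactly $m-1$ zero eigenvalues; after choosing a base vertex in each component, the Gaussian rescaling produces only $\til{\rho}^{(2d+d')(|\Gamma_0|-m)}$ (the flat directions contribute no scaling), and the lowest power of $\til{\rho}$ surviving on the face $\partial C_{\Gamma}$ is bounded below by $m-1$: strictly positive for $m \geq 2$, and equal to zero only when $m=1$ and the Laman equality \eqref{Laman condition2} holds. Your power count as sketched presupposes a non-degenerate Gaussian, so as written it does not apply to disconnected graphs at all; once you insert the degenerate count $|\Gamma_0|-m$ and per-component base points, your argument becomes the paper's. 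The remainder of your proposal---pointwise vanishing via Proposition \ref{vanishing result1} when some connected subgraph is over-dense, the extraction of $D$ in the Laman case, the saturation argument showing that no $q$-derivatives survive to act on $\Phi$ (so $D$ is purely holomorphic), and the order bound---agrees in substance with the paper's proof.
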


\begin{proof}
  Let's assume $\Gamma$ has $m$ connected components $\Gamma^1, \Gamma^2,
  \ldots, \Gamma^m$. If some $\Gamma^i$ does not satisfy $\left( \ref{Laman
  condition1} \right)$, by Proposition \ref{vanishing result1}, we have
  $\til{W} ((\Gamma, n), \Phi) = 0$. Let's assume all $\Gamma^i$ satisfy
  $\left( \ref{Laman condition1} \right)$, then we have
  \[ (d + d') | \Gamma_0 | \geqslant (d + d' - 1) | \Gamma_1 | + m (d + d' +
     1) . \]
  In this case, $M_{\Gamma} (\til{t}^2)$ is not invertible, but we can
  easily know that it has exactly $m - 1$ zero eigenvectors by studying the
  weighted Laplacian of each subgraph. For each connected component, we choose a base vertex and choose an ordering such that the last $m$ vertices are base vertices. 

For each connected component $\Gamma^{j}$, assume the base vertex is $v_{j}$. we use the following coordinates:
  \[ \left\{\begin{array}{l}
     z^i = w^i +  w^{v_{j}}, x^i = q^i + 
     q^{v_{j}} \text{\qquad} i \in \Gamma^{j} \\
     z^{v_{j}} = w^{v_{j}},
     x^{v_{j}} = q^{v_{j}}
   \end{array}\right. \]
  We still use $\det (M_{\Gamma}
  (\til{t}^2))$ to denote the product of its nonzero eigenvalues.
  
  Note $t_{\tmop{square}}^{\ast} \til{W} ((\Gamma, n), \Phi)$ has the
  following form:
  \begin{eqnarray*}
    &  & e^{- \overset{| \Gamma_0 | - m}{\underset{i = 1}{\sum}} \underset{j
    = 1}{\overset{| \Gamma_0 | - m}{\sum}} \left( \frac{1}{2}
    w^i \cdot M_{\Gamma} (\til{t}^2)_{i j}
    \wbar^j + \frac{1}{4} q^i \cdot
    M_{\Gamma} (\til{t}^2)_{i j} q^j \right)} \times\\
    &  & \left( \overset{d (| \Gamma_0 | - m)}{\sum_{j = 0}}
    \underset{\underset{0 \leqslant (d + d') | \Gamma_1 | - j - k \leqslant |
    \Gamma_1 |}{0 \leqslant k \leqslant d' (| \Gamma_0 | - m),}}{\sum} P_{j,
    k} (\wbar, x, d \wbar, d
    \vec{x}) \cdot \frac{\widetilde{\til{P}}_{j, k} (d
    \til{t})}{\til{P}_{j, k} ( \til{t})} \right) \wedge \Phi,
  \end{eqnarray*}
  where $P_{j, k} (\wbar, x, d
  \wbar, d x)$ is a homogenous polynomial with
  variables $\{ \wbar_i, x, d
  \wbar_i, d x_i \}_{i \in \Gamma_0}$, its degree
  with respect to $\wbar, x, d
  \wbar, d x$ are
  \[ \underset{e \in \Gamma_1, 1 \leqslant i \leqslant d}{\overset{}{\sum}}
     n_{i, e} + d | \Gamma_1 | - j , \quad d' | \Gamma_1 | - k, \quad j, \quad k \]
  respectfully. $\til{P}_{j, k} ( \til{t})$ is a homogenous polynomial with
  variables $\{ \til{t}_e \}_{e \in \Gamma_1}$, its degree is
  \[ 2 \left( \underset{e \in \Gamma_1, 1 \leqslant i \leqslant
     d}{\overset{}{\sum}} n_{i, e} \right) + (3 d + 2 d') | \Gamma_1 | - j -
     k. \]
  $\widetilde{\til{P}}_{j, k} (d \til{t})$ is a homogenous polynomial with
  variables $\{ d \til{t}_e \}_{e \in \Gamma_1}$, its degree is
  \[ (d + d') | \Gamma_1 | - j - k. \]
  Using coordinates ${\left\{ \til{\rho} = \sqrt{\underset{e \in
  \Gamma}{\sum} \til{t}_e^2}, \til{\xi}_e =
  \frac{\til{t}_e}{\sqrt{\underset{e \in \Gamma}{\sum} \til{t}_e^2}}
  \right\}_{e \in \Gamma_1}} $, we can rewrite $t_{\tmop{square}}^{\ast}
  \til{W} ((\Gamma, n), \Phi)$ as
  \begin{eqnarray*}
    &  & e^{- \frac{1}{\til{\rho}^2} \overset{| \Gamma_0 | - m}{\underset{i
    = 1}{\sum}} \underset{j = 1}{\overset{| \Gamma_0 | - m}{\sum}} \left(
    \frac{1}{2} w^i \cdot M_{\Gamma} (\til{\xi}^2)_{i j}
    \wbar^j + \frac{1}{4} q^i \cdot
    M_{\Gamma} (\til{\xi}^2)_{i j} q^j \right)} \times\\
    &  & \til{\rho}^{- 2 \left( \underset{e \in \Gamma_1, 1 \leqslant i
    \leqslant d}{\overset{}{\sum}} n_{i, e} \right) - (3 d + 2 d') | \Gamma_1
    | + j + k} \times\\
    &  & \left( \overset{d (| \Gamma_0 | - m)}{\sum_{j = 0}}
    \underset{\underset{0 \leqslant (d + d') | \Gamma_1 | - j - k \leqslant |
    \Gamma_1 |}{0 \leqslant k \leqslant d' (| \Gamma_0 | - m),}}{\sum} P_{j,
    k} (\wbar, x, d \wbar, d
    x) \cdot \frac{\widetilde{\til{P}}_{j, k} (d (\til{\rho}
    \til{\xi}))}{\til{P}_{j, k} ( \til{\xi})} \right) \wedge \Phi,
  \end{eqnarray*}

  Note $d \til{\rho} |_{\partial C_{\Gamma_1}} = 0 \nobracket$, we have
  \begin{eqnarray*}
    & & \left( t_{\tmop{square}}^{\ast} \int_{(\R^{d'} \times \C^d)^{| \Gamma_0 |}} \til{W} ((\Gamma, n), \Phi) \right) |
    \nobracket_{\til{\rho} = 0}
    = \left( \til{\rho}^{- 2 \left( \underset{e \in \Gamma_1, 1
    \leqslant i \leqslant d}{\overset{}{\sum}} n_{i, e} \right) - (2 d + d') |
    \Gamma_1 |} \times \right.\\
    &  & \int_{(\R^{d'} \times \C^d)^{| \Gamma_0 |}} e^{-
    \frac{1}{\til{\rho}^2} \overset{| \Gamma_0 | - m}{\underset{i =
    1}{\sum}} \underset{j = 1}{\overset{| \Gamma_0 | - m}{\sum}} \left(
    \frac{1}{2} w^i \cdot M_{\Gamma} (\til{\xi}^2)_{i j}
    \wbar^j + \frac{1}{4} q^i \cdot
    M_{\Gamma} (\til{\xi}^2)_{i j} q^j \right)} \times\\
    &  & \left. \left( \overset{d (| \Gamma_0 | - m)}{\sum_{j = 0}}
    \underset{\underset{0 \leqslant (d + d') | \Gamma_1 | - j - k \leqslant |
    \Gamma_1 |}{0 \leqslant k \leqslant d' (| \Gamma_0 | - m),}}{\sum} P_{j,
    k} (\wbar, x, d \wbar, d
    \vec{x}) \cdot \frac{\widetilde{\til{P}}_{j, k} (d
    (\til{\xi}))}{\til{P}_{j, k} ( \til{\xi})} \right) \wedge \Phi
    \right) | \nobracket_{\til{\rho} = 0} .
  \end{eqnarray*}
  
  We have the following asymptotic formula for Gaussian type integral(see
  {\cite[Section 2.3]{Duistermaat2011}} for a proof) when $\til{\rho}
  \rightarrow 0$:
  
  \begin{eqnarray*}
   & & \til{\rho}^{- 2 \left( \underset{e \in \Gamma_1, 1 \leqslant i
    \leqslant d}{\overset{}{\sum}} n_{i, e} \right) - (2 d + d') | \Gamma_1 |}
    \int_{(\R^{d'} \times \C^d)^{| \Gamma_0 | - m}} e^{-
    \frac{1}{\til{\rho}^2} \overset{| \Gamma_0 | - m}{\underset{i =
    1}{\sum}} \underset{j = 1}{\overset{| \Gamma_0 | - m}{\sum}} \left(
    \frac{1}{2} w^i \cdot M_{\Gamma} (\til{\xi}^2)_{i j}
    \wbar^j + \frac{1}{4}q^i \cdot
    M_{\Gamma} (\til{\xi}^2)_{i j} q^j \right)} \times\\
    &  & \left( \overset{d (| \Gamma_0 | - m)}{\sum_{j = 0}}
    \underset{\underset{0 \leqslant (d + d') | \Gamma_1 | - j - k \leqslant |
    \Gamma_1 |}{0 \leqslant k \leqslant d' (| \Gamma_0 | - m),}}{\sum} P_{j,
    k} (\wbar, \vec{x}, d \wbar, d
    \vec{x}) \cdot \frac{\widetilde{\til{P}}_{j, k} (d
    (\til{\xi}))}{\til{P}_{j, k} ( \til{\xi})} \right) \wedge \Phi\\
    & \sim & (2 \pi)^{\left( d + \frac{1}{2} d' \right) (| \Gamma_0 | - m)}
    \til{\rho}^{- 2 \left( \underset{e \in \Gamma_1, 1 \leqslant i \leqslant
    d}{\overset{}{\sum}} n_{i, e} \right) - (2 d + d') | \Gamma_1 |} \times
    \til{\rho}^{(2 d + d') | \Gamma_0 | - m (2 d + d')}\\
    &  & \frac{1}{\det (M_{\Gamma} (\til{\xi}^2))^{d + \frac{1}{2} d'}}
    e^{\til{\rho}^2 \overset{| \Gamma_0 | - m}{\underset{i = 1}{\sum}}
    \underset{j = 1}{\overset{| \Gamma_0 | - m}{\sum}} \left(
    \overset{d}{\underset{k = 1}{\sum}} \frac{1}{2} M_{\Gamma}
    (\til{\xi}^2)^{- 1}_{i j} \partial_{w^i_k} \partial_{\til{w}_k^j} +
    \overset{d'}{\underset{k = 1}{\sum}} M_{\Gamma} (\til{\xi}^2)^{- 1}_{i
    j} \partial_{q^i_k} \partial_{q_k^j} \right)} \iota_{\omega^{\vee}}\\
    &  & \left( \left( \overset{d (| \Gamma_0 | - m)}{\sum_{j = 0}}
    \underset{\underset{0 \leqslant (d + d') | \Gamma_1 | - j - k \leqslant |
    \Gamma_1 |}{0 \leqslant k \leqslant d' (| \Gamma_0 | - m),}}{\sum} P_{j,
    k} (\wbar, \vec{x}, d \wbar, d
    \vec{x}) \cdot \frac{\widetilde{\til{P}}_{j, k} (d
    (\til{\xi}))}{\til{P}_{j, k} ( \til{\xi})} \right) \wedge \Phi
    \right)\\
    &  & \left|_{w^i = 0, q^i = 0, i \leqslant |
    \Gamma_0 |-m} \right.\\
    & = & (2 \pi)^{\left( d + \frac{1}{2} d' \right) (| \Gamma_0 | - m)}
    \til{\rho}^{- 2 \left( \underset{e \in \Gamma_1, 1 \leqslant i \leqslant
    d}{\overset{}{\sum}} n_{i, e} \right) + (2 d + d') (| \Gamma_0 | - |
    \Gamma_1 |) - m (2 d + d')} \times\\
    &  & \frac{1}{\det (M_{\Gamma} (\til{\xi}^2))^{d + \frac{1}{2} d'}}
    \sum_{l = \left( \underset{e \in \Gamma_1, 1 \leqslant i \leqslant
    d}{\overset{}{\sum}} n_{i, e} \right) + d | \Gamma_1 | - j + \frac{1}{2}
    (d' | \Gamma_1 | - k)}^{+ \infty}\\
    &  & \frac{1}{l!} \til{\rho}^{2 l} \left( \overset{| \Gamma_0 | -
    m}{\underset{i = 1}{\sum}} \underset{j = 1}{\overset{| \Gamma_0 | -
    m}{\sum}} \left( \overset{d}{\underset{k = 1}{\sum}} \frac{1}{2}
    M_{\Gamma} (\til{\xi}^2)^{- 1}_{i j} \partial_{w^i_k}
    \partial_{\til{w}_k^j} + \overset{d'}{\underset{k = 1}{\sum}} M_{\Gamma}
    (\til{\xi}^2)^{- 1}_{i j} \partial_{q^i_k} \partial_{q_k^j} \right)
    \right)^l \iota_{\omega^{\vee}}\\
    &  & \left( \left( \overset{d (| \Gamma_0 | - m)}{\sum_{j = 0}}
    \underset{\underset{0 \leqslant (d + d') | \Gamma_1 | - j - k \leqslant |
    \Gamma_1 |}{0 \leqslant k \leqslant d' (| \Gamma_0 | - m),}}{\sum} P_{j,
    k} (\wbar, \vec{x}, d \wbar, d
    \vec{x}) \cdot \frac{\widetilde{\til{P}}_{j, k} (d
    (\til{\xi}))}{\til{P}_{j, k} ( \til{\xi})} \right) \wedge \Phi
    \right)\\
    &  & \left|_{w^i = 0, q^i = 0, i \leqslant
    | \Gamma_0 |-m} \right.,
  \end{eqnarray*}
  where $\iota_{\omega^{\vee}}$ is the interior product (or the contraction) of a differential form with a
  nonzero constant top polyvector field $\omega^{\vee}$ on $(\R^{d'} \times \C^d)^{| \Gamma_0 | - m}$, i.e. it maps a volume form to
  its coefficient.
  
  We notice that for the top form part of above formula, the lowest power of
  $\til{\rho}$ is
  \begin{eqnarray*}
    &  & (2 d + d') (| \Gamma_0 | - | \Gamma_1 |) - m (2 d + d') + 2 d |
    \Gamma_1 | - 2 j + d' | \Gamma_1 | - k\\
    & = & (2 d + d') | \Gamma_0 | - m (2 d + d') - 2 j - k\\
    & = & (2 d + d') | \Gamma_0 | - m (2 d + d') - j + | \Gamma_1 | - 1 - (d
    + d') | \Gamma_1 |\\
    & \geqslant & (2 d + d') | \Gamma_0 | - m (2 d + d') - d (| \Gamma_0 | -
    m) + | \Gamma_1 | - 1 - (d + d') | \Gamma_1 |\\
    & = & (d + d') | \Gamma_0 | - (d + d' - 1) | \Gamma_1 | - m (d + d') - 1\\
    &\geqslant & (d + d') | \Gamma_0 | - (d + d' - 1) | \Gamma_1 | - m (d + d') - m\\
    &\geqslant& 0,
  \end{eqnarray*}
  where the equality holds only when $m=1$ and the last inequality is an equality. So if $\Gamma$ is not a (connected) Laman graph,
  \[ \int_{\partial C_{\Gamma_1}} t_{\tmop{square}}^{\ast} \int_{(\R^{d'} \times \C^d)^{| \Gamma_0 |}} \til{W} ((\Gamma, n), \Phi) =
     0. \]

  When $\Gamma$ is indeed a Laman graph, we have
  \[ \int_{\partial C_{\Gamma_1}} t_{\tmop{square}}^{\ast} \int_{(\R^{d'} \times \C^d)^{| \Gamma_0 |}} \til{W} ((\Gamma, n), \Phi) =
     \int_{(\R^{d'} \times \C^d)_{\left( w^{|
     \Gamma_0 |}, q^{| \Gamma_0 |} \right)}} (D \Phi)
     \left|_{w^i = 0, i \neq | \Gamma_0 |} \right., \]
  where
  \begin{eqnarray*}
    D & = & (2 \pi)^{\left( d + \frac{1}{2} d' \right) (| \Gamma_0 | - m)}
    \int_{\partial C_{\Gamma_1}} \left( \frac{1}{\det (M_{\Gamma}
    (\til{\xi}^2))^{d + \frac{1}{2} d'} l_0 !} \right.\\
    &  & \left( \overset{| \Gamma_0 | - 1}{\underset{i = 1}{\sum}}
    \underset{j = 1}{\overset{| \Gamma_0 | - 1}{\sum}} \left(
    \overset{d}{\underset{k = 1}{\sum}} \frac{1}{2} M_{\Gamma}
    (\til{\xi}^2)^{- 1}_{i j} \partial_{w^i_k} \partial_{\til{w}_k^j} +
    \overset{d'}{\underset{k = 1}{\sum}} M_{\Gamma} (\til{\xi}^2)^{- 1}_{i
    j} \partial_{q^i_k} \partial_{q_k^j} \right) \right)^{l_0}
    \iota_{\omega^{\vee}}\\
    &  & \left( \left( P_{d (| \Gamma_0 | - 1), d' (| \Gamma_0 | - 1)}
    (\wbar, x, d \wbar, d x)
    \cdot \frac{\widetilde{\til{P}}_{d (| \Gamma_0 | - 1), d' (|
    \Gamma_0 | - 1)} (d (\til{\xi}))}{\til{P}_{d (| \Gamma_0 | - 1), d' (|
    \Gamma_0 | - 1)} ( \til{\xi})} \right) \wedge \Phi \right) \left|_{w^i = 0, q^i = 0, i \neq |
    \Gamma_0 |} \right.,
  \end{eqnarray*}
  where
  \[ l_0 = \left( \underset{e \in \Gamma_1, 1 \leqslant i \leqslant
     d}{\overset{}{\sum}} n_{i, e} \right) + \left( d + \frac{1}{2} d' \right)
     (| \Gamma_1 | - | \Gamma_0 | + 1) . \]
  In above formula of $D$, we can argue any terms that contain
  $\partial_{q^i_k} \Phi$ is zero: the order of
  \[ \left( \overset{| \Gamma_0 | - 1}{\underset{i = 1}{\sum}} \underset{j =
     1}{\overset{| \Gamma_0 | - 1}{\sum}} \left( \overset{d}{\underset{k =
     1}{\sum}} \frac{1}{2} M_{\Gamma} (\til{\xi}^2)^{- 1}_{i j}
     \partial_{w^i_k} \partial_{\til{w}_k^j} + \overset{d'}{\underset{k =
     1}{\sum}} M_{\Gamma} (\til{\xi}^2)^{- 1}_{i j} \partial_{q^i_k}
     \partial_{q_k^j} \right) \right)^{l_0} \]
  is $2 l_0$, to kill the variables $\{ \wbar_i \}_{i \in
  \Gamma_0}$ in $P_{j, k}$, the order of each term with respect to $\{
  w_i, \wbar_i \}_{i \in \Gamma_0}$ is at least
  \[ 2 \left( \underset{e \in \Gamma_1, 1 \leqslant i \leqslant
     d}{\overset{}{\sum}} n_{i, e} + d (| \Gamma_1 | - | \Gamma_0 | + 1)
     \right) . \]
  So the order of each term with respect to $\{ \vec{q}_i \}_{i \in \Gamma_0}$
  is at most
  \[ d' (| \Gamma_1 | - | \Gamma_0 | + 1) . \]
  Note the degree of $P_{j, k}$ with respect to variables $\{ \vec{q}_i \}_{i
  \in \Gamma_0}$ is also
  \[ d' (| \Gamma_1 | - | \Gamma_0 | + 1), \]
  there's no additional orders left to act on $\Phi$.
\end{proof}

Now, we consider the general case. Given a subgraph $\Gamma' \subseteq
\Gamma$, we use $\Gamma \backslash \Gamma'$ to denote the subgraph with the
vertices $\Gamma_0$, and the edges $\Gamma_1 \backslash
\Gamma'_1$.\footnote{Note $\Gamma \backslash \Gamma'$ is different from
$\Gamma / \Gamma'$. The latter is the graph obtained by contracting $\Gamma'$ to a single vertex.} We have the following result:

\begin{prop}\label{differential of Feynman graph}
  Given a connected decorated graph $(\Gamma, n)$ without self-loops, \ $\Phi
  \in \Omega^{\bu}_c ((\R^{d'} \times \C^d)^{| \Gamma_0 |})$,
  and a subgraph $\Gamma'$, we have
  \begin{enumerate}
    \item The integral
    \begin{eqnarray*}
      &  & \int_{\widetilde{(0, + \infty)^{| \Gamma_1' |}} /\R^+
      \times \widetilde{\left[ 0, \sqrt{L} \right]^{| \Gamma_1 \backslash
      \Gamma_1' |}}} t_{\tmop{square}}^{\ast} \int_{(\R^{d'} \times \C^d)^{| \Gamma_0 |}} \til{W} ((\Gamma, n), \Phi)\\
      & = & \sum_{n' \in \tmop{dec} (\Gamma / \Gamma')} C_{n'} W_0^L ((\Gamma
      / \Gamma', n'), D_{n'} \Phi),
    \end{eqnarray*}
    where $\tmop{dec} (\Gamma / \Gamma')$ is the set of all decorations of
    $\Gamma / \Gamma'$, and $C_{n'} = 0$ for all but finite many $n' \in
    \tmop{dec} (\Gamma / \Gamma')$, $D_{n'}$ are differential operators with constant coefficients, which only involve holomorphic derivatives.
    \item If $O_{(\Gamma', n |_{\Gamma'} \nobracket)} (\til{\Phi}) = 0$ for
    any $\til{\Phi} \in \Omega^{\bu}_c ((\R^{d'} \times \C^d)^{| \Gamma_0 |})$,
    \[ \int_{\widetilde{(0, + \infty)^{| \Gamma_1' |}} /\R^+ \times
       \widetilde{\left[ 0, \sqrt{L} \right]^{| \Gamma_1 \backslash \Gamma_1'
       |}}} t_{\tmop{square}}^{\ast} \int_{(\R^{d'} \times \C^d)^{| \Gamma_0 |}} \til{W} ((\Gamma, n), \Phi) = 0. \]
    \item When $L \rightarrow + \infty$,
    \[ \lim_{L \rightarrow + \infty} \int_{\widetilde{(0, + \infty)^{|
       \Gamma_1' |}} /\R^+ \times \widetilde{\left[ 0, \sqrt{L}
       \right]^{| \Gamma_1 \backslash \Gamma_1' |}}} t_{\tmop{square}}^{\ast}
       \int_{(\R^{d'} \times \C^d)^{| \Gamma_0 |}} \til{W}
       ((\Gamma, n), -) \]
    exists as a distribution.
  \end{enumerate}
\end{prop}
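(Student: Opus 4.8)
The plan is to exploit the product structure of the collapsing boundary face recorded in the decomposition of $\partial_0\widetilde{\left[0,\sqrt{L}\right]^{|\Gamma_1|}}$. The domain $\widetilde{[0,+\infty]^{|\Gamma_1'|}}/\R^+ \times \widetilde{\left[0,\sqrt{L}\right]^{|\Gamma_1\backslash\Gamma_1'|}}$ is precisely the locus where the Schwinger parameters of the edges of $\Gamma'$ collapse, that is their common scale $\til\rho$ tends to $0$, while the remaining parameters stay at finite scale. First I would introduce adapted coordinates separating the vertices internal to $\Gamma'$ from those of the quotient graph $\Gamma/\Gamma'$, choosing a base vertex in each connected component of $\Gamma'$, exactly as in the change of variables used to prove Proposition \ref{laman anomaly}.

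Second, I would integrate out the $\Gamma'$-data first. On the collapsing face the Gaussian weight localizes the relative positions of the $\Gamma'$-vertices, and integrating over these positions together with the projectivized parameters in $\widetilde{[0,+\infty]^{|\Gamma_1'|}}/\R^+$ reproduces the local computation of Proposition \ref{laman anomaly} applied to $(\Gamma', n|_{\Gamma'})$. The outcome is a constant-coefficient holomorphic differential operator $D$ inserted at the collapsed vertex, of order at most $\sum_{e\in\Gamma_1',\,1\leqslant i\leqslant d} n_{i,e} + |\Gamma_1'| - 1$. By the Leibniz rule the derivatives produced by $D$ act on the product of the remaining external propagators (the edges of $\Gamma/\Gamma'$) and on $\Phi$: derivatives landing on an external propagator modify its decoration, producing a new decoration $n'$ of $\Gamma/\Gamma'$, while those landing on $\Phi$ assemble into the operators $D_{n'}$. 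Collecting terms yields the claimed expression $\sum_{n'} C_{n'} W_0^L((\Gamma/\Gamma',n'), D_{n'}\Phi)$, and the finiteness of the set of nonzero $C_{n'}$ is immediate from the order bound on $D$, since only finitely many decorations arise from derivatives of bounded total order.

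For part (2), the operator extracted in the previous step is by construction the very operator whose existence is asserted by Proposition \ref{laman anomaly} for $(\Gamma', n|_{\Gamma'})$, hence it is the data computing $O_{(\Gamma', n|_{\Gamma'})}$. If $\Gamma'$ is not a Laman graph this contribution already vanishes by Proposition \ref{vanishing result1}; and if $O_{(\Gamma', n|_{\Gamma'})}$ vanishes identically as a functional, then, testing against arbitrary $\til\Phi$, the constant-coefficient operator $D$ it defines must act trivially, so every product $C_{n'}D_{n'}$ vanishes and the boundary integral is zero. For part (3), part (1) reduces the claim to the existence of the $L\to+\infty$ limit of each term $W_0^L((\Gamma/\Gamma',n'), D_{n'}\Phi)$ in a finite sum. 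Here I would compactify the full Schwinger interval $[0,+\infty]$ rather than $\left[0,\sqrt{L}\right]$: the Gaussian decay of the heat-kernel propagator guarantees that the integrand extends smoothly to the boundary face at infinity, so by the same dominated-convergence and compactness argument underlying Corollary \ref{extension theorem3} the integral converges to the finite distribution $W_0^{+\infty}((\Gamma/\Gamma',n'),D_{n'}\Phi)$, with continuity in $\Phi$ inherited from that corollary.

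The step I expect to be the main obstacle is the precise identification in the second paragraph: proving rigorously that the leading asymptotics on the collapsing face factor as (local operator from $\Gamma'$) times (integrand of the contracted graph $\Gamma/\Gamma'$), with no cross terms surviving to $\til\rho = 0$. This rests on the block degeneration of the weighted Laplacian $M_\Gamma(\til t^2)$ as the $\Gamma'$-parameters go to zero; one must verify that the relevant Schur complement reproduces the weighted Laplacian of $\Gamma/\Gamma'$, and that the power-counting in $\til\rho$, as in the inequality chain of Proposition \ref{laman anomaly}, isolates exactly the Laman contribution of $\Gamma'$ while annihilating the remainder.
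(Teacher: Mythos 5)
Your proposal takes essentially the same route as the paper: integrate out the $\Gamma'$-data first, identify the result with $O_{(\Gamma', n|_{\Gamma'})}$, invoke Proposition \ref{laman anomaly} to replace it by a constant-coefficient holomorphic differential operator at the collapsed vertex, and distribute that operator by the Leibniz rule over the propagators of $\Gamma/\Gamma'$ (producing the decorations $n'$) and over $\Phi$ (producing the $D_{n'}$), with finiteness of the sum coming from the order bound. The one place you diverge is the step you flag as the main obstacle, and it is worth recording that in the paper's formulation this obstacle does not exist. Since propagators are assigned edge by edge, one has the \emph{exact} identity
\[ \til{W} ((\Gamma, n), \Phi) \; = \; \pm\, \til{W} \left( (\Gamma', n|_{\Gamma'}),\ \til{W} ((\Gamma \backslash \Gamma', n|_{\Gamma \backslash \Gamma'}), \Phi) \right), \]
i.e.\ the complementary propagators are simply absorbed into the test form. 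After Fubini, the boundary integral is \emph{literally} $O_{(\Gamma', n|_{\Gamma'})}$ evaluated on the compactly supported form $\til{W} ((\Gamma \backslash \Gamma', n|_{\Gamma \backslash \Gamma'}), \Phi)$, for each fixed value of the remaining Schwinger parameters. All Gaussian asymptotics, degeneration of the weighted Laplacian, and power counting are therefore already contained in Proposition \ref{laman anomaly}; no Schur-complement analysis and no control of cross terms is needed. This formulation also makes part (2) a tautology --- the hypothesis says precisely that the functional being applied vanishes --- whereas your version detours through the unnecessary and slightly-too-strong claim that the operator $D$ must ``act trivially.''

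Two smaller corrections. First, vanishing for non-Laman $\Gamma'$ is part (1) of Proposition \ref{laman anomaly}, not Proposition \ref{vanishing result1}: the latter only covers subgraphs violating the inequality \eqref{Laman condition1}, while a non-Laman graph may satisfy every inequality and merely fail the equality \eqref{Laman condition2}. Second, in part (3) the heat-kernel propagator does not decay in a Gaussian manner on the large-$t$ face: at a fixed spacetime point the factor $e^{-|x|^2/4t_e}$ tends to $1$ as $t_e \to \infty$, and what actually saves the limit is the power-law decay of order $t_e^{-(d + d'/2) - 1}$ of the $\d t_e$-component of the propagator (the only component surviving extraction of the top form in the Schwinger parameters). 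This is integrable at $t_e = +\infty$, so dominated convergence against the compactly supported $\Phi$, combined with UV finiteness and continuity in $\Phi$ from Corollary \ref{extension theorem3}, yields the existence of $\lim_{L \to \infty} W_0^L ((\Gamma/\Gamma', n'), D_{n'}\Phi)$ for each of the finitely many terms produced by part (1); your conclusion is correct, but the stated reason is not.
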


\begin{proof}
  From Proposition \ref{laman anomaly}, we have
  \begin{eqnarray*}
    &  & \int_{\widetilde{(0, + \infty)^{| \Gamma_1' |}} /\R^+ \times
    \widetilde{\left[ 0, \sqrt{L} \right]^{| \Gamma_1 \backslash \Gamma_1'
    |}}} t_{\tmop{square}}^{\ast} \int_{(\R^{d'} \times \C^d)^{| \Gamma_0 |}} \til{W} ((\Gamma, n), \Phi)\\
    & = & \pm \int_{\widetilde{(0, + \infty)^{| \Gamma_1' |}} /\R^+
    \times \widetilde{\left[ 0, \sqrt{L} \right]^{| \Gamma_1 \backslash
    \Gamma_1' |}}} t_{\tmop{square}}^{\ast} \int_{(\R^{d'} \times \C^d)^{| \Gamma_0 |}}\\
    &  & \til{W} ((\Gamma', n |_{\Gamma'} \nobracket), \til{W} ((\Gamma
    \backslash \Gamma', n |_{\Gamma \backslash \Gamma'} \nobracket), \Phi))\\
    & = & \pm \int_{\widetilde{\left[ 0, \sqrt{L} \right]^{| \Gamma_1
    \backslash \Gamma_1' |}}} \int_{(\R^{d'} \times \C^d)^{|
    (\Gamma / \Gamma')_0 | - 1}} \int_{\widetilde{(0, + \infty)^{| \Gamma_1'
    |}} /\R^+} \int_{(\R^{d'} \times \C^d)^{|
    \Gamma'_0 |}} t_{\tmop{square}}^{\ast}\\
    &  & \til{W} ((\Gamma', n |_{\Gamma'} \nobracket), \til{W} ((\Gamma
    \backslash \Gamma', n |_{\Gamma \backslash \Gamma'} \nobracket), \Phi))\\
    & = & \pm \int_{\widetilde{\left[ 0, \sqrt{L} \right]^{| \Gamma_1
    \backslash \Gamma_1' |}}} \int_{(\R^{d'} \times \C^d)^{|
    (\Gamma / \Gamma')_0 | - 1}}\\
    &  & O_{(\Gamma', n |_{\Gamma'} \nobracket)} (\til{W} ((\Gamma
    \backslash \Gamma', n |_{\Gamma \backslash \Gamma'} \nobracket), \Phi))\\
    & = & \pm \int_{\widetilde{\left[ 0, \sqrt{L} \right]^{| \Gamma_1
    \backslash \Gamma_1' |}}} \int_{(\R^{d'} \times \C^d)^{|
    (\Gamma / \Gamma')_0 | }} t_{\tmop{square}}^{\ast}\\
    &  & (D \til{W} ((\Gamma \backslash \Gamma', n |_{\Gamma \backslash
    \Gamma'} \nobracket), \Phi)) \left|_{\overline{{\triangle}}_{\Gamma'}} \right.,
  \end{eqnarray*}
  Here, \[\overline{{\triangle}}_{\Gamma'}=\{\left. \left( (z^1, x^1),
    (z^2, x^2), \ldots, \left(
    z^{| \Gamma_0 |}, x^{| \Gamma_0 |}
    \right) \right)  \in(\R^{d'} \times \C^d)^{|\Gamma_{0}|}\right|(z^i, x^i)=(z^j, x^j)\text{ if }i,j\in \Gamma'_{0}\},\]
  and $D$ is a differential operator with constant coefficients, which only
  involves holomorphic derivatives. 
  
  By Leibniz' rule, it is easy to
  see that $(D \til{W} ((\Gamma \backslash \Gamma', n |_{\Gamma \backslash
    \Gamma'} \nobracket), \Phi))$ is a linear combination of products of derivatives of propagators (in Schwinger space) and compactly supported smooth differential forms. The restriction of $(D \til{W} ((\Gamma \backslash \Gamma', n |_{\Gamma \backslash
    \Gamma'} \nobracket), \Phi))$ to $\overline{{\triangle}}_{\Gamma'}$ allows us to identify all the vertices in $\Gamma'$ with a single vertex in the quotient graph $\Gamma / \Gamma'$. As a consequence, we have
  \begin{eqnarray*}
    &  & \int_{\widetilde{(0, + \infty)^{| \Gamma_1' |}} /\R^+ \times
    \widetilde{\left[ 0, \sqrt{L} \right]^{| \Gamma_1 \backslash \Gamma_1'
    |}}} t_{\tmop{square}}^{\ast} \int_{(\R^{d'} \times \C^d)^{| \Gamma_0 |}} \til{W} ((\Gamma, n), \Phi)\\
    & = & \sum_{n' \in \tmop{dec} (\Gamma / \Gamma')} C_{n'} W_0^L ((\Gamma /
    \Gamma', n'), D_{n'} \Phi),
  \end{eqnarray*}
where $C_{n'} = 0$ for all but finitely many $n' \in \tmop{dec} (\Gamma /
  \Gamma')$, $D_{n'}$ are differential operators with constant coefficients, which only involve holomorphic derivatives.
  
  Other statements follow easily.
\end{proof}

\subsubsection{Anomaly vanishing results}

From last subsection, we know the anomalies are from local holomorphic
functionals $O_{(\Gamma, n)}$. In this section, we prove two vanishing results
for $O_{(\Gamma, n)}$.

\begin{prop}\label{anomaly free theorem2}
  Given a connected decorated Laman graph $(\Gamma, n)$ without self-loops,
  and $\Phi \in \Omega^{\bu}_c ((\R^{d'} \times \C^d)^{|
  \Gamma_0 |})$. If $d' \geqslant 1$ and the 1st Betti number $h_1 (\Gamma) =
  \dim (H_1 (\Gamma))$ is odd, we have
  \[ O_{(\Gamma, n)} (\Phi) = 0. \]
  \begin{proof}
    Let $H \in \R^{d'}$ be a hyperplane passing through the origin, we
    choose a unit normal vector of $H$ and denote it by $\vec{n}$. Then we
    define a map $r : (\R^{d'})^{| \Gamma_0 |} \rightarrow
    (\R^{d'})^{| \Gamma_0 |}$ by the following formula:
    \begin{eqnarray*}
      &  & r \left( q^1, q^2, \ldots,
      q^{| \Gamma_0 | - 1}, q^{| \Gamma_0 |}
      \right)\\
      & = & \left( q^1 - 2 (q^1 \cdot
      \vec{n}) \vec{n}, q^2 - 2 (q^2 \cdot
      \vec{n}) \vec{n}, \ldots, q^{| \Gamma_0 | - 1} - 2
      \left( q^{| \Gamma_0 | - 1} \cdot \vec{n} \right)
      \vec{n}, q^{| \Gamma_0 |} \right),
    \end{eqnarray*}
    i.e., $r$ is the reflection with respect to a hyperplane through
    $x^{| \Gamma_0 |} .$ From $\left( \ref{integrand}
    \right)$ we can get
    \[ r^{\ast} \til{W} ((\Gamma, n), \Phi) = (- 1)^{| \Gamma_1 |} \til{W}
       ((\Gamma, n), r^{\ast} \Phi) . \]
    Note $r$ will change the orientation of $(\R^{d'} \times \C^d)^{| \Gamma_0 |}$ by a $(- 1)^{1 - | \Gamma_0 |}$ factor,
    we have
    \begin{eqnarray*}
      O_{(\Gamma, n)} (\Phi) & = & \int_{\partial C_{\Gamma_1}}
      t_{\tmop{square}}^{\ast} \int_{(\R^{d'} \times \C^d)^{|
      \Gamma_0 |}} \til{W} ((\Gamma, n), \Phi)\\
      & = & (- 1)^{| \Gamma_1 | - | \Gamma_0 | + 1} \int_{\partial
      C_{\Gamma_1}} t_{\tmop{square}}^{\ast} \int_{(\R^{d'} \times \C^d)^{| \Gamma_0 |}} \til{W} ((\Gamma, n), r^{\ast}
      \Phi)\\
      & = & (- 1)^{h_1 (\Gamma)} O_{(\Gamma, n)} (r^{\ast} \Phi)\\
      & = & (- 1)^{h_1 (\Gamma)} O_{(\Gamma, n)} (\Phi)
    \end{eqnarray*}
    In the last step, we used Proposition \ref{laman anomaly}: since
    $O_{(\Gamma, n)}$ is a local holomorphic functional, it only depends on
    the values of $\Phi$ and holomorphic derivatives on the diagonal. However,
    $\Phi$ and $r^{\ast} \Phi$ have the same values and holomorphic
    derivatives on the diagonal.
    
    Now, it is obvious that $O_{(\Gamma, n)} (\Phi) = 0$ if $h_1 (\Gamma)$ is
    odd.
  \end{proof}
\end{prop}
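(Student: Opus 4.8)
The plan is to exploit a reflection symmetry in the topological directions, which is available precisely because $d' \geq 1$. Since $\Gamma$ is connected without self-loops we have $h_1(\Gamma) = |\Gamma_1| - |\Gamma_0| + 1$, so the parity we wish to use is exactly the parity of $|\Gamma_1| + |\Gamma_0| - 1$; the strategy is to produce a sign $(-1)^{h_1(\Gamma)}$ relating $O_{(\Gamma,n)}(\Phi)$ to itself.

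First I would fix a hyperplane $H \subset \R^{d'}$ through the origin with unit normal $\vec n$, and introduce the involution $r$ of $(\R^{d'} \times \C^d)^{|\Gamma_0|}$ that reflects each relative topological coordinate $q^i$ (for $i \neq |\Gamma_0|$) across $H$ while fixing all holomorphic coordinates and the base point. Inspecting the explicit integrand \eqref{integrand}, the only dependence on the $\R^{d'}$-directions sits in the Gaussian factor $e^{-\sum_{e} v^e \cdot v^e}$ and the top forms $\prod_{e} \d^{d'} v^e$. Because $r$ is an isometry of $\R^{d'}$ it preserves each $v^e \cdot v^e$, whereas each $\d^{d'} v^e$ acquires the determinant $\det(r) = -1$; this yields the transformation law
\[ r^{\ast} \til{W}((\Gamma, n), \Phi) = (-1)^{|\Gamma_1|}\, \til{W}((\Gamma, n), r^{\ast}\Phi). \]

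Next I would change variables by $r$ in the integral defining $O_{(\Gamma,n)}(\Phi)$. Since $r$ reflects each of the $|\Gamma_0| - 1$ relative $\R^{d'}$-factors, it reverses the orientation of the integration domain by $(-1)^{|\Gamma_0|-1}$. Combining this with the sign from the transformation law gives a total factor $(-1)^{|\Gamma_1| + |\Gamma_0| - 1} = (-1)^{h_1(\Gamma)}$, so that
\[ O_{(\Gamma,n)}(\Phi) = (-1)^{h_1(\Gamma)}\, O_{(\Gamma,n)}(r^{\ast}\Phi). \]

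The crucial and most delicate step is to argue that $O_{(\Gamma,n)}(r^{\ast}\Phi) = O_{(\Gamma,n)}(\Phi)$, so that both sides become the same functional of $\Phi$. This is exactly where the structure established in Proposition \ref{laman anomaly} is indispensable: it exhibits $O_{(\Gamma,n)}$ as a local holomorphic functional, namely a constant-coefficient \emph{holomorphic} differential operator $D$ applied to $\Phi$, then restricted to the small diagonal and integrated over the base point. Since $r$ acts only in the topological directions, it commutes with the purely holomorphic operator $D$ and fixes the diagonal pointwise; hence $\Phi$ and $r^{\ast}\Phi$ have identical holomorphic jets along the diagonal, which forces $O_{(\Gamma,n)}(r^{\ast}\Phi) = O_{(\Gamma,n)}(\Phi)$. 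With this equality in hand the identity reads $O_{(\Gamma,n)}(\Phi) = (-1)^{h_1(\Gamma)} O_{(\Gamma,n)}(\Phi)$, and the hypothesis that $h_1(\Gamma)$ is odd forces $O_{(\Gamma,n)}(\Phi) = 0$. I expect the verification that $O_{(\Gamma,n)}$ depends only on the holomorphic jet along the diagonal — and is therefore reflection-invariant in the above sense — to be the main obstacle, since a priori the boundary integral could also retain the topological data that $r$ moves around; it is the sharp description in Proposition \ref{laman anomaly} that rules this out.
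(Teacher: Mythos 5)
Your proposal is correct and follows essentially the same route as the paper's own proof: the same reflection $r$ across a hyperplane in $\R^{d'}$ fixing the base vertex, the same transformation law $r^{\ast}\til{W} = (-1)^{|\Gamma_1|}\til{W}((\Gamma,n),r^{\ast}\Phi)$ together with the orientation factor $(-1)^{|\Gamma_0|-1}$ yielding the total sign $(-1)^{h_1(\Gamma)}$, and the same appeal to Proposition \ref{laman anomaly} to conclude that $O_{(\Gamma,n)}$, being a local holomorphic functional depending only on holomorphic jets along the diagonal, satisfies $O_{(\Gamma,n)}(r^{\ast}\Phi) = O_{(\Gamma,n)}(\Phi)$. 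Your closing remark correctly identifies the locality/holomorphicity statement of Proposition \ref{laman anomaly} as the crux, which is exactly the role it plays in the paper.
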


Before state the second vanishing result, we prove the following lemma:

\begin{lem}
  \label{Kontsevich's lemma}Let $d = 0$ and $d' = 2$. Given a connected
  decorated graph $(\Gamma, n)$ without self-loops, such that $h_1 (\Gamma) =
  | \Gamma_1 | - | \Gamma_0 | + 1 \geqslant 1$. We have the following
  equality:
  \[ \int_{(\R^2)^{| \Gamma_0 | - 1}} \tilde{W} ((\Gamma, n), 1) = 0.
  \]
\end{lem}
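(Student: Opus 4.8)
The plan is to use the identification $\R^2\cong\C$ to turn the Gaussian/Schwinger integrand into a holomorphic--antiholomorphic form, and then to recognize the resulting form on Schwinger space as (a positive multiple of) a determinant that vanishes for loop-number reasons. Writing $\zeta^i=x^i_1+\sqrt{-1}\,x^i_2$ for the coordinate at the $i$-th vertex and $\nu^e=\tfrac{1}{2\sqrt{t_e}}\sum_i\rho^e_i\zeta^i$, so that $v^e\cdot v^e=|\nu^e|^2$, Lemma \ref{lem:u} gives for $d=0$, $d'=2$
\[
\tilde{W}((\Gamma,n),1)=C\,e^{-\sum_e|\nu^e|^2}\,\Omega\wedge\bar\Omega,\qquad \Omega=\bigwedge_{e\in\Gamma_1}\d\nu^e,
\]
with $C$ a nonzero constant and the decoration $n$ irrelevant (there are no holomorphic directions). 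Since $\zeta\mapsto(\nu^e)_e$ is $\C$-linear of rank $|\Gamma_0|-1<|\Gamma_1|$ at fixed $t$, the holomorphic top form $\Omega$ has no purely spatial component, so the first step is to extract its minimal $\d t$-degree piece.

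The engine is the cycle relations. For each $\gamma\in H_1(\Gamma)$ one has $\sum_e[\gamma:e]\sqrt{t_e}\,\nu^e\equiv 0$ identically in $\zeta$; differentiating gives
\[
\sum_e[\gamma:e]\sqrt{t_e}\,\d\nu^e=-\sum_e[\gamma:e]\,\nu^e\,\d\sqrt{t_e}=:\eta_\gamma,
\]
a one-form on Schwinger space with coefficients linear in $\nu$. Choosing a spanning tree and using these $h_1$ relations, $\Omega$ factors up to a scalar as $(\text{top holomorphic form in }\zeta)\wedge\bigwedge_\gamma\eta_\gamma$, and only this piece survives the fibre integration over the $\zeta$'s (the top holomorphic and antiholomorphic tree-forms pair into the volume form). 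Performing the Gaussian integral over $(\R^2)^{|\Gamma_0|-1}$ reduces the lemma to $\langle\Theta\wedge\bar\Theta\rangle=0$, where $\Theta=\bigwedge_\gamma\eta_\gamma$ and $\langle-\rangle$ is Gaussian expectation in the $\nu$'s; since $\langle\nu^e\nu^f\rangle=\langle\bar\nu^e\bar\nu^f\rangle=0$, Wick's theorem collapses this to $\det(\Xi)$ for the $h_1\times h_1$ matrix of two-forms
\[
\Xi_{\gamma\gamma'}=\sum_{e,f}[\gamma:e][\gamma':f]\,\langle\nu^e\bar\nu^f\rangle\,\d\sqrt{t_e}\wedge\d\sqrt{t_f}.
\]
Using $\langle\nu^e\bar\nu^f\rangle=\langle\nu^f\bar\nu^e\rangle$ together with $\d\sqrt{t_e}\wedge\d\sqrt{t_f}=-\d\sqrt{t_f}\wedge\d\sqrt{t_e}$ shows $\Xi$ is antisymmetric, and since two-forms commute we may write $\det(\Xi)=\operatorname{Pf}(\Xi)^2$.

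The odd case is then immediate: an antisymmetric matrix of odd size $h_1$ has vanishing determinant, so $\langle\Theta\wedge\bar\Theta\rangle=0$ whenever $h_1$ is odd. This is consistent with, and reproves, the reflection argument of Proposition \ref{anomaly free theorem2}.

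The substance of the lemma is the even case, and this is where I expect the main difficulty. Here one must show $\operatorname{Pf}(\Xi)\wedge\operatorname{Pf}(\Xi)=0$ as a form on Schwinger space, i.e. that the $2h_1$-form $\det(\Xi)$ vanishes even though the individual Wick contractions (indexed by pairs of edge-disjoint complements of spanning trees) are nonzero. The key input is the transversality identity $\sum_e[\gamma:e]\sqrt{t_e}\,\langle\nu^e\bar\nu^f\rangle=0$ for every cycle $\gamma$ (obtained by pairing $\sum_e[\gamma:e]\sqrt{t_e}\nu^e\equiv0$ against $\bar\nu^f$), which says that the propagator matrix is supported on the cut space of $\Gamma$. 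I would use its spectral form to write $\Xi=\Phi\,D\,\Phi^{\top}$ with $\Phi=(\phi^{(k)}_\gamma)$, $\phi^{(k)}_\gamma=\sum_e[\gamma:e]\,y^{(k)}_e\,\d\sqrt{t_e}$ and $y^{(k)}$ ranging over a basis of the cut space, and then expand by Cauchy--Binet into $\det(\Xi)=\sum_{|K|=h_1}\big(\prod_{k\in K}\mu_k\big)\det(\Phi_{\cdot K})^2$. The task is to show this signed sum of squared $h_1$-form minors cancels; transversality alone yields only the weaker statements (overall scale invariance, hence vanishing once $2h_1\ge|\Gamma_1|$), so the remaining regime $2h_1<|\Gamma_1|$ requires the finer combinatorics that the attribution to Kontsevich signals. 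I would attempt to exhibit each minor $\det(\Phi_{\cdot K})$ as a decomposable form (a direct check confirms this in small even cases, e.g. a square with one diagonal), and failing a uniform such statement, reduce the identity to Kontsevich's original vanishing argument for planar configuration-space integrals.
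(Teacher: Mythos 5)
Your setup is sound as far as it goes: the identification $\R^2\cong\C$, the cycle relations $\sum_e[\gamma:e]\sqrt{t_e}\,\nu^e\equiv0$, the exact factorization of $\Omega$ over a spanning tree (the extra tree terms die against $\bigwedge_{e\in T}\d\nu^e$), and the Wick-theoretic reduction of the lemma to the vanishing of $\det(\Xi)$ all check out, and antisymmetry of $\Xi$ does dispose of odd $h_1(\Gamma)$. But, as you yourself note, the odd case only reproduces what the reflection argument of Proposition \ref{anomaly free theorem2} already gives. The even case is the entire point of this lemma --- it is what Proposition \ref{vanishing of anomaly121} needs in order to kill \emph{all} anomalies when $d'\geqslant2$, not merely the odd-loop ones --- and your proposal does not prove it: the Cauchy--Binet expansion, the transversality identity, and the hoped-for decomposability of the minors are things you ``would attempt,'' with an explicit admission that the regime $2h_1<|\Gamma_1|$ is open and a fallback to citing ``Kontsevich's original vanishing argument,'' which is a reference, not a proof. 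That is a genuine gap, and it sits exactly where the difficulty lives: already for $h_1=2$ with $|\Gamma_1|=5$ (your square with one diagonal) the identity $\Xi_{12}\wedge\Xi_{12}=0$ is forced neither by degree nor by antisymmetry; note also that the ring of even-degree forms has nilpotents, so $\det(\Xi)=\operatorname{Pf}(\Xi)^2=0$ cannot be leveraged into $\operatorname{Pf}(\Xi)=0$ or vice versa without further argument.

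The paper avoids the parity split entirely by a mechanism your proposal is missing. In the complex coordinate $z=x_1+ix_2$ the propagator splits as $P_t=P^1_t+P^2_t-P^3_t$, where $P^1_t\propto e^{-z\bar z/4t}\,\d z\,\d\bar z/t$ is $\d_{deRham}$-closed and $P^2_t,P^3_t$ carry the $\d t$ components. A form-degree count shows that every monomial contributing to the spatial integral contains exactly $h_1(\Gamma)$ factors of $P^2$ and exactly $h_1(\Gamma)$ factors of $P^3$; hence flipping the sign of every $P^3$ changes the integral only by the overall factor $(-1)^{h_1(\Gamma)}$, i.e. $\int\prod_e(P^1+P^2-P^3)=(-1)^{h_1(\Gamma)}\int\prod_e(P^1+P^2+P^3)$. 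The payoff is that the \emph{sum} is exact, $P^2_t+P^3_t=\d_{deRham}\bigl(\tfrac{-i}{4\pi t}e^{-z\bar z/4t}\,\d t\bigr)$, whereas the difference $P^2_t-P^3_t$ appearing in the original product is not. In the unsigned product every contributing monomial must contain $2h_1(\Gamma)\geqslant2$ such exact factors, and since all remaining factors are closed, each such monomial is $\d_{deRham}$-exact with Gaussian decay; Stokes then gives zero uniformly in the parity of $h_1(\Gamma)$. If you want to salvage your algebraic route, the realistic path is to import this sign-and-exactness trick into your framework rather than to chase the determinant identity combinatorially.
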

\begin{proof}
  Note $\R^2 \cong \C$ as real vector spaces, we can use
  \[ z = x_1 + i x_2 \]
  to rephrase the propagator:
  \begin{eqnarray*}
    P_t (\vec{x}) & = & \frac{1}{\pi} e^{- v \cdot v} d^2 v\\
    & = & \frac{i}{8 \pi} e^{- \frac{z \bar{z}}{4 t}} d \left(
    \frac{z}{\sqrt{t}} \right) d \left( \frac{\bar{z}}{\sqrt{t}} \right)\\
    & = & \frac{i}{8 \pi} e^{- \frac{z \bar{z}}{4 t}} \left( \frac{d (z) d
    (\bar{z})}{t} + \frac{z d (\bar{z}) d t}{2 t^2} - \frac{\overline{z} d (z)
    d t}{2 t^2} \right)\\
    & = & P_t (z, \bar{z})
  \end{eqnarray*}

  To simplify expressions, we introduce
  \[ \left\{\begin{array}{l}
       P_t^1 (z, \bar{z}) = \frac{i}{8 \pi} e^{- \frac{z \bar{z}}{4 t}}
       \frac{d (z) d (\bar{z})}{t}\\
       P_t^2 (z, \bar{z}) = \frac{i}{8 \pi} e^{- \frac{z \bar{z}}{4 t}}
       \frac{z d (\bar{z}) d t}{2 t^2}\\
       P_t^3 (z, \bar{z}) = \frac{i}{8 \pi} e^{- \frac{z \bar{z}}{4 t}}
       \frac{\overline{z} d (z) d t}{2 t^2}
     \end{array}\right. \]
  In the expansion of
  \[ \tilde{W} ((\Gamma, n), 1) = \prod_{e \in \Gamma_1 } P_{t_e} (z^{h (e)} -
     z^{t (e)}, \bar{z}^{h (e)} - \bar{z}^{t (e)}), \]
  each monomial has the following form:
  \[ \prod_{e \in S_1} P_{t_e}^1 (z^e, \bar{z}^e) \prod_{e \in S_2} P_{t_e}^2
     (z^e, \bar{z}^e) \prod_{e \in S_3} (- P_{t_e}^3 (z^e, \bar{z}^e)), \]
  where
  \[ \left\{\begin{array}{l}
       z^e = z^{h (e)} - z^{t (e)}\\
       \bar{z}^e = \bar{z}^{h (e)} - \bar{z}^{t (e)}
     \end{array}\right., \]
  $S_1, S_2, S_3 \subseteq \Gamma_1$ are disjoint subsets, such that
  \[ S_1 \cup S_2 \cup S_3 = \Gamma_1 . \]
  The integration of such monomials over $(\R^2)^{| \Gamma_0 | - 1}
  \cong (\C)^{| \Gamma_0 | - 1}$ are nonzero only if
  \[ | S_2 | = | S_3 | = | \Gamma_1 | - | \Gamma_0 | + 1 = h_1 (\Gamma) , \]
  because these monomials should be top forms.
  As a consequence, we have
  \begin{eqnarray*}
    &  & \int_{(\C)^{| \Gamma_0 | - 1}} \prod_{e \in S_1} P_{t_e}^1
    (z^e, \bar{z}^e) \prod_{e \in S_2} P_{t_e}^2 (z^e, \bar{z}^e) \prod_{e \in
    S_3} (- P_{t_e}^3 (z^e, \bar{z}^e))\\
    & = & (- 1)^{h_1 (\Gamma)} \int_{(\C)^{| \Gamma_0 | - 1}}
    \prod_{e \in S_1} P_{t_e}^1 (z^e, \bar{z}^e) \prod_{e \in S_2} P_{t_e}^2
    (z^e, \bar{z}^e) \prod_{e \in S_3} P_{t_e}^3 (z^e, \bar{z}^e) .
  \end{eqnarray*}
  If we sum over all possible $S_1, S_2, S_3 \subseteq \Gamma_1$, we get
  \begin{eqnarray*}
    &  & \int_{(\C)^{| \Gamma_0 | - 1}}\tilde{W} ((\Gamma, n), 1)\\
    & = & (- 1)^{h_1 (\Gamma)} \int_{(\C)^{| \Gamma_0 | - 1}}
    \prod_{e \in \Gamma_1 } (P_{t_e}^1 (z^e, \bar{z}^e) + P_{t_e}^2 (z^e,
    \bar{z}^e) + P_{t_e}^3 (z^e, \bar{z}^e)) .
  \end{eqnarray*}
  We note
  \begin{eqnarray*}
    P_t^2 (z, \bar{z}) + P_t^3 (z, \bar{z}) & = & \frac{i}{8 \pi} e^{- \frac{z
    \bar{z}}{4 t}} \frac{z d (\bar{z}) d t}{2 t^2} + \frac{i}{8 \pi} e^{-
    \frac{z \bar{z}}{4 t}} \frac{\overline{z} d (z) d t}{2 t^2}\\
    & = & d_{\tmop{deRham}} \left( \frac{- i}{4 \pi t} e^{- \frac{z
    \bar{z}}{4 t}} d t \right),
  \end{eqnarray*}
  where $d_{\tmop{deRham}}$ is the de Rham differential for variables of
  spacetime $\{ z, \bar{z} \}$.
  
  In the expansion of
  \[ \prod_{e \in \Gamma_1 } (P_{t_e}^1 (z^e, \bar{z}^e) + P_{t_e}^2 (z^e,
     \bar{z}^e) + P_{t_e}^3 (z^e, \bar{z}^e)), \]
  each monomial has the following form:
  \[ \prod_{e \in S_1} P_{t_e}^1 (z^e, \bar{z}^e) \prod_{e \in S_2} (P_{t_e}^2
     (z^e, \bar{z}^e) + P_{t_e}^3 (z^e, \bar{z}^e)), \]
  where $S_1, S_2 \subseteq \Gamma_1$ are disjoint subsets, such that $S_1
  \cup S_2 = \Gamma_1$. The integration of such monomials over
  $(\C)^{| \Gamma_0 | - 1}$ are nonzero only if
  \[ | S_2 | = 2 h_1 (\Gamma) . \]
  Note $h_1 (\Gamma) \geqslant 1$ and
  \[ d_{\tmop{deRham}} P_t^1 (z, \bar{z}) = 0, \]
  each monomial which contributes to the integral is $d_{\tmop{deRham}}$
  exact. Furthermore, they decrease exponentially at infinity. By Stokes
  theorem, we conclude that
  \[ \int_{(\R^2)^{| \Gamma_0 | - 1}} \tilde{W} ((\Gamma, n), 1) = 0.
  \]
\end{proof}

\begin{rmk}
  The above lemma is a Schwinger space version of the key lemma used in the
  proof of Kontsevich's formality theorem, see {\cite[Lemma
  6.4]{kontsevich2003deformation}}.
\end{rmk}

Now, we can prove the second vanishing result:

\begin{prop}\label{vanishing of anomaly121}
  Given a connected decorated Laman graph $(\Gamma, n)$ without self-loops,
  and $\Phi \in \Omega^{\bu}_c ((\R^{d'} \times \C^d)^{|
  \Gamma_0 |})$. If $d' \geqslant 2$ and $| \Gamma_0 | \geqslant 3$, we have
  \[ O_{(\Gamma, n)} (\Phi) = 0. \]
\end{prop}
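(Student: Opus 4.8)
The plan is to reduce the claim to the two-dimensional topological vanishing of Lemma \ref{Kontsevich's lemma}, exploiting that Proposition \ref{laman anomaly} has already exhibited $O_{(\Gamma, n)}$ as a \emph{local holomorphic} functional supported on the diagonal.

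First I would extract the combinatorial content of the hypotheses. For a connected Laman graph the defining equality of Definition \ref{Laman graph1}, namely $(d + d') |\Gamma_0| = (d + d' - 1) |\Gamma_1| + (d + d' + 1)$, gives
\[
 h_1 (\Gamma) = |\Gamma_1| - |\Gamma_0| + 1 = \frac{|\Gamma_0| - 2}{d + d' - 1} .
\]
Thus $|\Gamma_0| \geqslant 3$ is equivalent to $h_1 (\Gamma) \geqslant 1$; in particular $\Gamma$ contains a cycle. This is precisely the loop hypothesis of Lemma \ref{Kontsevich's lemma}, and it is the role played by the assumption $|\Gamma_0| \geqslant 3$ (note that the reflection argument of Proposition \ref{anomaly free theorem2} only disposes of the case $h_1$ odd, so a genuinely different input is needed here).

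Next I would strip off the holomorphic directions. By Proposition \ref{laman anomaly} we may write $O_{(\Gamma, n)} (\Phi) = \int_{(\R^{d'} \times \C^d)} (D \Phi) |_{\mathrm{diag}}$ with $D$ a constant-coefficient operator involving only holomorphic derivatives, so it suffices to show that every coefficient of $D$ vanishes. Using the factorization of the integrand from Lemma \ref{lem:u} together with the K\"unneth decomposition of Proposition \ref{extension theorem2}, the integration over the $\C^d$-directions produces exactly these holomorphic derivatives together with the Gaussian $\det M_{\Gamma}$-factors, while each coefficient is governed by a purely topological boundary integral of the form $\int_{\partial C_{\Gamma}} t_{\tmop{square}}^{\ast} \int_{(\R^{d'})^{|\Gamma_0|}} \til{W}_{\tmop{top}}$.

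Finally I would make these topological integrals vanish by the exactness argument of Lemma \ref{Kontsevich's lemma}. Since $d' \geqslant 2$, I complexify a two-plane $\R^2 \subseteq \R^{d'}$ via $\zeta = x_1 + i x_2$ and decompose each topological propagator into its $\d t$-free closed part and its $\d t$-carrying parts, whose sum is $\d_{deRham}$-exact. Because $h_1 (\Gamma) \geqslant 1$, every monomial surviving the configuration-space integration must carry enough $\d t$-factors to force at least one exact factor, so that the whole monomial is $\d_{deRham}$-exact; Gaussian decay at infinity together with Stokes' theorem then forces the integral to vanish, whence every coefficient of $D$ vanishes and $O_{(\Gamma, n)} (\Phi) = 0$. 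The hard part will be this last step: the exactness-plus-Stokes argument must be carried out for the full decorated integrand over the compactified boundary $\partial C_{\Gamma}$ --- retaining the decorations $\partial_z^{n (e)}$ and the remaining $2 d$ holomorphic and $d' - 2$ topological directions --- rather than for a single model propagator. One must verify that the power counting forcing the analog of the condition $|S_2| = 2 h_1 (\Gamma)$ in Lemma \ref{Kontsevich's lemma} persists, that the additional Gaussian integrations preserve de Rham exactness, and that no spurious contributions arise from $\partial C_{\Gamma}$; this bookkeeping is the delicate point, and is presumably where the first version of the argument went astray.
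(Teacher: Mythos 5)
Your proposal follows the same route as the paper's proof: deduce $h_1(\Gamma)\geqslant 1$ from the Laman equality together with $|\Gamma_0|\geqslant 3$, use Proposition \ref{laman anomaly} to reduce to data on the topological diagonal, and invoke the two-dimensional vanishing of Lemma \ref{Kontsevich's lemma}. The only real divergence is the final step, which you flag as the ``delicate bookkeeping'' and leave unverified: you propose to re-run the exactness-plus-Stokes argument for the full decorated integrand over $\partial C_{\Gamma}$. In the paper this is unnecessary, and the point is the order of operations. Since the operator $D$ of Proposition \ref{laman anomaly} involves \emph{only} holomorphic derivatives, you may replace $\Phi$ by $\Phi|_{q^i=0,\, i\neq |\Gamma_0|}$ at the very start, so that the test form has no dependence whatsoever on the relative topological coordinates. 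The Schwinger-space propagator of Lemma \ref{lem:u} is a product of Gaussian factors, one per flat direction, so after splitting $\R^{d'}=\R^2\times\R^{d'-2}$ the integrand $\til{W}\bigl((\Gamma,n),\Phi|_{q^i=0,\,i\neq|\Gamma_0|}\bigr)$ is literally a wedge product of $\til{W}((\Gamma,n),1)$ for spacetime $\R^2$ with a second factor carrying all decorations $\partial_{z}^{n(e)}$, all $\C^d$ and $\R^{d'-2}$ dependence, and the restricted $\Phi$; the decorations never touch the $\R^2$ factor because they are holomorphic. Fubini then lets you integrate the relative $\R^2$ coordinates first, and Lemma \ref{Kontsevich's lemma} makes that fiber integral vanish identically \emph{as a differential form on Schwinger space}, hence before any restriction to $\partial C_{\Gamma}$ or boundary integration. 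Consequently there is no extended power counting to check, no question of Gaussian integrations preserving de Rham exactness, and no possible spurious boundary contribution: the lemma applies verbatim, and your outline becomes a complete proof once this factorization is inserted in place of your final paragraph.
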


\begin{proof}
  First, it is easy to prove that $h_1 (\Gamma) \geqslant 1$ by the Laman conditions.  

    Let $f$ be the map defined by
    \begin{align*}
       f:(\R^{d'} \times \C^d)^{| \Gamma_0 |} & \rightarrow (\R^{d'} \times \C^d)^{| \Gamma_0 |} \\
       \left( (z^1, x^1),
    (z^2, x^2), \ldots, \left(
    z^{| \Gamma_0 |}, x^{| \Gamma_0 |}
    \right) \right) & \rightarrow \left( (z^1, x^{| \Gamma_0 |}),
    (z^2, x^{| \Gamma_0 |}), \ldots, \left(
    z^{| \Gamma_0 |}, x^{| \Gamma_0 |}
    \right) \right).
    \end{align*}
    By Proposition
  \ref{laman anomaly}, we have
  \[
  O_{(\Gamma, n)} (\Phi)=\int_{(\R^{d'} \times \C^d)_{\left(
     w^{| \Gamma_0 |}, q^{| \Gamma_0 |}
      \right)}} (D \Phi) \left|_{w^i = 0,
      q^i = 0, i \neq | \Gamma_0 |} \right.,
  \]
  where $D$ is a differential operator with constant coefficients, which
    only involves holomorphic derivatives. In particular, $D$ does not contain derivatives of $\{q^i\}_{i\in\Gamma_{0}}$. As a consequence,
  \[ O_{(\Gamma, n)} (\Phi) = O_{(\Gamma, n)} \left( f^{*}\Phi \right),
  \]
  we only need to prove
  \[ O_{(\Gamma, n)} \left( f^{*}\Phi  \right) = 0. \]
  Recall
  \begin{eqnarray*}
    &  & O_{(\Gamma, n)} \left( f^{*}\Phi\right)\\
    & = & \pm \int_{\partial C_{\Gamma_1}} t_{\tmop{square}}^{\ast}
    \int_{(\R^{d'} \times \C^d)_{\left(
     w^{| \Gamma_0 |}, q^{| \Gamma_0 |}
      \right)}}\int_{(\C^d \times \R^{d'})^{| \Gamma_0 |-1}} \tilde{W}
    \left( (\Gamma, n), f^{*}\Phi \right)\\
    & = & \pm \int_{\partial C_{\Gamma_1}} t_{\tmop{square}}^{\ast}
    \int_{(\R^{d'} \times \C^d)_{\left(
     w^{| \Gamma_0 |}, q^{| \Gamma_0 |}
      \right)}}\int_{(\C^d \times \R^{d' - 2})^{| \Gamma_0 |-1}}
    \int_{(\R^2)^{| \Gamma_0 |-1}} \tilde{W} \left( (\Gamma, n), f^{*}\Phi \right).
  \end{eqnarray*}
  We notice that the propagator in Schwinger space can be factorized as follows:
  \begin{eqnarray*}
     & &P_t (z - w, \zbar -
    \wbar, x-y)\\
    &=& \frac{1}{\pi^{d + \frac{d'}{2}}} e^{- (z - w) \cdot u -
    v \cdot v} \d^d u \d^{d'} v\\
    &=&\left(\frac{1}{\pi^{d + \frac{d'-2}{2}}} e^{- (z - w) \cdot u -\sum_{i=1}^{d'-2}
    v_{i} v_{i}} \d^d u \prod_{i=1}^{d'-2}\d v_{i}\right)\left(\frac{1}{\pi}e^{ -v_{d'-1} v_{d'-1}-v_{d'} v_{d'}}\d v_{d'-1}\d v_{d'}\right).
  \end{eqnarray*}
  Moreover, $f^{*}\Phi$ is a constant with respect to the coordinates $\{q^i\}_{i\in\Gamma_0,i\neq|\Gamma_{0}|}$, so by Lemma
  \ref{Kontsevich's lemma}, we have
  \[ O_{(\Gamma, n)} (\Phi) = 0. \]
\end{proof}

Finally, we can prove our main theorem on the existence of quantization for topological-holomorphic theories:
\begin{thm}\label{anomaly free theorem1}Given the space of fields $\cE$ defined in $(\ref{space of fields})$, and an interaction $I$, which satisfies the assumption $(\ref{weaker assumption})$ and the classical master equation $(\ref{CME})$.
If $d'>1$, then the quantum master equation is satisfied. In particular, the observables of the resulting quantum topological-holomorphic theory defines a factorization algebra on $\R^{d'} \times \C^d$.
\end{thm}
\begin{proof}
By Corollary \ref{weight QME}, we only need to show the following equality for stable graph $\gamma$: 
\begin{eqnarray}
& & \lim_{\epsilon\rightarrow 0}\sum_{e \in E (\gamma)}  w_{\gamma, e} (P_{\epsilon < L}, Q (P_{\epsilon < L}),
     I) + \lim_{\epsilon\rightarrow 0}\sum_{e \in E (\gamma)} w_{\gamma, e} (P_{\epsilon <
     L}, K_L, I) \label{QME2}\\
     & - & \lim_{\epsilon\rightarrow 0}\sum_{e \in E (\gamma)}\lim_{t \rightarrow 0} w_{\gamma, e} (P_{\epsilon < L}, K_t, I)  \nonumber \\
     & = & 0. \nonumber
\end{eqnarray}
By equation \eqref{QME}, we have
\begin{eqnarray}
& &   W_0^L ((\Gamma, n), (\bar{\partial} + \d_{deRham})\Phi) \pm   \int_{\partial_{\sqrt{L}}
    \widetilde{\left[ 0, \sqrt{L} \right]^{| \Gamma_1 |}}} t_{\tmop{square}}^{\ast}\int_{(\R^{d'} \times \C^d)^{| \Gamma_0 |}} \til{W} ((\Gamma, n), \Phi)\label{QME3}\\
    & \pm &  \int_{\partial_{0}
    \widetilde{\left[ 0, \sqrt{L} \right]^{| \Gamma_1 |}}}t_{\tmop{square}}^{\ast} \int_{(\R^{d'} \times \C^d)^{| \Gamma_0 |}} \til{W} ((\Gamma, n), \Phi)\nonumber\\
    & = & 0.\nonumber
\end{eqnarray}
We will show that $(\ref{QME2})$ and $(\ref{QME3})$ are equivalent.

First, we can see that the first term in $(\ref{QME2})$ corresponds to the first term in $(\ref{QME3})$ (after integration by parts). 

Recall that
\[\partial_{\sqrt{L}} \widetilde{\left[ 0, \sqrt{L} \right]^{| \Gamma_1 |}}
     = \bigcup_{e \in \Gamma_1} (- 1)^{| e |} \left\{ \sqrt{L} \right\} \times
     \widetilde{\left[ 0, \sqrt{L} \right]^{| (\Gamma / e)_1 |}}.
\] For a given edge $e\in\Gamma_{1}$, we consider the integral
\[\int_{\left\{ \sqrt{L} \right\} \times
     \widetilde{\left[ 0, \sqrt{L} \right]^{| (\Gamma / e)_1 |}}}
t_{\tmop{square}}^{\ast}  \int_{(\R^{d'} \times \C^d)^{| \Gamma_0 |}} \til{W} ((\Gamma, n), \Phi).
\]  We notice that in the integrand:
    \begin{enumerate}
        \item The propagator (in Schwinger space) assigned to $e$ becomes \[H(t, z^{h(e)} - z^{t(e)}, \zbar^{h(e)} -
    \zbar^{t(e)}, x^{h(e)}-x^{t(e)}),\]because the first term in $(\ref{schwinger propagator})$ vanished by degree reason.
    \item The propagator (in schwinger space) assigned to $e'\neq e$ becomes \[
    - \d t_{e'} \wedge (\bar{\partial}^{\ast}_{z_{h(e')}} + \d_{deRham,x_{h(e')}}^{\ast}) H(t, z^{h(e')} - z^{t(e')}, \zbar^{h(e')} -
    \zbar^{t(e')}, x^{h(e')}-x^{t(e')}),
    \]
    because the second term in $(\ref{schwinger propagator})$ is not a top form.
    \end{enumerate}
    Therefore, the second term in $(\ref{QME2})$ corresponds to the second term in $(\ref{QME3})$.

Finally, by Proposition \ref{vanishing of anomaly121} and Proposition \ref{differential of Feynman graph} $(2)$, the third term in $(\ref{QME3})$ only have contributions from the integrals over 
\[
     \bigcup_{e\in \Gamma_{1}} (- 1)^{|e|} \{0\} \times
     \widetilde{\left[ 0, + \sqrt{L} \right]^{| (\Gamma / e)_1
     |}},
\] because the only Laman gragh with $2$ vertices is the interval. So by similar arguments, it corresponds to the third term in $(\ref{QME2})$. This completes the proof of quantum master equation. The existence of factorization algebra follows from \cite{CG2}.
\end{proof}

As a corollary we obtain the following result about well-studied topological-holomorphic gauge theories.

\begin{cor}
Let $\lie{g}$ be a Lie algebra which is equipped with an invariant, non-degenerate, symmetric pairing.
Then, for $d' > 1$, hybrid Chern--Simons theory on $\R^{d'} \times \C^d$ with gauge Lie algebra $\lie{g}$ admits a quantization, hence factorization algebra of observables, to all orders in perturbation theory.

In particular, consider the list of holomorphic topological theories given at the end of section \ref{s:hcs}.
The theories (1),(2),(6),(7),(8),(9) admit quantizations to all orders in perturbation theory for any gauge Lie algebra 
\end{cor}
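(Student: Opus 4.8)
The plan is to recognize hybrid Chern--Simons theory as a special instance of the topological--holomorphic theories to which Theorem \ref{anomaly free theorem1} applies, and then to read off the numerology directly from the list in section \ref{s:hcs}.

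First I would verify that hybrid Chern--Simons theory genuinely falls within the class of classical topological--holomorphic theories of the form \eqref{eqn:action}. Its space of fields is $\alpha \in \Omega^\bu(\R^{d'}) \hotimes \Omega^{0,\bu}(\C^d) \otimes \lie{g}[1]$, which is exactly the prescribed shape with $V = \lie{g}[1]$. The auxiliary pairing $\<-,-\>_V$ is the one induced on $\lie{g}[1]$ by the given invariant, non-degenerate, symmetric form on $\lie{g}$; the shift $[1]$ turns this symmetric form into a skew form and raises its cohomological degree to $2$, which agrees with the required degree $d + d' - 1$ precisely when $d + d' = 3$, and agrees modulo $2$ whenever $d + d'$ is odd. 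The cubic interaction $\tfrac16 \<\alpha, [\alpha,\alpha]\>$ is a local functional involving no derivatives at all, so in particular no derivatives along $\R^{d'}$ appear, meeting the requirement that the interaction depend only on holomorphic jets. Since this interaction is not compactly supported, one strictly invokes Remark \ref{compact support assumption} to run the argument with a fake heat kernel.

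Next I would check the classical master equation, which is immediate: as recorded in section \ref{s:hcs}, $\lie{g}$-invariance of $\<-,-\>$ forces $\{S,S\} = 0$, equivalently $\d_{deRham} I + \dbar I + \tfrac12\{I,I\} = 0$. With the classical master equation in hand and the hypothesis $d' > 1$, Theorem \ref{anomaly free theorem1} applies verbatim and produces an anomaly-free quantization to all orders in $\hbar$, hence a factorization algebra of quantum observables. The one point warranting a remark is that theories (6)--(9), for which $d + d' = 5$, are only $\Z/2$-graded; however the anomaly analysis of section \ref{sec:feynman}---in particular Proposition \ref{vanishing of anomaly121}---concerns only the analytic part $w^{\op{an}}_\gamma$ of the Feynman graph integrals and is insensitive to whether $V$ carries a $\Z$- or a merely $\Z/2$-grading, so the conclusion is unaffected.

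Finally, for the explicit list I would intersect the nine cases with the condition $d' > 1$. Those with $d' > 1$ are exactly the ones with $d' \in \{2,3,4,5\}$, namely (1) $d'=3,d=0$, (2) $d'=2,d=1$, (6) $d'=2,d=3$, (7) $d'=3,d=2$, (8) $d'=4,d=1$, and (9) $d'=5,d=0$; the excluded cases (3),(4),(5) all have $d' \leqslant 1$. For each of the former the preceding paragraphs apply for an arbitrary gauge Lie algebra $\lie{g}$ equipped with an invariant non-degenerate symmetric pairing, giving the stated conclusion. I expect the only genuine subtlety to be the bookkeeping around the $\Z/2$-grading in cases (6)--(9); everything else is a direct invocation of Theorem \ref{anomaly free theorem1}.
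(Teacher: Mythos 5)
Your proposal is correct and takes essentially the same route as the paper, which states this corollary without any written proof as an immediate consequence of Theorem \ref{anomaly free theorem1} applied to hybrid Chern--Simons theory with $V = \lie{g}[1]$, together with reading off which of the nine examples satisfy $d' > 1$. Your extra care with the merely $\Z/2$-graded cases (6)--(9) and with the non-compactly-supported interaction (via Remark \ref{compact support assumption}) fills in details the paper leaves implicit, and both points are handled correctly.
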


\begin{rmk}
The existence of the quantization of ordinary Chern--Simons theory is well-known.
The existence of the quantization of (2) has already been proven in the work that the theory was introduced \cite{Yangian}.
\end{rmk}

\appendix

\section{Graph theory background}\label{graph theory}

In this appendix, we introduce some concepts and facts from graph theory.
These facts will be used in the study of Feynman graph integrals.

\begin{dfn}
  A {\tmstrong{directed graph}} $\Gamma$ consists of the following data:
  \begin{enumeratenumeric}
    \item A set of vertices $\Gamma_0$ and a set of edges $\Gamma_1$.
    
    \item An ordering of all the vertices $\Gamma_0$ and an ordering of all
    the edges $\Gamma_1$.
    
    \item Two maps
    \[ t, h : \Gamma_1 \rightarrow \Gamma_0 \]
    which are the assignments of tail and head to each directed edge.
    
    Furthermore, we say $\Gamma$ is {\tmstrong{decorated}} in $\C^d
    \times \R^{d'}$ if we have a map
    \[ n : e \in \Gamma_1 \rightarrow (n_{1, e}, n_{2, e}, \ldots, n_{d, e})
       \in (\mathbb{Z}^{\geqslant 0})^d, \]
    which associates each edge $d$ non-negative integer. We use $(\Gamma, n)$
    to denote a decorated graph. If $n$ is the zero map, we will simply write
    $\Gamma$ for $(\Gamma, n)$. We say $\Gamma$ is a {\tmstrong{graph without
    self-loops}}, if $t (e) \neq h (e)$ for any $e \in \Gamma_1$.
  \end{enumeratenumeric}
\end{dfn}

\

We will use $| \Gamma_0 |$ and $| \Gamma_1 |$ to denote the number of
vertices and edges respectively.

\begin{dfn}
  Given a connected directed graph $\Gamma$ without self-loops, we have the
  following {\tmstrong{incidence matrix}}:
  \[ \rho^e_i = \left\{\begin{array}{l}
       1 \text{\qquad if } h (e) = i\\
       - 1 \text{\quad if } t (e) = i\\
       0 \text{\qquad otherwise}
     \end{array}\right. \]
  where $i \in \Gamma_0$, $e \in \Gamma_1$.
\end{dfn}

\begin{dfn}
  Given a connected directed graph $\Gamma$, a tree $T \subseteq \Gamma$ is
  said to be a spanning tree if every vertex of $\Gamma$ lies in $T$. We
  denote the set of all spanning tree by $\tmop{Tree} (\Gamma)$.
\end{dfn}

\begin{dfn}
  Given a connected directed graph $\Gamma$ without self-loops and two
  disjoint subsets of vertices $V_1, \text{ } V_2 \subseteq \Gamma_0$, we
  define $\tmop{Cut} (\Gamma ; V_1, V_2)$ to be the set of subsets $C
  \subseteq \Gamma_1$ satisfying the following properties:
  \begin{enumeratenumeric}
    \item The removing of edges in $C$ from $\Gamma$ divides $\Gamma$ into
    exactly two connected trees, which we denoted by $\Gamma' (C), \text{ }
    \Gamma'' (C) $, such that $V_1 \subseteq \Gamma_0' (C), \text{ } V_2
    \subseteq \Gamma_0'' (C)$.
    
    \item $C$ doesn't contain any proper subset satisfying property 1.
  \end{enumeratenumeric}
\end{dfn}

\begin{dfn}
  Given a connected directed graph $\Gamma$ without self-loops, and a function
  maps each $e \in \Gamma_1$ to $t_e \in (0, + \infty)$, we define the
  {\tmstrong{weighted laplacian}} of $\Gamma$ by the following formula:
  \[ M_{\Gamma} (t)_{ij} = \sum_{e \in \Gamma_1} \rho^e_i \frac{1}{t_e}
     \rho_j^e, \text{\quad} 1 \leqslant i, j \leqslant | \Gamma_0 | - 1. \]
\end{dfn}

The following facts will be used to show the finitenes of Feynman graph
integrals.

\begin{thm}[Kirchhoff]
  Given a connected graph $\Gamma$ without self-loops, the determinant of
  weighted laplacian is given by the following formula:
  \[ \det M_{\Gamma} (t) = \frac{\underset{T \in \tmop{Tree} (\Gamma)}{\sum}
     \underset{e \nin T}{\prod} t_e}{\underset{e \in \Gamma_1}{\prod} t_e} \]
\end{thm}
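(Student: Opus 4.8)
The plan is to recognize $M_{\Gamma}(t)$ as a symmetric product built from the reduced incidence matrix, apply the Cauchy--Binet formula to expand its determinant as a sum over edge subsets, and then identify the surviving subsets with spanning trees. First I would record the factorization
\[ M_{\Gamma}(t) = \rho^T D \rho, \]
where $\rho$ is the $|\Gamma_1| \times (|\Gamma_0| - 1)$ reduced incidence matrix (rows indexed by edges, columns by the vertices $1, \ldots, |\Gamma_0| - 1$, the last vertex being dropped) with entries $\rho^e_i$, and $D = \mathrm{diag}(1/t_e)_{e \in \Gamma_1}$. This is immediate from the defining formula $M_{\Gamma}(t)_{ij} = \sum_{e} \rho^e_i (1/t_e) \rho^e_j$, and it is exactly the presentation already used in the proof of Lemma \ref{bound for exponent}.

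Next I would split $D = D^{1/2} D^{1/2}$ and set $A = \rho^T D^{1/2}$, $B = D^{1/2} \rho$, so that $M_{\Gamma}(t) = AB$ with $A$ of size $(|\Gamma_0| - 1) \times |\Gamma_1|$ and $B$ of size $|\Gamma_1| \times (|\Gamma_0| - 1)$. Since $\Gamma$ is connected we have $|\Gamma_1| \geqslant |\Gamma_0| - 1$, so Cauchy--Binet applies and yields
\[ \det M_{\Gamma}(t) = \sum_{\substack{S \subseteq \Gamma_1 \\ |S| = |\Gamma_0| - 1}} \det(\rho_S)^2 \prod_{e \in S} \frac{1}{t_e}, \]
where $\rho_S$ is the $(|\Gamma_0|-1)\times(|\Gamma_0|-1)$ submatrix of $\rho$ on the rows in $S$; the diagonal factors $D^{1/2}$ in $A_S$ and $B_S$ together produce the weight $\prod_{e \in S} t_e^{-1}$, and $\det((\rho_S)^T) = \det(\rho_S)$.

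The combinatorial heart is the classical lemma that $\det(\rho_S) = \pm 1$ when the edges of $S$ form a spanning tree and $\det(\rho_S) = 0$ otherwise. For the vanishing direction, if $S$ contains a cycle then the incidence rows around that cycle satisfy a signed linear relation $\sum_{e} \varepsilon_e \rho^e_{\bullet} = 0$ (each cycle vertex is entered and left once), and this relation survives deletion of the final column, so the rows of $\rho_S$ are dependent; since $|S| = |\Gamma_0| - 1$, the only acyclic possibility is a spanning tree. For a spanning tree I would induct on $|\Gamma_0|$: a tree has at least two leaves, hence a leaf $v$ distinct from the omitted vertex, whose column of $\rho_S$ has a single nonzero entry $\pm 1$; cofactor expansion along that column reduces to the incidence matrix of the tree with $v$ and its unique incident edge removed, again a spanning tree on one fewer vertex, with the empty matrix ($\det = 1$) as base case.

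Combining the last two displays gives $\det M_{\Gamma}(t) = \sum_{T \in \tmop{Tree}(\Gamma)} \prod_{e \in T} t_e^{-1}$. Finally, multiplying through by $\prod_{e \in \Gamma_1} t_e / \prod_{e \in \Gamma_1} t_e$ and using that a spanning tree on $|\Gamma_0|$ vertices has exactly $|\Gamma_0| - 1$ edges, so $\prod_{e \in T} t_e^{-1} = \prod_{e \notin T} t_e / \prod_{e \in \Gamma_1} t_e$, I recover the stated expression. The main obstacle is the incidence-minor lemma, but the leaf-removal induction renders it routine, so the only genuinely structural step is the correct bookkeeping of the diagonal weights under Cauchy--Binet.
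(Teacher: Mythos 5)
Your proof is correct, but there is nothing in the paper to compare it against: the paper records Kirchhoff's theorem in Appendix \ref{graph theory} as classical background and gives no proof at all (only the ensuing corollaries are justified, by citation to \cite[Appendix B]{Li:2011mi}). Your argument is the standard Cauchy--Binet proof of the weighted matrix--tree theorem, and every step checks out: the factorization $M_{\Gamma}(t) = \rho^T D \rho$ with $D = \mathrm{diag}(1/t_e)$ agrees with the appendix definition of the weighted Laplacian; Cauchy--Binet applies since connectedness forces $|\Gamma_1| \geqslant |\Gamma_0| - 1$, and the two copies of $D^{1/2}$ correctly assemble into the weight $\prod_{e \in S} t_e^{-1}$; the incidence-minor lemma is handled soundly, with the signed cycle relation $\sum_e \varepsilon_e \rho^e_{\bullet} = 0$ killing any $S$ containing a cycle (the relation indeed survives deletion of the last column, since it holds among the full rows), acyclicity together with $|S| = |\Gamma_0| - 1$ forcing a spanning tree, and the leaf-removal induction --- crucially choosing a leaf distinct from the omitted vertex so that the corresponding column has a single entry $\pm 1$ --- giving $\det(\rho_S) = \pm 1$; and the final identity $\prod_{e \in T} t_e^{-1} = \prod_{e \notin T} t_e \big/ \prod_{e \in \Gamma_1} t_e$, valid because $|T| = |\Gamma_0| - 1$, recovers the stated quotient exactly.

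One small caution: your aside that the factorization is ``exactly the presentation already used in the proof of Lemma \ref{bound for exponent}'' is true only up to a weight convention. That proof writes $M_{\Gamma}(\til{t}^2) = \rho^T \til{t}^2 \rho$, i.e.\ with weights $t_e$ on the diagonal, whereas the appendix definition you (correctly) work with carries weights $1/t_e$; this discrepancy lies in the paper's notation, not in your argument, and does not affect your proof.
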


\begin{cor}
  The inverse of $M_{\Gamma} (t)$ is given by the following
  formula:\label{Minverse}
  \[ (M_{\Gamma} (t)^{- 1})^{i j} = \frac{1}{\underset{T \in \tmop{Tree}
     (\Gamma)}{\sum} \underset{e \nin T}{\prod} t_e} \cdot \left( \sum_{C \in
     \tmop{Cut} (\Gamma ; \{ i, j \}, \{ | \Gamma_0 | \})} \prod_{e \in C} t_e
     \right) \]
\end{cor}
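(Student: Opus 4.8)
The plan is to reduce the identity to a cofactor computation and then invoke a refinement of the Kirchhoff theorem already quoted above. Write $N := |\Gamma_0|$, so that $M_\Gamma(t)$ is the $(N-1)\times(N-1)$ matrix obtained by grounding the vertex $N$. Since $M_\Gamma(t)$ is symmetric, Cramer's rule gives
\[
(M_\Gamma(t)^{-1})^{ij} = \frac{(-1)^{i+j}\,\det M_\Gamma(t)(\hat i \mid \hat j)}{\det M_\Gamma(t)},
\]
where $M_\Gamma(t)(\hat i \mid \hat j)$ denotes $M_\Gamma(t)$ with row $i$ and column $j$ deleted. The denominator is supplied by Kirchhoff's theorem, so the whole task is to evaluate the numerator as a weighted sum over cuts.

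First I would factor the weighted Laplacian through the reduced incidence matrix. Let $\rho$ be the $|\Gamma_1|\times(N-1)$ matrix with entries $\rho^e_i$ for $1\le i\le N-1$ (the incidence matrix with the column of the grounded vertex $N$ removed), and set $D=\tmop{diag}(t_e^{-1})_{e\in\Gamma_1}$, so that $M_\Gamma(t)=\rho^{T}D\rho$. Deleting row $i$ (resp.\ column $j$) of $M_\Gamma(t)$ corresponds to deleting column $i$ of the left factor (resp.\ column $j$ of the right factor). Writing $\rho_{\hat i}$ for $\rho$ with column $i$ removed, the generalized Cauchy--Binet formula gives
\[
\det M_\Gamma(t)(\hat i \mid \hat j)=\sum_{\substack{S\subseteq\Gamma_1 \\ |S|=N-2}}\det\bigl((\rho_{\hat i})_S\bigr)\,\det\bigl((\rho_{\hat j})_S\bigr)\prod_{e\in S}t_e^{-1},
\]
where $(\rho_{\hat i})_S$ is the square submatrix on the edge-rows $S$ and the vertex-columns $\Gamma_0\setminus\{i,N\}$.

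The combinatorial heart is the unimodularity of incidence minors: $\det((\rho_{\hat i})_S)\in\{0,\pm1\}$, and it is nonzero precisely when $S$ is a spanning forest with exactly two tree components, with $i$ and the grounded vertex $N$ lying in different trees. Hence a term survives only when $S$ separates $i$ from $N$ and simultaneously $j$ from $N$, i.e.\ when $S$ is a spanning $2$-forest in which $i,j$ sit in one tree and $N$ in the other; these are exactly the complements $S=\Gamma_1\setminus C$ of the cuts $C\in\tmop{Cut}(\Gamma;\{i,j\},\{N\})$. Using $\prod_{e\in S}t_e^{-1}=\bigl(\prod_{e\in\Gamma_1}t_e\bigr)^{-1}\prod_{e\in C}t_e$ and the Kirchhoff expression $\det M_\Gamma(t)=\bigl(\prod_e t_e\bigr)^{-1}\sum_{T\in\tmop{Tree}(\Gamma)}\prod_{e\nin T}t_e$, the common factor $\prod_e t_e$ cancels and the ratio collapses to the claimed formula.

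The main obstacle is the sign bookkeeping: one must verify that $(-1)^{i+j}\det((\rho_{\hat i})_S)\det((\rho_{\hat j})_S)=+1$ on every qualifying $2$-forest, so that the signed Cauchy--Binet sum becomes the \emph{positive} weighted count $\sum_{C}\prod_{e\in C}t_e$. For the diagonal case $i=j$ this is automatic since $\det((\rho_{\hat i})_S)^2=1$, but for $i\neq j$ it is precisely the content of the all-minors matrix-tree theorem and requires a careful orientation argument (or a citation). As an independent cross-check one can instead verify $M_\Gamma(t)\cdot G=\tmop{Id}$ directly for the proposed right-hand side $G$, which amounts to Kirchhoff's current law for the associated resistor network with conductances $t_e^{-1}$; I would use this as a sanity check on the signs rather than as the primary proof.
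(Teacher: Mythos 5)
Your proposal is correct, and it actually supplies more than the paper does: the paper states Corollary \ref{Minverse} without argument, deferring (as with Corollary \ref{boundness}) to \cite[Appendix B]{Li:2011mi}. Your route --- Cramer's rule for the symmetric matrix $M_{\Gamma}(t)=\rho^{T}D\rho$ with $D=\tmop{diag}(t_e^{-1})$, Cauchy--Binet, and unimodularity of incidence minors --- is the standard linear-algebraic path, and each step checks out: $\det((\rho_{\hat i})_S)$ is nonzero exactly when $S$ is a spanning $2$-forest separating $i$ from the grounded vertex $|\Gamma_0|$, simultaneous nonvanishing for $i$ and $j$ forces $i,j$ into one tree and $|\Gamma_0|$ into the other, and such $S$ are precisely the complements of the paper's cuts (the minimality condition in the definition of $\tmop{Cut}$ is automatic here, since any spanning $2$-forest has exactly $|\Gamma_0|-2$ edges, so no proper subset of a cut can again be a cut). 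The one step you defer, namely that $(-1)^{i+j}\det((\rho_{\hat i})_S)\det((\rho_{\hat j})_S)=+1$ for every qualifying $S$, is indeed the all-minors matrix-tree theorem and a citation is legitimate; but if you want the argument self-contained, it closes in a few lines. The kernel of the $(|\Gamma_0|-2)\times|\Gamma_0|$ matrix $\rho_S$ (all vertex columns retained) is spanned by the indicator vectors $\mathbf{1}_{T_1},\mathbf{1}_{T_2}$ of the two trees of $S$, and the maximal minors of a full-rank matrix agree, up to a single global constant $c$, with the Pl\"ucker coordinates of its kernel: for $a<b$,
\[
(-1)^{a+b}\det\bigl(\rho_S\ \text{with columns } a,b \text{ deleted}\bigr)
= c\,\bigl((\mathbf{1}_{T_1})_a(\mathbf{1}_{T_2})_b-(\mathbf{1}_{T_1})_b(\mathbf{1}_{T_2})_a\bigr).
\]
Taking $(a,b)=(i,|\Gamma_0|)$ and $(a,b)=(j,|\Gamma_0|)$, the right-hand side equals $c$ in both cases (since $i,j\in T_1$ and $|\Gamma_0|\in T_2$), so the two signed minors coincide and their product is $c^2=+1$, which is exactly the sign coherence you need; with that inserted, your proof is complete.
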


\begin{cor}
  \label{boundness}We have the following equality:
  \begin{eqnarray*}
    &  & \frac{1}{t_e} \sum_{i = 1}^{| \Gamma_0 | - 1} \rho^e_i (M_{\Gamma}
    (t)^{- 1})^{i j}\\
    & = & \frac{1}{\underset{T \in \tmop{Tree} (\Gamma)}{\sum} \underset{e
    \nin T}{\prod} t_e} \left( \sum_{C \in \tmop{Cut} (\Gamma ; \{ j, h (e)
    \}, \{ | \Gamma_0 |, t (e) \})} \frac{\prod_{e' \in C} t_{e'}}{t_e}
    \right.\\
    & - & \left. \sum_{C \in \tmop{Cut} (\Gamma ; \{ j, t (e) \}, \{ |
    \Gamma_0 |, h (e) \})} \frac{\prod_{e' \in C} t_{e'}}{t_e} \right)
  \end{eqnarray*}
  In particular, every term of the numerator also appears in the denominator,
  so we have
  \[ \left| \frac{1}{t_e} \sum_{i = 1}^{| \Gamma_0 | - 1} \rho^e_i (M_{\Gamma}
     (t)^{- 1})^{i j} \right| \leqslant 2. \]
  We use $(d_{\Gamma}  (t)^{- 1})^{e j}$ to denote $\frac{1}{t_e} \sum_{i =
  1}^{| \Gamma_0 | - 1} \rho^e_i (M_{\Gamma} (t)^{- 1})^{i j}$.
\end{cor}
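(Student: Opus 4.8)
The plan is to reduce everything to the closed formula for $(M_\Gamma(t)^{-1})^{ij}$ already recorded in Corollary \ref{Minverse}, together with the correspondence between cuts and spanning $2$-forests that is built into the definition of $\tmop{Cut}$. First I would expand the contraction against the incidence matrix. Since $\rho^e_i$ is supported on the two endpoints of $e$, adopting the grounding convention $(M_\Gamma(t)^{-1})^{|\Gamma_0|,j}=0$, one has
\[ \sum_{i=1}^{|\Gamma_0|-1}\rho^e_i (M_\Gamma(t)^{-1})^{ij} = (M_\Gamma(t)^{-1})^{h(e),j}-(M_\Gamma(t)^{-1})^{t(e),j}. \]
Applying Corollary \ref{Minverse} to each term rewrites this as a difference of sums over $\tmop{Cut}(\Gamma;\{h(e),j\},\{|\Gamma_0|\})$ and $\tmop{Cut}(\Gamma;\{t(e),j\},\{|\Gamma_0|\})$, sharing the denominator $\sum_{T\in\tmop{Tree}(\Gamma)}\prod_{e'\nin T}t_{e'}$.

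The key step is to partition each cut-set according to the side on which the remaining endpoint of $e$ sits. A cut $C\in\tmop{Cut}(\Gamma;\{h(e),j\},\{|\Gamma_0|\})$ either keeps $t(e)$ on the side of $h(e)$, in which case $e\nin C$ and $C$ lies in $\tmop{Cut}(\Gamma;\{h(e),t(e),j\},\{|\Gamma_0|\})$, or places $t(e)$ on the $|\Gamma_0|$ side, forcing $e\in C$ and $C\in\tmop{Cut}(\Gamma;\{j,h(e)\},\{|\Gamma_0|,t(e)\})$; the symmetric dichotomy holds for the tail term. The two ``stay with $j$'' subcollections are literally the same set $\tmop{Cut}(\Gamma;\{h(e),t(e),j\},\{|\Gamma_0|\})$, carrying identical monomials, so they cancel in the difference. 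What survives is exactly the difference of sums over $\tmop{Cut}(\Gamma;\{j,h(e)\},\{|\Gamma_0|,t(e)\})$ and $\tmop{Cut}(\Gamma;\{j,t(e)\},\{|\Gamma_0|,h(e)\})$, each of whose cuts contains $e$; dividing through by $t_e$ produces the stated identity verbatim.

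For the bound I would exploit the bijection $C\mapsto F=\Gamma_1\setminus C$ between a cut and the spanning $2$-forest it leaves behind. Any $C\in\tmop{Cut}(\Gamma;\{j,h(e)\},\{|\Gamma_0|,t(e)\})$ contains $e$, and since $h(e),t(e)$ lie in different trees of $F$, adjoining $e$ reconnects them, so $T:=F\cup\{e\}$ is a spanning tree with $\prod_{e'\in C\setminus\{e\}}t_{e'}=\prod_{e'\nin T}t_{e'}$. Hence each numerator monomial is a distinct monomial of the denominator $\sum_{T}\prod_{e'\nin T}t_{e'}$, so each of the two positive cut-sums, divided by the denominator, lies in $[0,1]$, and the triangle inequality applied to their difference gives the bound $\le 2$.

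The only real subtleties are the bookkeeping of the cancellation in the partition step and checking that the grounding convention handles the degenerate cases $h(e)=|\Gamma_0|$ or $t(e)=|\Gamma_0|$, where one of the cut-sets is empty and the corresponding matrix entry vanishes consistently; everything else is a direct substitution into Corollary \ref{Minverse}.
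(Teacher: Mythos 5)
Your proposal is correct in substance, and it necessarily takes a different route from the paper: the paper gives no argument for this corollary at all, deferring entirely to \cite[Appendix B]{Li:2011mi}. What you do instead is derive the identity from Corollary \ref{Minverse}: expand $\sum_i \rho^e_i (M_\Gamma(t)^{-1})^{ij}$ as the difference $(M_\Gamma(t)^{-1})^{h(e),j} - (M_\Gamma(t)^{-1})^{t(e),j}$ (with the grounding convention $(M_\Gamma(t)^{-1})^{|\Gamma_0|,j}=0$), apply the cut formula to each entry, partition each cut-set according to the side on which the remaining endpoint of $e$ falls, cancel the common subcollection $\tmop{Cut}(\Gamma;\{h(e),t(e),j\},\{|\Gamma_0|\})$, and finally obtain the bound from the injection $C \mapsto (\Gamma_1 \setminus C) \cup \{e\}$ into spanning trees, which exhibits each numerator monomial as a distinct denominator monomial. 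This makes the corollary self-contained given Corollary \ref{Minverse}, which the citation-based proof does not do, and your tree bijection is exactly the content behind the paper's phrase ``every term of the numerator also appears in the denominator.''

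One assertion in your partition step is false, though harmlessly so. You claim that if a cut $C \in \tmop{Cut}(\Gamma;\{h(e),j\},\{|\Gamma_0|\})$ keeps $t(e)$ on the side of $h(e)$, then $e \nin C$. Counterexample: let $\Gamma$ have vertices $\{1,2,3,4\}$ with $4 = |\Gamma_0|$ the ground vertex and edges $a=\{1,2\}$, $b=\{2,3\}$, $c=\{1,3\}$, $d=\{1,4\}$; take $e=c$ with $h(c)=3$, $t(c)=1$, and $j=2$. The cut $C=\{c,d\}$ is the complement of the spanning $2$-forest $\{a,b\}$, whose trees are $\{1,2,3\}$ and $\{4\}$, so $C \in \tmop{Cut}(\Gamma;\{3,2\},\{4\})$ and $t(c)=1$ stays on the side of $h(c)=3$ --- yet $c \in C$. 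In general a cut must also sever every cycle lying within one side, so it can contain edges whose endpoints remain in the same tree. Fortunately your argument never uses this claim: the cancellation only needs that both ``stay with $j$'' subcollections coincide, as sets of cuts carrying identical monomials, with $\tmop{Cut}(\Gamma;\{h(e),t(e),j\},\{|\Gamma_0|\})$, which is true; and the bound only uses the correct converse implication, namely that a cut separating $h(e)$ from $t(e)$ must contain $e$. Deleting the words ``$e\nin C$ and'' from that sentence fixes the write-up without any other change.
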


\begin{proof}
  See {\cite[Appendix B]{Li:2011mi}}.
\end{proof}

\begin{rmk}
  If we view graphs as discrete spaces, the incidence matrix can be viewed as
  de Rham differential. Then $(d_{\Gamma}  (t)^{- 1})^{e j}$ can be viewed as
  the Green's function of de Rham differential on a graph. This might explain
  the importance of $(d_{\Gamma}  (t)^{- 1})^{e j}$.
\end{rmk}

Finally, we introduce the concept of Laman graphs, which will be useful to
describe anomalies of Feynman graph integrals.

\begin{dfn}
  \label{Laman graph1}Given a connected directed graph $\Gamma$ without
  self-loops, we call $\Gamma$ a Laman graph for $\C^d \times
  \R^{d'}$, if the following conditions hold:
  \begin{enumeratenumeric}
    \item For any subgraph $\Gamma'$ such that $| \Gamma'_0 | \geqslant 2$, we
    have the following inequality:
    \begin{equation}
      (d + d') | \Gamma'_0 | \geqslant (d + d' - 1) | \Gamma_1' | + d + d' +
      1. \label{Laman condition1}
    \end{equation}
    \item The following equality holds:
    \begin{equation}
      (d + d') | \Gamma_0 | = (d + d' - 1) | \Gamma_1 | + d + d' + 1.
      \label{Laman condition2}
    \end{equation}
  \end{enumeratenumeric}
  We use the notation $\tmop{Laman} (\Gamma)$ to denote all the Laman
  subgraphs of $\Gamma$.
\end{dfn}

\begin{rmk}
  When $d + d' = 2$, the concept of Laman graph originate from Laman's
  characterization of generic rigidity of graphs embeded in two dimensional
  real linear space(see {\cite{Laman1970OnGA}}). Our definition appears in
  {\cite{budzik2023feynman}}.
\end{rmk}

\section{Compactification of Schwinger spaces}\label{Schwinger spaces}

In this appendix, we review the construction of a compactification of
Schwinger space in {\cite{wang2024feynman}}. Some constructions in this appendix is similar to the compactification of configuration spaces in \cite{axelrod1994chern,kontsevich1994feynman}, and can be viewed as an example of the constructions in \cite{Ammann2019ACO}.

\begin{dfn}
  Given a directed graph $\Gamma$, and $L > 0$, the {\tmstrong{Schwinger
  space}} is defined by $(0, L]^{| \Gamma_1 |}$. The orientation is given by
  the following formula:
  \[ \int_{(0, L]^{| \Gamma_1 |}} \prod_{e \in \Gamma_1} d t_e = L^{| \Gamma_1
     |} . \]
\end{dfn}

Assume $\Gamma$ is a connected directed graph, $L > 0$ is a positive real
number. Let $S \subseteq \Gamma_1$ be a subset of $\Gamma_1$, we define the
following submanifold with corners of $[0, L]^{| \Gamma_1 |}$:
\[ \Delta_S = \left\{ (t_1, t_2, \ldots, t_{| \Gamma_1 |}) \in [0, L]^{|
   \Gamma_1 |} | \nobracket t_e = 0 \quad \tmop{if} e \in S \right\} . \]
The compactification of Schwinger space is obtained by iterated real blow ups
of $[0, L]^{| \Gamma_1 |}$ along $\Delta_S$ for all $S \subseteq \Gamma_1$ in
a certain order (see {\cite{epub47792,Ammann2019ACO}}). To avoid getting into
technical details of real blow ups of manifolds with corners, we will use
another simpler definition. Instead, we present a typical example of real blow
up, which will be helpful to understand our construction:

\begin{eg}
  Let $S = \{ 1, 2, \ldots, k \} \subseteq \Gamma_1$, the real blow up of $[0,
  + \infty)^{| \Gamma_1 |}$ along $\Delta_S$ is the following manifold(with
  corners):
  \[ [[0, + \infty)^{| \Gamma_1 |} : \Delta_S] \assign \left\{ (\rho, \xi_1,
     \ldots, \xi_k, t_{k + 1}, \ldots t_{| \Gamma_1 |}) \in [0, + \infty)^{|
     \Gamma_1 | + 1} \left| \sum_{i = 1}^k \xi_i^2 = 1 \right. \right\} . \]
  We have a natural map from $[[0, + \infty)^{| \Gamma_1 |} : \Delta_S]$ to
  $[0, + \infty)^{| \Gamma_1 |}$:
  \[ (\rho, \xi_1, \ldots, \xi_k, t_{k + 1}, \ldots t_{| \Gamma_1 |})
     \rightarrow (t_1 = \rho \xi_1, \ldots, t_k = \rho \xi_k, t_{k + 1},
     \ldots, t_{| \Gamma_1 |}) . \]
  We also have a natural inclusion map from $(0, + \infty)^{| \Gamma_1 |}$ to
  $[[0, + \infty)^{| \Gamma_1 |} : \Delta_S]$:
  \begin{eqnarray*}
    &  & (t_1, \ldots, t_{| \Gamma_1 |})\\
    & \rightarrow & \left( \rho = \sqrt{\sum_{i = 1}^k t_i^2}, \xi_1 =
    \frac{t_1}{\sqrt{\sum_{i = 1}^k t_i^2}}, \ldots, \xi_k =
    \frac{t_k}{\sqrt{\sum_{i = 1}^k t_i^2}}, t_{k + 1}, \ldots, t_{| \Gamma_1
    |} \right) .
  \end{eqnarray*}
  For us, the most important property is that $\frac{t_i}{\sqrt{\sum_{i = 1}^k
  t_i^2}}$ can be extended to a smooth function $\xi_i$ on $[[0, + \infty)^{|
  \Gamma_1 |} : \Delta_S]$.
\end{eg}

These inclusion maps determine a map into the product over all $S \subset \Gamma_1$ which we denote
\[ i : (0, + \infty)^{| \Gamma_1 |} \rightarrow \prod_{S \subseteq \Gamma_1}
   [[0, + \infty)^{| \Gamma_1 |} : \Delta_S] \]
\begin{dfn}\label{compactified Schwinger space}
  We call the closure of the image of $(0, L]^{| \Gamma_1 |}$ under $i$ the
  {\tmstrong{compactified Schwinger space}} of $\Gamma$. We denote it by
  $\widetilde{[0, L]^{| \Gamma_1 |}}$.
\end{dfn}

\begin{rmk}
  Sometimes, by abusing of concept, we will also call the closure of the image
  of $(0, + \infty)^{| \Gamma_1 |}$ under $i$ the compactified Schwinger
  space, although it is not compact. We will use $\widetilde{[0, + \infty)^{|
  \Gamma_1 |}}$ to denote it.
\end{rmk}

\begin{prop}
  $\widetilde{[0, L]^{| \Gamma_1 |}}$ is a compact manifold with corners.
\end{prop}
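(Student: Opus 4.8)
The plan is to realize $\widetilde{[0,L]^{|\Gamma_1|}}$ as an iterated real blow-up of the compact manifold with corners $[0,L]^{|\Gamma_1|}$ and then invoke the general theory of such blow-ups from \cite{epub47792,Ammann2019ACO}. First I would record the combinatorial structure of the blow-up centers. Each $\Delta_S = \{t_e = 0 : e \in S\}$ is a boundary face of the cube and hence a p-submanifold with corners, and the family $\{\Delta_S\}_{S \subseteq \Gamma_1}$ is closed under (clean) intersection because $\Delta_S \cap \Delta_{S'} = \Delta_{S \cup S'}$. This nestedness is exactly the hypothesis under which iterated real blow-ups are independent of the order of the centers and stay within the category of manifolds with corners.

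Next I would verify that each single factor $[[0,+\infty)^{|\Gamma_1|} : \Delta_S]$ is a manifold with corners. Writing $S=\{e_1,\dots,e_k\}$, the blow-up is cut out by $\sum_{i=1}^{k}\xi_{e_i}^2 = 1$ with all coordinates non-negative, so it is diffeomorphic to $[0,+\infty)_\rho \times \sigma^{k-1} \times [0,+\infty)^{|\Gamma_1|-k}$, where $\sigma^{k-1} = \{\xi \in [0,+\infty)^k : \sum_i \xi_i^2 = 1\}$ is a spherical simplex; this is a manifold with corners and the blow-down map is smooth and proper. The $S=\emptyset$ factor equals $[0,+\infty)^{|\Gamma_1|}$ itself with blow-down the identity, which shows that the restriction of $i$ to the open stratum is an embedding and recovers the original coordinates. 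Being a finite product of manifolds with corners, the target of $i$ is itself a manifold with corners.

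The heart of the argument, and the step I expect to be the main obstacle, is to show that the closure of $i\bigl((0,L]^{|\Gamma_1|}\bigr)$ inside $\prod_S [[0,+\infty)^{|\Gamma_1|}:\Delta_S]$ is a smooth submanifold with corners rather than merely a closed set. Here I would use the standard reduction: because $\{\Delta_S\}$ is closed under clean intersection, blowing up its members in any order refining the inclusion poset yields one and the same space, and that space is realized precisely as this closure of the simultaneously-embedded open stratum. Concretely, near a boundary point where some collection of the $t_e$ degenerate, the radial coordinates $\rho_S = \sqrt{\sum_{e\in S} t_e^2}$ and the normalized coordinates $\xi_e$ attached to the relevant $\Delta_S$ assemble into a consistent system of local charts valued in products of half-lines and spherical simplices. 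The smoothness of the transition functions between the charts coming from different $S$ is what must be checked, and it follows from the nestedness relation $\Delta_S \cap \Delta_{S'} = \Delta_{S\cup S'}$, which ensures that the ratios of radial coordinates extend smoothly across the exceptional loci.

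Finally, for compactness I would observe that on $i\bigl((0,L]^{|\Gamma_1|}\bigr)$ every coordinate is uniformly bounded: each $t_e \le L$, each $\xi_e \in [0,1]$, and each radial coordinate satisfies $\rho_S \le \sqrt{|\Gamma_1|}\,L$. Hence the image, and therefore its closure, lies in a compact box inside the product of blow-ups; being a closed subset of a compact set, $\widetilde{[0,L]^{|\Gamma_1|}}$ is compact, completing the proof.
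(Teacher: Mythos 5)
Your proposal is correct and takes essentially the same approach as the paper: the paper's entire proof of this proposition is the citation ``See \cite{Ammann2019ACO}'', and what you have written out --- nestedness of the coordinate faces $\Delta_S$ under intersection, identification of the closure of the open stratum in the product of single blow-ups with the iterated real blow-up of $[0,L]^{|\Gamma_1|}$, and compactness via boundedness of all coordinates plus closedness --- is precisely the content of that cited theory. No gaps; you have simply unpacked the citation.
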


\begin{proof}
  See {\cite{Ammann2019ACO}}.
\end{proof}

To obtain a more concrete description of $\widetilde{[0, L]^{| \Gamma_1 |}}$,
we introduce a useful concept called corners of compactified Schwinger spaces.

For $\Gamma_1 = S_0 \supseteq S_1 \supsetneq S_2 \supsetneq \cdots \supsetneq
S_m \supsetneq S_{m + 1} = \varnothing$, we define a submanifold
\[ C_{S_1, S_2, \ldots, S_m} \subseteq [0, + \infty)^m \times (0, + \infty)^{|
   S_0 | - | S_1 |} \times \prod_{i = 1}^m (0, + \infty)^{| S_i | - | S_{i +
   1} |}, \]
which is given by a set of equations:

if we use $\{ \rho_i \} (1 \leqslant i \leqslant m)$ for the coordinates of
the first product component $[0, + \infty)^m$, use $\{ t_e \} (e \in S_0
\backslash S_1)$ for the coordinates of the product component $(0, +
\infty)^{| S_0 | - | S_1 |}$, use $\{ \xi_e^i \} \left( 1 \leqslant i
\leqslant m, e \in S_i {\backslash S_{i + 1}}  \right)$ for the coordinates of
the product component $(0, + \infty)^{| S_{i - 1} | - | S_i |}$, then:
\[ \left\{\begin{array}{l}
     \underset{e \in S_m}{\sum} (\xi_e^m)^2 = 1\\
     \underset{e \in S_{m - 1} {\backslash S_m} }{\sum} (\xi_e^{m - 1})^2 +
     \underset{{e \in S_m} }{\sum} (\rho_m \xi_e^m)^2 = 1\\
     \underset{e \in S_{m - 2} {\backslash S_{m - 1}} }{\sum} (\xi_e^{m -
     2})^2 + \underset{e \in S_{m - 1} {\backslash S_m} }{\sum} (\rho_{m - 1}
     \xi_e^{m - 1})^2 + \underset{{e \in S_m} }{\sum} (\rho_{m - 1} \rho_m
     \xi_e^m)^2 = 1\\
     \ldots\\
     \underset{i = 1}{\overset{k}{\sum}} \left( \underset{e \in S_{m - k + i}
     {\backslash S_{m - k + i + 1}} }{\sum} \left( \left( \underset{j =
     1}{\overset{i - 1}{\prod}} \rho_{m - k + j} \right) \xi_e^{m - k + i}
     \right)^2 \right) = 1\\
     \ldots\\
     \underset{i = 1}{\overset{m}{\sum}} \left( \underset{e \in S_{m - k + i}
     {\backslash S_{m - k + i + 1}} }{\sum} \left( \left( \underset{j =
     1}{\overset{i - 1}{\prod}} \rho_{m - k + j} \right) \xi_e^{m - k + i}
     \right)^2 \right) = 1
   \end{array}\right. \]
We have a natural map from $C_{S_1, S_2, \ldots, S_m}$ to $[0, + \infty)^{|
\Gamma_1 |}$: Let $\{ \widetilde{t_e} \}_{e \in \Gamma_1}$ be the coordinates
of $[0, + \infty)^{| \Gamma_1 |}$, then under the natural map, we have:
\[ \left\{\begin{array}{l}
     \widetilde{t_e} = t_e \text{\quad for } e \in S_0 \backslash S_1,\\
     \widetilde{t_e} = \rho_1 \xi_e^1 \text{\quad for } e \in S_1 {\backslash
     S_2} ,\\
     \widetilde{t_e} = \rho_1 \rho_2 \xi_e^2 \text{\quad for } e \in S_2
     {\backslash S_3} ,\\
     \ldots\\
     \widetilde{t_e} = \underset{k = 1}{\overset{m}{\prod}} \rho_k \xi_e^m
     \text{\quad for } e \in S_m {\backslash S_{m + 1}}  .
   \end{array}\right. \]
\begin{prop}
  \label{covered by corners}The following statements are true:
  \begin{enumeratenumeric}
    \item There is a natural inclusion map from $C_{S_1, S_2, \ldots, S_m}$ to
    $\widetilde{[0, + \infty)^{| \Gamma_1 |}}$.
    
    \item $\widetilde{[0, + \infty)^{| \Gamma_1 |}}$ is covered by the union
    of all $C_{S_1, S_2, \ldots, S_m}$:
    \[ \widetilde{[0, + \infty)^{| \Gamma_1 |}} = \bigcup_{\Gamma_1 = S_0
       \supseteq S_1 \supsetneq \cdots \supsetneq S_m \supsetneq S_{m + 1} =
       \varnothing} C_{S_1, S_2, \ldots, S_m} . \]
    \item There is a natural action of $\R_+ = (0, + \infty)$ on
    $\widetilde{[0, + \infty)^{| \Gamma_1 |}}$, which has the following form
    on $(0, + \infty)^{| \Gamma_1 |} \subseteq \widetilde{[0, + \infty)^{|
    \Gamma_1 |}}$:
    \[ \lambda \cdot (t_e)_{e \in \Gamma_1} = (\lambda t_e)_{e \in \Gamma_1}
       \quad \lambda \in (0, + \infty) . \]
  \end{enumeratenumeric}
\end{prop}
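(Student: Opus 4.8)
The plan is to treat all three assertions as concrete statements about the iterated real blow-up and to use the corner charts $C_{S_1,\ldots,S_m}$ as explicit coordinate patches on which everything can be checked directly. Statement (1) is an extension/smoothness computation, statement (2) is a stratification argument organized by the \emph{rates of vanishing} of the edge parameters, and statement (3) follows from the scale-invariance of the normalized blow-up coordinates. I would work throughout with $\widetilde{[0,+\infty)^{|\Gamma_1|}}$ and recover the compact $\widetilde{[0,L]^{|\Gamma_1|}}$ by restriction.

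For statement (1), recall that the blow-up $[[0,+\infty)^{|\Gamma_1|}:\Delta_S]$ is characterized by the requirement that each ratio $\xi_{e,S}:=\tilde t_e/\sqrt{\sum_{e'\in S}\tilde t_{e'}^2}$ ($e\in S$) extend to a smooth function. So it suffices to compose the product map $C_{S_1,\ldots,S_m}\to[0,+\infty)^{|\Gamma_1|}$, given by the stated formulas $\tilde t_e=\big(\prod_{k\le j}\rho_k\big)\xi_e^j$ for $e\in S_j\setminus S_{j+1}$, and to verify that each $\xi_{e,S}$ pulls back to a smooth function on the chart. In coordinates, $\sqrt{\sum_{e'\in S}\tilde t_{e'}^2}$ is, to leading order in the $\rho$'s, governed by the coarsest filtration block $S_{j_0}\setminus S_{j_0+1}$ meeting $S$; factoring out the matching product $\prod_{k\le j_0}\rho_k$ then cancels in the ratio, and the nested normal-crossing structure of the divisors $\{\rho_i=0\}$ in the chart guarantees that the resulting expression extends smoothly across all boundary strata. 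Since the open locus $(0,+\infty)^{|\Gamma_1|}$ (where all $\rho_k>0$) is dense in $C_{S_1,\ldots,S_m}$ and already maps into $i\big((0,+\infty)^{|\Gamma_1|}\big)$, continuity forces the whole chart into the closure, giving the inclusion.

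For statement (2), I would show that every point of the closure lies in some chart. Represent such a point as a limit $p=\lim_k i(t^{(k)})$ with $t^{(k)}\in(0,+\infty)^{|\Gamma_1|}$. After passing to a subsequence, organize the edges by their rate of vanishing: let $S_1$ be the edges with $t^{(k)}_e\to 0$, and recursively let $S_{j+1}\subsetneq S_j$ consist of those edges of $S_j$ that vanish strictly faster than the slowest edges of $S_j$, so that within each block $S_j\setminus S_{j+1}$ the mutual ratios $t^{(k)}_e/t^{(k)}_{e'}$ converge to finite nonzero limits. The within-block limits produce the $\xi^j_e$ coordinates, the successive block-to-block ratios produce the radial parameters $\rho_j\to 0$, and these limiting data satisfy precisely the sum-of-squares equations cutting out $C_{S_1,\ldots,S_m}$; hence $p\in C_{S_1,\ldots,S_m}$. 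Combined with statement (1), this gives the asserted equality of the union of charts with the closure.

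For statement (3), each normalized coordinate $\xi_{e,S}$ is homogeneous of degree $0$ in the $\tilde t_e$, hence invariant under the diagonal scaling $t_e\mapsto\lambda t_e$; thus this scaling on $(0,+\infty)^{|\Gamma_1|}$ acts trivially on every blow-up factor except through the overall magnitude, and so extends continuously to $\widetilde{[0,+\infty)^{|\Gamma_1|}}$, acting in a corner chart by rescaling the outermost radial parameter while fixing all $\xi$'s. The main obstacle is statement (2): one must check that the subsequential rate-of-vanishing filtration is genuinely well-defined and, more delicately, that the limiting normalized data match the corner-chart equations \emph{exactly}, so that the closure introduces no spurious points beyond the charts. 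This bookkeeping---identifying the closure with the union of corner charts---is where essentially all of the work lies, and it is exactly the content for which \cite{wang2024feynman, Ammann2019ACO} are invoked.
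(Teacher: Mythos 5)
There is nothing in the paper to compare your argument against: the paper's proof of this proposition consists of the single sentence ``Left as an exercise for the reader.'' Judged on its own terms, your proposal is correct, and it is surely the intended solution. The computations you single out are the decisive ones, and they do close. For (1), one has $\sum_{e\in S}\tilde{t}_e^{\,2}=(\rho_1\cdots\rho_{j_0})^2\,Q$ with $j_0$ the index of the slowest filtration block meeting $S$; the cofactor $Q$ is smooth and pointwise positive on $C_{S_1,\ldots,S_m}$ --- positivity is not a ``normal-crossing'' principle but the elementary fact that $Q\geq \sum_{e\in S\cap (S_{j_0}\setminus S_{j_0+1})}(\xi^{j_0}_e)^2$ and the $\xi^{j_0}_e$ are \emph{strictly positive} chart coordinates --- so $\sqrt{Q}$ and hence every normalized coordinate is smooth, and density of $\{\rho_1\cdots\rho_m>0\}$ places the chart in the closure. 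For (2), your subsequential filtration is well defined after passing to a subsequence along which the finitely many pairwise ratios converge in $[0,+\infty]$; moreover the ``matching'' step you flag as the main obstacle is in fact no extra work, because the defining equations of the chart say exactly that $\sum_{e\in S_j}\tilde{t}_e^{\,2}=(\rho_1\cdots\rho_j)^2$, so the limits of the normalized coordinates in an \emph{arbitrary} blow-up factor $[[0,+\infty)^{|\Gamma_1|}:\Delta_S]$ are evaluated by the same factorization as in (1), and the limit point is literally the image of the chart point you construct. Your argument for (3) is also correct: $(\rho_S,\xi,t)\mapsto(\lambda\rho_S,\xi,\lambda t)$ is smooth on each factor and preserves $i\big((0,+\infty)^{|\Gamma_1|}\big)$, hence its closure. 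Two details you should add in a write-up: injectivity of the chart map (needed to justify the word ``inclusion''), which follows by recovering $\xi_e^m$ from the factor $S=S_m$, then inductively $\xi_e^{j}$ and $\rho_{j+1}$ from the factor $S=S_j$, and finally $\rho_1$ and the coordinates $t_e$, $e\notin S_1$, from the factor $S=S_1$; and a remark that your limit construction in (2) always lands in the deepest stratum $\{\rho_1=\cdots=\rho_m=0\}$ of the chart it produces, which is consistent because charts for different filtrations overlap (a chart point with some $\rho_i>0$ reappears as a deepest-stratum point of the chart of the subfiltration of those $S_i$ with $\rho_i=0$).
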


\begin{proof}
  Left as an excercise for the reader.
\end{proof}

\begin{dfn}
  For $\Gamma_1 = S_0 \supseteq S_1 \supsetneq S_2 \supsetneq \cdots
  \supsetneq S_m \supsetneq S_{m + 1} = \varnothing$, let ${{\Gamma'}^i} $ be
  the subgraph generated by $S_i (1 \leqslant i \leqslant m)$. We call
  $C_{S_1, S_2, \ldots, S_m} \subseteq \widetilde{[0, + \infty)^{| \Gamma_1
  |}}$ the {\tmstrong{corner of compactified Schwinger space}} corresponds to
  ${\Gamma'}^1 {, \Gamma'}^2, \ldots,  {\Gamma'}^m \rightarrow 0$.
\end{dfn}

We will use $\partial_{m} C_{S_1, S_2, \ldots, S_m}$ to denote the following set:
\[ \left\{ p \in C_{S_1, S_2, \ldots, S_m} | \nobracket \rho_i (p) = 0 \text{
   for } 1 \leqslant i \leqslant m \right\} . \]
\begin{rmk}
  Note $\widetilde{[0, + \infty)^{| \Gamma_1 |}}$ is a manifold with corners.
  In particular, it is a stratified space. Its codimension m strata is
  \[ \bigcup_{\Gamma_1 = S_0 \supseteq S_1 \supsetneq \cdots \supsetneq S_m
     \supsetneq S_{m + 1} = \varnothing} \partial_{m} C_{S_1, S_2, \ldots, S_m} .
  \]
     We will denote the closure of $\partial C_{\Gamma_{1}}$ in $\widetilde{[0, + \infty)^{| \Gamma_1 |}}$ by $\widetilde{(0, + \infty)^{| \Gamma_1 |}} /\R^+$.
\end{rmk}

The matrice $(M_{\Gamma} (t)^{- 1})^{i j}$ and $(d_{\Gamma}  (t)^{- 1})^{e j}$
defined in Appendix \ref{graph theory} can be extended to smooth functions on
$\widetilde{[0, + \infty)^{| \Gamma_1 |}}$:

\begin{lem}
  \label{extended functions}Given a connected graph $\Gamma$ without
  self-loops, The following functions can be extended to smooth functions on
  $\widetilde{[0, + \infty)^{| \Gamma_1 |}}$:
  \begin{enumeratenumeric}
    \item $(M_{\Gamma} (t)^{- 1})^{i j}$ for $1 \leqslant i, j \leqslant |
    \Gamma_0 | - 1$.
    
    \item $(d_{\Gamma}  (t)^{- 1})^{e j}$ for $e \in \Gamma_1, \text{ } 1
    \leqslant j \leqslant | \Gamma_0 | - 1$.
  \end{enumeratenumeric}
\end{lem}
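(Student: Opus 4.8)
The plan is to reduce the statement to a local computation in each corner chart $C_{S_1, \ldots, S_m}$ of the compactified Schwinger space and to use the explicit rational formulas for the matrix entries. By Proposition \ref{covered by corners}, the charts associated to nested chains $\Gamma_1 = S_0 \supseteq S_1 \supsetneq \cdots \supsetneq S_m \supsetneq S_{m+1} = \varnothing$ cover $\widetilde{[0, + \infty)^{| \Gamma_1 |}}$, so it suffices to prove smoothness in each such chart. On the open stratum $(0, + \infty)^{| \Gamma_1 |}$ both functions are manifestly smooth, since the Kirchhoff polynomial $U(t) := \sum_{T \in \tmop{Tree} (\Gamma)} \prod_{e \nin T} t_e$ is a sum of positive monomials and hence strictly positive; the only real issue is the extension across the boundary locus $\{\rho_1 = \cdots = \rho_m = 0\}$ along which subsets of edges degenerate.

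First I would record, via Kirchhoff's theorem and Corollaries \ref{Minverse} and \ref{boundness}, that $(M_{\Gamma} (t)^{- 1})^{i j}$ and $(d_{\Gamma} (t)^{- 1})^{e j}$ are both ratios of homogeneous polynomials sharing the common denominator $U(t)$: the numerators are, respectively, the cut-sum $\sum_{C} \prod_{e \in C} t_e$ and the signed cut-sum of Corollary \ref{boundness}. In the chart $C_{S_1, \ldots, S_m}$ one substitutes $t_e = \big(\prod_{k \leqslant i} \rho_k\big)\, \xi_e^i$ for $e \in S_i \setminus S_{i+1}$, so that $\rho_k$ divides $t_e$ precisely when $e \in S_k$. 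Consequently, for a spanning tree $T$ the $\rho_k$-valuation of $\prod_{e \nin T} t_e$ equals $|S_k \setminus T| = |S_k| - |S_k \cap T|$, whose minimum over all $T$ is $d_k := |S_k| - r(S_k)$, where $r(S_k)$ is the rank of $S_k$ in the graphic matroid (attained by a tree containing a maximal forest of $S_k$). I would then factor $U(t) = \big(\prod_k \rho_k^{d_k}\big)\, \widetilde{U}(\rho, \xi)$.

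Two facts then remain: (a) each numerator is divisible by $\prod_k \rho_k^{d_k}$, and (b) the reduced denominator $\widetilde{U}$ is nowhere vanishing on the chart. For (a), in the case of $(d_{\Gamma} (t)^{- 1})^{e j}$ Corollary \ref{boundness} already tells us that every monomial of the numerator also occurs in $U$, hence has $\rho_k$-valuation at least $d_k$; for $(M_{\Gamma} (t)^{- 1})^{i j}$ the numerator runs over cuts $C$, and since the edges of $S_k \setminus C$ lie in the two trees produced by removing $C$ and therefore form a forest, $|S_k \setminus C| \leqslant r(S_k)$, giving $|C \cap S_k| \geqslant d_k$ as required. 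In both cases the substituted numerator is divisible by $\prod_k \rho_k^{d_k}$, leaving a polynomial in $(\rho, \xi)$.

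The heart of the argument, and the step I expect to be the main obstacle, is (b): positivity of $\widetilde{U}$ on the closed corner, including $\rho = 0$. Setting $\rho_m = 0$, the leading behaviour of $\widetilde{U}$ is governed by the contraction--deletion factorization of $U$, degenerating to $U_{{\Gamma'}^m}(\xi^m) \cdot U_{\Gamma / {\Gamma'}^m}$, where ${\Gamma'}^m$ is the subgraph generated by $S_m$. Two inputs make this work: a matroid exchange argument showing that a single spanning tree can simultaneously attain $|S_k \cap T| = r(S_k)$ for all $k$ in the nested family (built greedily from the innermost $S_m$ outward), so the corresponding monomial genuinely survives at $\rho = 0$; and the fact that all coefficients of $U$ equal $+1$ while the $\xi_e^i \geqslant 0$ satisfy the sphere constraints $\sum_e (\cdots \xi_e^i)^2 = 1$ at every level, so on each stratum some coordinate is positive. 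Iterating the factorization and inducting on $m$ (equivalently on $| \Gamma_1 |$), each factor is the Kirchhoff polynomial of a smaller graph, positive on its sphere by the inductive hypothesis, whence $\widetilde{U} > 0$ throughout; the ratios are then smooth. This combinatorial positivity is precisely the content of the holomorphic case in \cite{wang2024feynman}, and the present statement, being purely graph-theoretic, is insensitive to the values of $d$ and $d'$.
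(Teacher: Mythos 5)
Your proof is correct, but note that the paper does not actually prove this lemma: its entire proof is the single line ``See \cite{wang2024feynman}'', deferring to the first author's earlier treatment of the holomorphic case. Your argument is therefore a genuine, self-contained replacement, and its three pillars all check out against the toolkit the paper does provide. The common-denominator presentation over $U(t)=\sum_{T}\prod_{e\notin T}t_e$ is Kirchhoff's theorem together with Corollaries \ref{Minverse} and \ref{boundness}; the divisibility of the numerators by $\prod_k\rho_k^{d_k}$ is exactly your matroid inequality (for the cut-sums, $S_k\setminus C$ sits inside the spanning $2$-forest obtained by deleting $C$, hence is independent, so $|C\cap S_k|\geqslant |S_k|-r(S_k)=d_k$; for $(d_{\Gamma}(t)^{-1})^{ej}$ every numerator monomial is, by Corollary \ref{boundness}, a monomial of $U$); and simultaneous attainment of all the minimal valuations by a single spanning tree $T^*$ is available precisely because the family $S_1\supsetneq\cdots\supsetneq S_m$ is nested, via the greedy extension you describe. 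One simplification is worth making: in the chart $C_{S_1,\ldots,S_m}$ the coordinates $t_e$ and $\xi^i_e$ are strictly positive by the paper's definition of the corner, so the reduced monomial of $T^*$, namely $\prod_{e\notin T^*}\xi_e^{i(e)}$, is already strictly positive at every point of the chart; this makes your appeal to the sphere constraints and to the contraction--deletion factorization with induction on $m$ redundant --- the greedy nested-tree argument alone gives $\widetilde U>0$ on each chart, and smoothness then follows chart by chart via Proposition \ref{covered by corners}, the local extensions gluing because they agree on the dense interior $(0,+\infty)^{|\Gamma_1|}$.
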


\begin{proof}
  See {\cite{wang2024feynman}}.
\end{proof}

The following lemma will be used in the prove of finiteness of Feynman graph
integrals for $\R^{d'}$.

\begin{lem}
  The map\label{square map}
  \[ (t_e)_{e \in \Gamma_1} \in (0, + \infty)^{| \Gamma_1 |} \rightarrow
     (t_e^2)_{e \in \Gamma_1} \in (0, + \infty)^{| \Gamma_1 |} \]
  can be extended to a smooth map $t_{\tmop{square}}$ from $\widetilde{[0, +
  \infty)^{| \Gamma_1 |}}$ to $\widetilde{[0, + \infty)^{| \Gamma_1 |}}$.
\end{lem}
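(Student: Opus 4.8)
The plan is to reduce the statement to the behaviour of the squaring map on a \emph{single} real blow-up, where it can be verified by an explicit coordinate computation, and then to propagate this to the full iterated compactification using the defining embedding $i \colon (0,+\infty)^{|\Gamma_1|} \hookrightarrow \prod_{S \subseteq \Gamma_1} [[0,+\infty)^{|\Gamma_1|} : \Delta_S]$ whose closure is $\widetilde{[0,+\infty)^{|\Gamma_1|}}$.

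First I would analyze the squaring map on one factor $[[0,+\infty)^{|\Gamma_1|} : \Delta_S]$, using the coordinates $(\rho, (\xi_i)_{i \in S}, (t_e)_{e \notin S})$ subject to $\sum_{i \in S} \xi_i^2 = 1$ and $\xi_i \geq 0$. On the image of the open orthant the honest squaring map $t_e \mapsto t_e^2$ becomes $t_e \mapsto t_e^2$ for $e \notin S$ together with $\rho \mapsto \rho^2 \sqrt{\sum_{i \in S} \xi_i^4}$ and $\xi_i \mapsto \xi_i^2 / \sqrt{\sum_{j \in S} \xi_j^4}$; one checks directly that the latter preserves the sphere constraint. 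The only thing to verify is that these formulas are smooth, which reduces to the denominator $\sqrt{\sum_{j \in S} \xi_j^4}$ being bounded away from zero. This is elementary: by Cauchy--Schwarz, $\sum_{j \in S} \xi_j^4 \geq \big(\sum_{j \in S} \xi_j^2\big)^2 / |S| = 1/|S| > 0$ on the sphere, so the denominator is a smooth nowhere-vanishing function of $\xi$. Hence $\rho \mapsto \rho^2 \sqrt{\sum_j \xi_j^4}$ and $\xi_i \mapsto \xi_i^2 / \sqrt{\sum_j \xi_j^4}$ extend to smooth functions on all of $[[0,+\infty)^{|\Gamma_1|} : \Delta_S]$, yielding a smooth self-map $\sigma_S$.

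Next I would assemble these factorwise maps into a single smooth map $\Sigma = \prod_S \sigma_S$ of the ambient product of blow-ups, and verify the compatibility $\Sigma \circ i = i \circ s$, where $s$ denotes honest squaring on $(0,+\infty)^{|\Gamma_1|}$. Writing $\rho_S = \sqrt{\sum_{e \in S} t_e^2}$ and $\xi_e^S = t_e / \rho_S$, this is a direct substitution: $\rho_S^2 \sqrt{\sum_e (\xi_e^S)^4} = \sqrt{\sum_e t_e^4}$ and $(\xi_e^S)^2 / \sqrt{\sum_{e'} (\xi_{e'}^S)^4} = t_e^2 / \sqrt{\sum_{e'} t_{e'}^4}$, which are precisely the coordinates of $i(s(t))$ in the factor indexed by $S$. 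Since $\widetilde{[0,+\infty)^{|\Gamma_1|}}$ is by definition the closure of $i\big((0,+\infty)^{|\Gamma_1|}\big)$ and $\Sigma$ is continuous, $\Sigma$ carries this closure into the closure of $i\big(s((0,+\infty)^{|\Gamma_1|})\big) \subseteq i\big((0,+\infty)^{|\Gamma_1|}\big)$, hence into $\widetilde{[0,+\infty)^{|\Gamma_1|}}$ itself. As $\Sigma$ is smooth on the ambient product and preserves the embedded submanifold-with-corners $\widetilde{[0,+\infty)^{|\Gamma_1|}}$, its restriction is the desired smooth extension $t_{\tmop{square}}$, agreeing with $s$ on the open orthant by construction. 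The same formulas restrict to a smooth map $\widetilde{[0,\sqrt{L}]^{|\Gamma_1|}} \to \widetilde{[0,L]^{|\Gamma_1|}}$, which is the bounded version used in the body of the paper.

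I expect the main obstacle to be bookkeeping rather than conceptual. All the analytic content sits in the positivity bound $\sum_{j \in S} \xi_j^4 \geq 1/|S|$, which is exactly what makes the rescaled spherical coordinates smooth after squaring. The more delicate point to state carefully is the functoriality in the category of manifolds with corners: that restricting the smooth ambient map $\Sigma$ to the \emph{preserved} embedded submanifold $\widetilde{[0,+\infty)^{|\Gamma_1|}}$ genuinely produces a smooth map. This relies on $\widetilde{[0,+\infty)^{|\Gamma_1|}}$ being an embedded submanifold-with-corners of the product, which is part of the construction recalled from \cite{wang2024feynman}; given that, the restriction of a smooth map preserving an embedded submanifold is smooth, and the proof concludes.
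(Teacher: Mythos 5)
Your proof is correct, but it takes a genuinely different route in how the extension is organized. The paper works intrinsically on the compactification: by Proposition \ref{covered by corners} it is covered by the corner charts $C_{S_1, S_2, \ldots, S_m}$, and the paper writes the extension directly in those coordinates, using exactly your formulas ($\widetilde{t_e} = t_e^2$, $\widetilde{\xi_{e'}} = \xi_{e'}^2 / \sqrt{\sum_{e''} \xi_{e''}^4}$, $\widetilde{\rho} = \rho^2 \sqrt{\sum_{e''} \xi_{e''}^4}$); however, it carries this out only for a single corner $C_{\Gamma_1'}$, i.e.\ the case $m = 1$, and explicitly leaves the general nested chains $S_1 \supsetneq \cdots \supsetneq S_m$ to the reader. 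You instead define a factorwise smooth self-map $\Sigma = \prod_S \sigma_S$ of the ambient product of single blow-ups, verify the intertwining $\Sigma \circ i = i \circ s$, and restrict to the closure. This treats all corners at once with no case analysis over nested chains, and your Cauchy--Schwarz bound $\sum_j \xi_j^4 \geq 1/|S|$ makes explicit the nonvanishing of the denominator that the paper declares to be clear. The cost is exactly the two points you flag: the closure argument (correct, since $s$ maps the open orthant onto itself, so continuity of $\Sigma$ forces it to preserve the closure of the embedded orthant), and the fact that a smooth ambient map preserving the embedded submanifold-with-corners $\widetilde{[0, +\infty)^{|\Gamma_1|}}$ restricts to a smooth map of it; this last point rests on the embedded-with-corners structure of the closure from \cite{wang2024feynman} and \cite{Ammann2019ACO}, which is the same input the paper's chart-by-chart argument uses implicitly. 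Both proofs share the identical computational core; yours is the more complete write-up, the paper's the more economical one.
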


\begin{proof}
  From Proposition \ref{covered by corners}, we only need to extend the map to
  a map from $C_{S_1, S_2, \ldots, S_m}$ to $C_{S_1, S_2, \ldots, S_m}$. Let's
  prove that this map can be extended to $C_{\Gamma'_1}$, where $\Gamma'_1$ is
  a subgraph of $\Gamma$. The general situation is left to the reader.
  
  This can be shown by using the coordinates $\{ \rho, t_e, \xi_{e'} \}_{e \in
  \Gamma_1 \backslash \Gamma'_1, e' \in \Gamma_1'}$ we have introduced. More
  precisely, $\{ \rho, t_e, \xi_{e'} \}_{e \in \Gamma_1 \backslash \Gamma'_1,
  e' \in \Gamma_1'}$ will be mapped to $\{ \tilde{\rho}, \widetilde{t_e},
  \widetilde{\xi_{e'}} \}_{e \in \Gamma_1 \backslash \Gamma'_1, e' \in
  \Gamma_1'}$, such that
  \[ \left\{\begin{array}{l}
       \widetilde{t_e} = t_e^2  \text{\quad} e \in \Gamma_1 \backslash
       \Gamma'_1,\\
       \widetilde{\xi_{e'}} = \frac{\xi_{e'}^2}{\sqrt{\underset{e' \in
       \Gamma'_1}{\sum} \xi_{e'}^4}} \quad e \in \Gamma_1',\\
       \tilde{\rho} = \rho^2 \sqrt{\underset{e' \in \Gamma'_1}{\sum}
       \xi_{e'}^4} .
     \end{array}\right. \]
  It is clear that this map is smooth on $C_{\Gamma'_1}$.
\end{proof}

Finally, we give a description of the boundary of compactified Schwinger
space:

\begin{prop}
  \label{boundary description}Given a connected graph $\Gamma$ without
  self-loops, the boundary $\partial \widetilde{[0, L]^{| \Gamma_1 |}}$ of
  $\widetilde{[0, L]^{| \Gamma_1 |}}$ has the following decomposition:
  \[ \partial \widetilde{[0, L]^{| \Gamma_1 |}} = \left( - \partial_0
     \widetilde{[0, L]^{| \Gamma_1 |}} \right) \cup \partial_L \widetilde{[0,
     L]^{| \Gamma_1 |}}, \]
  where $\partial_0 \widetilde{[0, L]^{| \Gamma_1 |}}$($\partial_L
  \widetilde{[0, L]^{| \Gamma_1 |}}$) describe the boundary components near
  the origin(away from the origin). More precisely, we have
  \[ \left\{\begin{array}{l}
       \partial_0 \widetilde{[0, L]^{| \Gamma_1 |}} = \bigcup_{\Gamma'
       \subseteq \Gamma} (- 1)^{\sigma (\Gamma', \Gamma / \Gamma')}
       \widetilde{(0, + \infty)^{| \Gamma_1' |}} /\R^+ \times
       \widetilde{[0, + L]^{| \Gamma_1 \backslash \Gamma_1' |}}\\
       \partial_L \widetilde{[0, L]^{| \Gamma_1 |}} = \bigcup_{e \in \Gamma_1}
       (- 1)^{| e |} \{ L \} \times  \widetilde{[0, L]^{| (\Gamma / e)_1 |}}
     \end{array}\right. . \]
\end{prop}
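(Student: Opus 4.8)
The plan is to identify all codimension-one boundary faces of the compact manifold with corners $\widetilde{[0,L]^{|\Gamma_1|}}$ and to match them against the two families on the right-hand side. The key observation is that $\widetilde{[0,L]^{|\Gamma_1|}}$ is produced from the cube $[0,L]^{|\Gamma_1|}$ by iterated real blow-ups whose centers $\Delta_S=\{t_e=0 : e\in S\}$ all lie over the locus $\{t_e=0\}$. I would first split the boundary according to its origin: the faces created or modified by these blow-ups, clustered near $t=0$, will form $\partial_0$, while the faces inherited from the walls $\{t_e=L\}$ of the cube, which lie away from the origin, will form $\partial_L$. Since near $t=0$ the construction of $\widetilde{[0,L]^{|\Gamma_1|}}$ coincides locally with that of $\widetilde{[0,+\infty)^{|\Gamma_1|}}$, the corner charts $C_{S_1,\dots,S_m}$ of Proposition \ref{covered by corners} are available to describe the near-origin boundary.

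For the walls $\{t_e=L\}$: fixing one edge $e$ with $t_e=L$, the blow-up centers $\Delta_S$ with $e\in S$ never meet this wall (they require $t_e=0$), whereas those with $S\subseteq\Gamma_1\backslash\{e\}$ restrict to precisely the centers producing the Schwinger compactification of the remaining edge coordinates. Because the compactification depends only on the collection of edges permitted to degenerate simultaneously, this wall is canonically $\{L\}\times\widetilde{[0,L]^{|\Gamma_1\backslash\{e\}|}}$, which I identify with $\{L\}\times\widetilde{[0,L]^{|(\Gamma/e)_1|}}$ using $(\Gamma/e)_1=\Gamma_1\backslash\{e\}$; the labelling by $\Gamma/e$ is the one compatible with the edge-contraction bookkeeping of Lemma \ref{weight properties}, and the orientation factor $(-1)^{|e|}$ records moving $\d t_e$ to the front of $\prod_{e'\in\Gamma_1}\d t_{e'}$.

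For the near-origin part I would work in a single corner chart $C_{S_1}$ with $S_1=\Gamma_1'$ the edge set of a subgraph $\Gamma'\subseteq\Gamma$, whose codimension-one face is the locus $\{\rho_1=0\}$. In the coordinates $\rho=\sqrt{\sum_{e\in\Gamma_1'}t_e^2}$, $\xi_e=t_e/\rho$ for $e\in\Gamma_1'$, and $t_{e'}$ for $e'\in\Gamma_1\backslash\Gamma_1'$, the locus $\rho=0$ splits as a product: the $\xi$-directions, together with their residual blow-ups recording finer simultaneous degenerations inside $\Gamma_1'$, assemble into the scale quotient $\widetilde{[0,+\infty]^{|\Gamma_1'|}}/\R^+$ via the $\R^+$-action of part (3) of Proposition \ref{covered by corners}, while the untouched coordinates $t_{e'}$ assemble into $\widetilde{[0,L]^{|\Gamma_1\backslash\Gamma_1'|}}$. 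This produces exactly the components of $\partial_0$, the sign $(-1)^{\sigma(\Gamma',\Gamma/\Gamma')}$ arising from the shuffle reordering $\prod_e\d t_e$ into the $\Gamma_1'$-edges followed by their complement, and the overall sign in $-\partial_0$ accounting for the discrepancy between the outward-normal (Stokes) orientation on the exceptional face and this product orientation.

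The main obstacle will be the precise verification of the product structure for $\partial_0$. Identifying the faces topologically is routine given Proposition \ref{covered by corners}, but confirming that the residual iterated-blow-up data carried by the $\xi$-sphere is genuinely the compactified Schwinger space $\widetilde{[0,+\infty]^{|\Gamma_1'|}}$ modulo $\R^+$ --- rather than merely some diffeomorphic corner structure --- requires matching the nested defining equations of the charts $C_{S_1,\dots,S_m}$ for $\Gamma$ against products of such charts for $\Gamma'$ and for the complementary edges. Propagating the orientation signs $\sigma(\Gamma',\Gamma/\Gamma')$ and $|e|$ consistently through these coordinate identifications, so that they are exactly the signs needed for the boundary formula \eqref{QME} of Proposition \ref{differential transfer}, is the most delicate piece of bookkeeping.
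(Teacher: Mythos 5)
Your proposal is correct and is essentially the paper's own argument: the paper's proof consists of a single sentence invoking Proposition \ref{covered by corners} and the explicit description of the corner charts $C_{S_1,\ldots,S_m}$, which is exactly the route you take. In fact your write-up — splitting off the untouched walls $\{t_e=L\}$ as $\{L\}\times\widetilde{[0,L]^{|(\Gamma/e)_1|}}$ and identifying each exceptional face $\{\rho_1=0\}$ in the chart $C_{\Gamma_1'}$ with $\widetilde{[0,+\infty]^{|\Gamma_1'|}}/\R^+\times\widetilde{[0,L]^{|\Gamma_1\backslash\Gamma_1'|}}$ via the $\R^+$-action — supplies more detail than the paper records.
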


\begin{proof}
  This follows from Proposition \ref{covered by corners} and the explicit
  description of $C_{S_1, S_2, \ldots, S_m}$.
\end{proof}

\printbibliography

\end{document}